\crefname{equation}{Eq.}{Eqs.}
\numberwithin{equation}{section}
\newtheorem{theorem}{Theorem}[section]
\newtheorem{definition}[theorem]{Definition}
\newtheorem{proposition}[theorem]{Proposition}
\newtheorem{corollary}[theorem]{Corollary}
\newtheorem{lemma}[theorem]{Lemma}
\newtheorem{remark}[theorem]{Remark}
\newcommand{\id}{{\rm Id}}
\newcommand{\one}{{\mathbf 1}}
\newcommand{\tr}{{\operatorname{tr}}}
\renewcommand{\d}{{\rm d}}
\newcommand{\pp}{{\mathbb P}}
\newcommand{\rr}{{\mathbb R}}
\newcommand{\nn}{{\mathbb N}}
\newcommand{\cc}{{\mathbb C}}
\renewcommand{\P}{{\mathrm P}}
\newcommand{\inv}{_{\mathrm{inv}}}
\newcommand{\as}{\operatorname{-}\mathrm{a.s.}}
 \def\cB{{\mathcal B}} 
  \def\cF{{\mathcal F}}
 \def\cH{{\mathcal H}} 
\def\cJ{{\mathcal J}}  \def\cL{{\mathcal L}}
  \def\cO{{\mathcal O}}
  \def\cR{{\mathcal R}}
\def\cS{{\mathcal S}}
\def\R{{\mathbb R}}
\def\N{{\mathbb N}}
\def\C{{\mathbb C}}
\def\Z{{\mathbb Z}}
\def\E{{\mathbb E}}
\def\P{{\mathbb P}}
\def\at{{\rm at}}
\def\tr{{\rm Tr}}
\def\e{{\rm e}}
\def\d{{\rm d}}
\def\at{{\rm at}}
\def\inv{{\rm inv}}
\def\Ran{{\rm Ran}}
\def\ss{{\sigma\sigma'}}
\newcommand{\bra}{\langle} 
\newcommand{\ket}{\rangle}
\newcommand{\eps}{\varepsilon}
\newcommand{\ds}{\displaystyle}
\def\one{{\mathchoice {\rm 1\mskip-4mu l} {\rm 1\mskip-4mu l} {\rm 1\mskip-4.5mu l} {\rm 1\mskip-5mu l}}}
\newcommand{\specialcell}[1]{\ifmeasuring@#1\else\omit$\displaystyle#1$\ignorespaces\fi}
\begin{document}

\title{Quantum Trajectory of the One Atom Maser}

\author{T. Benoist}
\address{Institut de Math\'ematiques de Toulouse, \'Equipe de Statistique et Probabilit\'es,
Universit\'e Paul Sabatier, 31062 Toulouse Cedex 9, France}
\email{tristan.benoist@math.univ-toulouse.fr}
\author{L. Bruneau}
\address{D\'epartement de Math\'ematiques, CNRS - UMR 8088,
CY Cergy Paris Universit\'e, 95000 Cergy-Pontoise, France.}
 \email{laurent.bruneau@u-cergy.fr}
\author{C. Pellegrini}
\address{Institut de Math\'ematiques de Toulouse, \'Equipe de Statistique et Probabilit\'es,
Universit\'e Paul Sabatier, 31062 Toulouse Cedex 9, France}
\email{clement.pellegrini@math.univ-toulouse.fr}

\subjclass[2000]{}
\keywords{}

\begin{abstract} The evolution of a quantum system undergoing repeated indirect measurements naturally leads to a Markov chain on the set of states which is called a quantum trajectory. In this paper we consider a specific model of such a quantum trajectory associated to the one-atom maser model. It describes the evolution of one mode of the quantized electromagnetic field in a cavity interacting with two-level atoms. When the system is non-resonant we prove that this Markov chain admits a unique invariant probability measure. We moreover prove convergence in the Wasserstein metric towards this invariant measure. These results rely on a purification theorem: almost surely the state of the system approaches the set of pure states. Compared to similar results in the literature, the system considered here is infinite dimensional. While existence of an invariant measure is a consequence of the compactness of the set of states in finite dimension, in infinite dimension existence of an invariant measure is not free. Furthermore usual purification criterions in finite dimension have no straightforward equivalent in infinite dimension.\end{abstract}

\date{\today}

\maketitle

\noindent {\footnotesize\it Keywords: Quantum trajectories, Maser model, Markov chains, Invariant measures, Indirect quantum measurement}

\noindent {\footnotesize\it Mathematical Subject Classification: 60J05, 81P15, 81P16}

\tableofcontents

\section{Introduction}

Quantum trajectories describe the evolution of a quantum system undergoing repeated and indirect measurements. An indirect measurement means that a direct measurement is performed on an auxiliary system, called probe or ancilla, after it has interacted for some time with the system we are interested in, see e.g. \cite{HR06}. The physical paradigm of quantum trajectories is probably the one atom maser model \cite{FJM86a, CDG92, MWM85, WVHW99, WBKM00}, and some of its subsequent elaborations  \cite{DRBH87, GK07, RH05, RBH01}. Here, the system of interest is the quantized electromagnetic field in a cavity through which a beam of atoms, the probes, is shot in such a way that, at least with very high probability \cite{HBR97}, no more than {\sl one atom} is present in the cavity at any time. Such systems play a fundamental role in the experimental and theoretical investigations of basic matter-radiation processes. They are also of practical importance in quantum optics and quantum state engineering \cite{MWM85,WVHW99,WBKM00,RH05,VAS93}. So-called ``One-Atom Masers'', where the beam is tuned in such a way that at each given moment a single atom is inside a microwave cavity and the interaction time $\tau$ is the same for each atom, have been experimentally realized in laboratories \cite{MWM85,WVHW99,HR06}. These systems have also been considered as an effective ressource of non equilibrium free energy or as a heat, work or information reservoir \cite{SSB17}. They are sometimes called Quantum collision models. For a recent review from a more theoretical and experimental physics point of view we refer the reader e.g. to \cite{CLGP22} and references therein.

The one-step ``interaction + indirect measurement'' time evolution of the quantum system is then conditioned to the measurement outcome. More precisely if the possible outcomes of the measurement are labeled by $\omega$ in some alphabet $\cR$, and the system is in the state described by the density operator $\rho$, then $\omega$ is observed with probability $\tr(V_\omega\rho V_\omega^*)$ and the state of the system becomes 
\[
\frac{V_\omega \rho V_\omega^*}{\tr(V_\omega\rho V_\omega^*)},
\]
where the operators $V_\omega$ satisfy the stochasticity condition $\ds \sum_{\omega\in\cR} V_\omega^*V_\omega=\one$.

Iterating this procedure defines a Markov chain $(\rho_t)_{t\in\N}$ on the set of states which is called a quantum trajectory: after $t$ measurements with outcomes $\omega_1,\ldots,\omega_t$  the state of the system becomes
\[
\rho_t = \frac{V_{\omega_t}\cdots V_{\omega_1}\rho_0 V_{\omega_1}^*\cdots V_{\omega_t}^*}{\tr\left(V_{\omega_t}\cdots V_{\omega_1}\rho_0 V_{\omega_1}^*\cdots V_{\omega_t}^*\right)},
\]
and this happens with probability $\tr\left(V_{\omega_t}\cdots V_{\omega_1}\rho_0 V_{\omega_1}^*\cdots V_{\omega_t}^*\right)$. 

On the mathematical side, and to the best of our knowledge, the study of the large time behavior of quantum trajectories goes back to the pioneering works \cite{KM03,KM04,MK06} (for continuous time models one can also consult \cite{Bar09} for a general introduction and \cite{Bar03} for further results). When the quantum system under consideration is finite dimensional, so that the $\rho_t$ are simply non-negative matrices with unit trace, a key result in the theory is a purification theorem obtained by K\"ummerer and Maassen in \cite{MK06}. Provided a suitable purification condition {\bf (Pur)} holds (see Remark~\ref{rem:purificationcondition}, p.\pageref{rem:purificationcondition}) the state $\rho_t$ almost surely approaches the set of pure states (the rank $1$ orthogonal projections which are also the extreme points of the convex set of states). This result has been proven using a slightly different argument in \cite{BFPP19} and then used as a key step to prove the uniqueness of an invariant probability measure for quantum trajectories together with convergence towards this invariant measure. The convergence is proven to be geometric in the Wasserstein metric. A similar result for continuous time models has been proven in \cite{BFPP21}.

In this paper we consider a specific model of a quantum trajectory in which the system under consideration is infinite dimensional. This model describes the one-atom maser experiment mentioned above, see Section~\ref{ssec:masermodel} for a precise description. It was considered in \cite{BP09,Bru14} but only at the level of the averaged state, and without mention to quantum trajectories. The averaged state at time $t$ is given by 
\[
\E(\rho_t) = \sum_{\omega_1,\ldots,\omega_t}  \frac{V_{\omega_t}\cdots V_{\omega_1}\rho_0 V_{\omega_1}^*\cdots V_{\omega_t}^*}{\tr\left(V_{\omega_t}\cdots V_{\omega_1}\rho_0 V_{\omega_1}^*\cdots V_{\omega_t}^*\right)} \times \tr\left(V_{\omega_t}\cdots V_{\omega_1}\rho_0 V_{\omega_1}^*\cdots V_{\omega_t}^*\right)
\]
and can be written as $\ds \E(\rho_t) = \cL^t(\rho_0)$ where $\cL$ denotes the, completely positive and trace preserving, map defined on the set of states by
\[
\cL(\rho) = \sum_{\omega\in\cR } V_\omega \, \rho\, V_\omega^*.
\]
The analysis of the map $\cL$, and in particular its large $t$ limit, is the main subject of \cite{BP09,Bru14}, see Theorem~\ref{thm:Lmixing} at the end of Section~\ref{ssec:masermodel}. In the present article we study similar questions taking into account the measurements outcomes, namely for the associated quantum trajectories.

One of the key notion for the investigation of this model is that of Rabi resonance, see Equation~\eqref{def:rabi}. Generically the system has no such resonance. In this case we prove a purification theorem similar to the one in \cite{MK06} and existence and uniqueness of an invariant probability measure for the quantum trajectory as in the finite dimensional case of \cite{BFPP19}. We moreover prove convergence in the Wasserstein metric towards this invariant measure. The convergence relies in part on the convergence result of the average state $\E(\rho_t)=\cL^t(\rho_0)$. It was proven in \cite{Bru14} that the latter can be arbitrarily slow (this is related to the absence of a spectral gap for the map $\cL$). As a consequence, here too one cannot expect any rate of convergence contrary to what happens in finite dimensional systems. Note also that in finite dimension the set of density matrices is a compact set. One of the crucial consequences is that there always exists at least one invariant measure (the sequence of laws of the Markov chain is indeed tight). In infinite dimension this property is not ensured and we bypass this problem by linking quantum trajectories and the invariant state of $\mathcal L$ with a classical birth and death process. Concerning the purification property we show by hand that quantum trajectories converge towards some subset of the pure states, namely the set of pure Fock states. One of the main ingredients is a convergence theorem for martingales in Banach spaces.
 
When the system possesses Rabi resonances the first immediate consequence is the non uniqueness of invariant probability measures. However this excludes neither purification nor convergence in law towards an invariant measure. We prove partial results in this direction. In particular, when the system possesses at least two pairs of consecutive Rabi resonances (called degenerate case in \cite{BP09}) we prove that purification does not hold.

Finally let us stress that we are only concerned here with the state of the quantum system, and not with the process describing the sequence of measurement outcomes themselves (see however Theorem \ref{thm:mixing-dynsys}). These repeated measurements exhibit a rich mathematical structure. Numerous considerations have been dedicated to limit theorems for the measurement outcomes.
Notably, Law of Large Numbers \cite{KM03}, Central Limit Theorems \cite{AGPS15, vHG15, CP15}
and Large Deviation Principles \cite{vHG15, CP15} have been derived. We refer the reader to e.g. \cite{BJPP18,BCJP21} and references therein for recent developments in this direction. Note that these results deal with finite dimension. We mention \cite{GvHCG} where a large deviation result is proven for the counting process associated to a similar one-atom maser model in infinite dimension.

Our paper is organized as follows. In Section~\ref{sec:model} we describe the one-atom maser model we consider and state our main results, Theorem~\ref{thm:invariantmeasure} concerning the invariant measure and the purification Theorem~\ref{thm:purification}. The proof of Theorem~\ref{thm:purification} is given in Section~\ref{sec:purification}, while Theorem~\ref{thm:invariantmeasure} is proven in Section~\ref{sec:invariantmeasure}. Finally Section~\ref{sec:resonant} is devoted to the resonant situation.

\medskip

\noindent {\bf Acknowledgements.} T. B. and C. P. were supported by the ANR project ``ESQuisses'', grant number ANR-20-CE47-0014-01, the ANR project ``Quantum Trajectories'' grant number ANR-20-CE40-0024-01 and the program ``Investissements d'Avenir'' ANR-11-LABX-0040 of the French National Research Agency. C. P. is also supported by the ANR projects Q-COAST ANR-19-CE48-0003.

%%%%%%%%%%%%%%%%%%%%%%%%%%%%%%%%%%%%%%%%%%%%%%%%%%%%%%%%%%%%%%%%%%%%
%%%%%%%%%%%%%%%%%%%%%%%%%%%%%%%%%%%%%%%%%%%%%%%%%%%%%%%%%%%%%%%%%%%%
%%%%%%%%%%%%%%%%%%%%%%%%%%%%%%%%%%%%%%%%%%%%%%%%%%%%%%%%%%%%%%%%%%%%
%%%%%%%%%%%%%%%%%%%%%%%%%%%%%%%%%%%%%%%%%%%%%%%%%%%%%%%%%%%%%%%%%%%%

\section{Description of the model and main results}\label{sec:model}

%%%%%%%%%%%%%%%%%%%%%%%%%%%%%%%%%%%%%%%%%%%%%%%%%%%%%%%%%%%%%%%%%%%%
%%%%%%%%%%%%%%%%%%%%%%%%%%%%%%%%%%%%%%%%%%%%%%%%%%%%%%%%%%%%%%%%%%%%

\subsection{The one-atom maser model}\label{ssec:masermodel}

The Hilbert space for the cavity field is $$\cH_\cS\equiv\Gamma_+(\C),$$ the Bosonic Fock space over $\C$. Its states are element of the set $\cJ_1$ of density operators on $\cH_\cS$, i.e. the set of positive semidefinite trace class operators on $\cH_\cS$ with unit trace. We equip the space $\cJ(\cH_S)$ of trace class operators on $\cH_S$ with the usual trace norm denoted $\|\cdot\|_1$. Then, $(\cJ(\cH_S),\|\cdot\|_1)$ is a Banach space. The system Hamiltonian is
\[
H_\cS\equiv\eps_0 N = \eps_ 0a^* a,
\]
where $a^*$, $a$ are the creation/annihilation operators on $\cH_\cS$ satisfying the commutation relation $[a,a^*]=\one$, $N$ is the number operator and $\eps_0>0$ is the frequency of the considered mode. We denote by $(|n\ket)_{n\in\N}$ the Fock basis. The vectors $|n\ket$ are the eigenstates of the number operator $N$, $N|n\ket=n|n\ket$, hence of $H_\cS$, and the creation/annihilation operators act on them as
\[
a|n\ket = \sqrt{n}|n-1\ket, \quad a^*|n\ket=\sqrt{n+1}|n+1\ket.
\]
For an introduction to Fock spaces we refer the reader e.g. to \cite{AJP06,BR97,DG13}.

The Hilbert space for a single atom is $$\cH_\at \equiv \C^2.$$ Each of the atoms states are $2\times 2$ density matrices. The Hamiltonian of a single atom is
\[
H_\at\equiv \eps b^*b,
\]
where $b^*$, $b$ denote the creation/annihilation operators on $\cH_\at$, i.e. $b=\left(\begin{matrix} 0 & 1 \\ 0 & 0 \end{matrix}\right)$, and $\eps>0$ is the Bohr frequency between the atomic levels (we have set the lowest energy to $0$ since it amounts to an irrelevant energy shift). Stationary states of the atom can be parametrized by the inverse temperature $\beta\in\R$ and are given by the density matrices $\rho_\at:={\e^{-\beta H_\at}}/{\tr \ \e^{-\beta H_\at}}$. We will further denote by $|-\ket$, resp. $|+\ket$, the ground, resp. excited state of an atom, namely $H_\at|+\rangle=\eps|+\rangle$ and $H_\at|-\rangle=0$.

In the dipole and rotating wave approximation the interaction is given by 
\begin{equation}\label{eq:interactionham}
H_{\rm int}\equiv \lambda(a^*\otimes b+a\otimes b^*),
\end{equation}
where $\lambda$ is (half of) the Rabi frequency. This leads to the Jaynes-Cummings Hamiltonian, see e.g. \cite{Ba99,CDG92,Du05},
\[
H\equiv H_\cS\otimes \one_\at+\one_\cS\otimes H_\at+ H_{\rm int}.
\]

The cavity will interact in a successive way with independent atoms, each for a duration time $\tau>0$. If the cavity is in a state $\rho$ before an interaction, the state of the joint cavity+atom system after interaction is therefore
\[
\e^{-i\tau H} \, \rho\otimes \rho_\at \, \e^{i\tau H}.
\]
To obtain the state of the cavity field one has to take the partial trace over the atomic part, and it is thus given by 
\begin{equation}
\label{def:rdm}
\cL(\rho):= \tr_{\cH_\at}\left[ \e^{-i\tau H} (\rho\otimes \rho_\at)\, \e^{i\tau H} \right].
\end{equation}
The map $\cL$ is the completely positive trace preserving (CPTP) map acting on $\cJ_1(\cH_\cS)$ describing the evolution of the cavity field during one interaction (discarding any measurement outcome). Such CPTP maps admit Kraus decompostions \cite{Kr83}. An explicit computation, relying on the fact that the Jaynes-Cummings hamiltonian commutes with the total number operator $N\otimes \one_\at + \one_\cS\otimes b^*b$, leads to the following Kraus representation of the map $\cL$:
\begin{equation*}\label{eq:krausrepresentation}
\cL(\rho)= \sum_{\sigma,\sigma' = \pm} V_\ss \rho V_\ss^*,
\end{equation*}
where the operators $V_\ss$ are given by 
\begin{equation}\label{eq:krausoperators}
\begin{array}{ll}
\ds V_{--}=p_\at(-)^{1/2}\,\e^{-i\tau\eps N}\,C(N),& 
\ds V_{-+}=p_\at(-)^{1/2}\,\e^{-i\tau\eps N}\,S(N+1)\,a, \\[16pt]
\ds V_{+-}=p_\at(+)^{1/2}\,\e^{-i\tau\eps N}\,S(N)\,a^*, & 
\ds V_{++}=p_\at(+)^{1/2}\,\e^{-i\tau\eps N}\,C(N+1)^*,
\end{array}
\end{equation}
with
\begin{equation*}\label{def:CSfunctions}
C(N):=\cos(\pi\sqrt{\xi N+\eta})+i\eta^{1/2}\,\frac{\sin(\pi\sqrt{\xi N+\eta})}{\sqrt{\xi N+\eta}},\quad
S(N):=\xi^{1/2}\,\frac{\sin(\pi\sqrt{\xi N+\eta})}{\sqrt{\xi N+\eta}}.
\end{equation*}
Here $\eta$ and $\xi$ are respectively the dimensionless detuning parameter and coupling constant
$$
\eta\equiv \left(\frac{\Delta\tau}{2\pi}\right)^2,\qquad \xi\equiv\left(\frac{\lambda\tau}{\pi}\right)^2,
$$
with $\Delta=\eps-\eps_0$ the detuning parameter, and 
\begin{equation*}
p_\at(-)= \frac{1}{1+\e^{-\beta\eps}}  , \quad p_\at(+) =\frac{\e^{-\beta\eps}}{1+\e^{-\beta\eps}},
\end{equation*}
are the probabilities that the atom enters in the cavity in its ground, resp. excited, state. 

One can easily verify that 
\begin{equation}\label{eq:tracepreserving}
\sum_{\sigma,\sigma' = \pm} V_\ss^*V_\ss=\one,
\end{equation} 
which traduces the trace preserving property of the map $\cL$.

As we mentioned in the introduction, quantum trajectories describe the evolution of quantum systems undergoing indirect measurements. We will come back to it in the next section but, even without further going to quantum trajectories, such an indirect measurement sheds a light on the above Kraus decomposition. If we perform a two-time measurement on the atom along the basis $\{ |-\ket,|+\ket \}$, one before and one after its interaction with the cavity, this yields two results. The first measurement gives a result $\sigma\in\{-,+\}$ with probability $p_\at(\sigma)$, the atom being then in the state $|\sigma\ket\bra \sigma|$.  If the cavity is in the state $\rho$ before the interaction, the second measurement gives then a result $\sigma'\in\{-,+\}$ with (conditional) probability
\[
P(\sigma' | \sigma) = \tr \left( \one_\cS \otimes |\sigma'\ket\bra \sigma '| \, \e^{-i\tau H} \rho \otimes |\sigma\ket\bra \sigma| \e^{i\tau H}   \right). 
\]
Altogether, the pair of results $(\sigma,\sigma')$ occurs with probability
\[
P(\sigma,\sigma') = p_\at(\sigma)\times \tr \left( \one_\cS \otimes |\sigma'\ket\bra \sigma '| \, \e^{-i\tau H} \rho \otimes |\sigma\ket\bra \sigma| \e^{i\tau H}   \right). 
\]
Introducing the operators $V_\ss:= p_\at(\sigma)^{1/2} (\one_\cS \otimes \bra \sigma '|) \, \e^{-i\tau H} \, (\one_\cS \otimes |\sigma\ket)$ we may rewrite it as
\[
P(\sigma,\sigma') = \tr\left( V_\ss \rho V_\ss^*\right),
\]
and the state of the cavity is then, after the two measurements,
\begin{equation}\label{eq:krauselement}
\rho(\sigma,\sigma') = \frac{V_\ss \rho V_\ss^*}{\tr\left( V_\ss \rho V_\ss^*\right)}.
\end{equation}
Note that the condition~(\ref{eq:tracepreserving}) ensures this defines a probability distribution on $\{-,+\}^2$. It is then easy to see that $V_\ss$ is indeed given by~(\ref{eq:krausoperators}), in other words $V_\ss$ is associated to the atomic transition from state $\sigma$ to state $\sigma'$, and $\cL(\rho)$ is then nothing but the expectation value 
\[
\cL(\rho) = \sum_{\sigma,\sigma' = \pm} \rho(\sigma,\sigma')P(\sigma,\sigma').
\]

As mentioned in the introduction, an essential feature of the dynamics generated by the Jaynes-Cummings Hamiltonian are Rabi oscillations. In the presence of $n$ photons, the probability for the atom to make a transition from its ground state to its excited state is a periodic function of time. The circular frequency of this oscillation is given by $\nu_n:=\sqrt{4\lambda^2 n+\Delta^2}$. These oscillations are at the origin of what was called a \emph{Rabi resonance} in \cite{BP09}. Such a resonance occurs when the interaction time $\tau$ is an integer multiple of the period of a Rabi oscillation, i.e. $\tau=k\frac{2\pi}{\nu_n}$ for some $k\in\N$. A positive integer $n$ is a Rabi resonance iff
\begin{equation}
\xi n+\eta=k^2,
\label{def:rabi}
\end{equation}
for some positive integer $k$. Depending on the arithmetic properties of $\eta$ and $\xi$ one easily proves (\cite{BP09}, Lemma 3.2) that the system has either no, one or infinitely many Rabi resonances. Accordingly, the system is called non-resonant, simply resonant or fully resonant. The latter two correspond to situations where one, respectively infinitely many, splitting of the cavity occurs.

\begin{definition} We say that the non-resonant condition holds if there is no Rabi resonance.
\end{definition}
In this paper we shall mostly be concerned with the non-resonant situation. Except in Section~\ref{sec:resonant} we shall assume that the non-resonant condition holds without further mentioning. 

The large time analysis of this maser model, without measurements, has been studied in \cite{BP09,Bru14}. The main result is the following
\begin{theorem}[\cite{Bru14}, Theorem 3.4]\label{thm:Lmixing} Suppose the non-resonant condition holds and $\beta>0$. Then $\rho_\inv:= \frac{\e^{-\beta\eps N}}{\tr\left(\e^{-\beta\eps N} \right)}$ is the unique invariant state of $\cL$ and for any initial state $\rho$ one has
\begin{equation}\label{eq:strongmixing}
\lim_{t\to\infty} \left\|\cL^t(\rho)-\rho_\inv\right\|_1=0.
\end{equation}
\end{theorem}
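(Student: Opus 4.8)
The plan is to exploit a block structure of $\cL$ in the Fock basis. Each Kraus operator in~\eqref{eq:krausoperators} multiplies $|n\ket$ by a scalar function of $N$ and shifts the index by a fixed amount --- $0$ for $V_{--},V_{++}$, $-1$ for $V_{-+}$ (via $a$), $+1$ for $V_{+-}$ (via $a^*$) --- so $\cL$ does not mix the diagonals: for each $k\in\Z$ the sequence $r^{(k)}(\rho)=(\bra n|\rho|n+k\ket)_n$ evolves autonomously, $r^{(k)}(\cL\rho)=\cL_k\,r^{(k)}(\rho)$, for an explicit linear map $\cL_k$. Positivity of $\cL^t\rho$ gives $|\bra n|\cL^t\rho|n+k\ket|\le(p^{(t)}_n p^{(t)}_{n+k})^{1/2}$ with $p^{(t)}_n:=\bra n|\cL^t\rho|n\ket$, so every $r^{(k)}(\cL^t\rho)$ lies in $\ell^1(\N)$ with norm $\le1$ and has tails controlled by those of the population vector $p^{(t)}$.

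I would first analyze the diagonal block $\cL_0$: it is the forward operator $p\mapsto pP$ of a classical birth--death chain on $\N$, with rate $p_\at(+)(n+1)|S(n+1)|^2$ from $n$ to $n+1$, rate $p_\at(-)\,n\,|S(n)|^2$ from $n$ to $n-1$, and holding probability $p_\at(-)|C(n)|^2+p_\at(+)|C(n+1)|^2$ --- these sum to one, being the diagonal of~\eqref{eq:tracepreserving}. By~\eqref{def:rabi} the non-resonant assumption makes $S(n)\neq0$ for all $n\ge1$, so $P$ is irreducible; detailed balance is solved by $\pi_n\propto(p_\at(+)/p_\at(-))^n=\e^{-\beta\eps n}$, which is summable exactly because $\beta>0$ (as $\eps>0$) and whose normalization is the diagonal of $\rho_\inv$; and $P$ is aperiodic since the holding probability at $n=0$, equal to $1-p_\at(+)|S(1)|^2$, is positive. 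The classical convergence theorem for Markov chains then gives $\|p^{(t)}-\pi\|\TV\to0$ from any initial state, hence pointwise convergence of the diagonal of $\cL^t\rho$ to that of $\rho_\inv$, and tightness of $(p^{(t)})_t$. It is then immediate that $\cL(\rho_\inv)=\rho_\inv$ ($\rho_\inv$ is diagonal, $\pi$ is $P$-stationary); uniqueness of the invariant state will follow at the end from the convergence statement.

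Next I would prove the decoherence statement: $r^{(k)}(\cL^t\rho)\to0$ in $\ell^1(\N)$ for $k\neq0$. Using $\sum_\ss\|V_\ss|n\ket\|^2=1$ together with a Cauchy--Schwarz estimate and the triangle inequality on the three-term recursion defining $\cL_k$, one checks $\cL_k$ is an $\ell^1$-contraction; hence $\|r^{(k)}(\cL^t\rho)\|_1$ decreases to some $\ell\ge0$, and the point is $\ell=0$. Assume $\ell>0$. Tightness of $(p^{(t)})_t$ and the bound $|\bra n|\cL^t\rho|n+k\ket|\le(p^{(t)}_n p^{(t)}_{n+k})^{1/2}$ make $(r^{(k)}(\cL^t\rho))_t$ tight and bounded in $\ell^1$, hence $\ell^1$-precompact, so some subsequential limit $u$ has $\|u\|_1=\ell>0$ and, $\cL_k$ being bounded and $\|r^{(k)}(\cL^t\rho)\|_1\to\ell$, also $\|\cL_k^s u\|_1=\ell$ for every $s\ge0$. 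Since non-resonance keeps the up- and down-shift coefficients of $\cL_k$ from vanishing, a support-propagation argument shows that the equality cases of the above Cauchy--Schwarz and triangle inequalities, forced along the whole orbit of $u$, must hold for all large $n$; they impose $|C(n)|=|C(n+k)|$ and $n|S(n)|^2=(n+k)|S(n+k)|^2$, i.e. $n\mapsto|C(n)|^2$ and $n\mapsto n|S(n)|^2$ would be eventually $k$-periodic. This is impossible: along any residue class modulo $k$ the quantity $\sin^2(\pi\sqrt{\xi n+\eta})$ does not converge, because $\xi>0$ and $\sqrt{\xi n+\eta}$ is equidistributed modulo one. Hence $u=0$, contradicting $\ell>0$, so $\ell=0$.

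Finally, the two previous steps give $\bra m|\cL^t\rho|n\ket\to\bra m|\rho_\inv|n\ket$ for all $m,n$; uniform smallness of the tails (by tightness) upgrades this to weak-operator convergence $\cL^t\rho\to\rho_\inv$. Since $\tr(\cL^t\rho)=1=\tr(\rho_\inv)$ and both operators are non-negative, a standard lemma for trace-class operators --- weak-operator convergence of non-negative trace-class operators together with convergence of the traces implies trace-norm convergence --- yields $\|\cL^t\rho-\rho_\inv\|_1\to0$. Uniqueness is then automatic: any invariant state $\rho'$ equals $\cL^t\rho'$, which converges to $\rho_\inv$, so $\rho'=\rho_\inv$. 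I expect the decoherence step --- excluding a nonzero $\ell^1$-isometric vector of $\cL_k$, $k\neq0$ --- to be the main obstacle; it is the only place where the non-resonant condition enters in an essential, quantitative way, consistently with the announced failure of purification (hence of this statement) in the degenerate resonant case.
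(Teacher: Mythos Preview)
The paper does not give its own proof of this statement: Theorem~\ref{thm:Lmixing} is quoted verbatim from \cite{Bru14} (Theorem~3.4 there) and used as an input throughout, so there is nothing in the present paper to compare your argument against.

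That said, your outline is essentially the strategy of \cite{BP09,Bru14}: exploit that $\cL$ preserves each diagonal $(\langle n|\rho|n+k\rangle)_n$, identify the $k=0$ block with the forward operator of the birth--death chain of Section~\ref{ssec:classicalchain}, and kill the $k\neq0$ blocks by a contraction/decoherence argument; the final upgrade from weak-operator to trace-norm convergence via ``positivity plus convergence of traces'' is exactly \cite[Theorem~2.20]{Si05}, used elsewhere in the paper. The decoherence step you flag as the obstacle is indeed the crux. Two places in your sketch would need tightening before it becomes a proof: (i) to get the Cauchy--Schwarz equality constraints $|C(m)|=|C(m+k)|$, $m|S(m)|^2=(m+k)|S(m+k)|^2$ for \emph{all} large $m$, you must argue that the support of $\cL_k^s u$ eventually reaches every large index --- this uses non-resonance to make the off-diagonal coefficients $B_n,D_n$ nonvanishing, but you also have to check that the phase constraints from the triangle-inequality equality do not force cancellations that would shrink the support; (ii) ``equidistributed'' is more than you need and a bit delicate to justify --- density of the fractional parts of $\sqrt{\xi n+\eta}$ along each residue class (immediate from $a_{n+k}-a_n\to0$ with $a_n\to\infty$) already prevents $\sin^2(\pi\sqrt{\xi n+\eta})$ from converging there, which is all your contradiction requires.
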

Although it is not our main concern in this paper, these results about $\cL$ will be a key step in our proofs. 

\begin{remark}\label{rem:noinvstate} 1) If $\beta\leq 0$ there is no invariant state.

2) One cannot expect any general estimate on the convergence speed in~(\ref{eq:strongmixing}), in the sense that it can be arbitrarily slow, see \cite{Bru14}. 
\end{remark}

\medskip
%%%%%%%%%%%%%%%%%%%%%%%%%%%%%%%%%%%%%%%%%%%%%%%%%%%%%%%%%%%%%%%%%%%%
%%%%%%%%%%%%%%%%%%%%%%%%%%%%%%%%%%%%%%%%%%%%%%%%%%%%%%%%%%%%%%%%%%%%

\subsection{The quantum trajectory}\label{ssec:qtraj}

The quantum trajectory we are interested in describes the evolution of the cavity interacting with a sequence of atoms, each of which is subject to a double measurement as described in Section~\ref{ssec:masermodel}. Each step gives as a result of the measurements a data in $\mathcal R=\{-,+\}^2$, and conditionally to the results of the observation the evolution of the system is updated. If at time $t$ the state of the cavity is described by $\rho_t\in \cJ_1$ and if we have collected a result $(\sigma,\sigma')$ at the $(t+1)$-st step, according to~(\ref{eq:krauselement}) the state of the cavity becomes
\[
\rho_{t+1}=\frac{V_{\sigma\sigma'}\rho_t V_{\sigma\sigma'}^*}{\tr(V_{\sigma\sigma'}\rho_t V_{\sigma\sigma'}^*)},
\]
and this occurs with probability $\tr(V_{\sigma\sigma'}\rho_t V_{\sigma\sigma'})$. As we shall see this describes recursively a Markov chain $(\rho_t)_t$ which is the main subject of our paper.

Let us make more precise the probabilistic framework allowing us to study this Markov chain. We denote by $\Omega=\mathcal R^{\mathbb N^*}$, an element of $\Omega$ shall be denoted by $\omega=(\omega_1,\omega_2,\ldots)$ where $\omega_j=(\sigma_j,\sigma_j')$ describes the result of the $j$-th measurements. On this set we introduce the cylinder algebra $\mathcal O$ generated by the cylinder sets. More precisely let $t\in\mathbb N^*$ and let $(\omega_1,\ldots,\omega_t)\in\mathcal R^t$, we denote $\Lambda_{(\omega_1,\ldots,\omega_t)}=\{z\in\Omega\,\, s.t\,\,z_1=\omega_1,\ldots,z_t=\omega_t\}$. This is an elementary cylinder of size $t$. We then define $\mathcal O_t$ the sigma-algebra generated by all the elementary cylinders of size $t$ that is
$$
\mathcal O_t=\sigma\left(\Lambda_{(\omega_1,\ldots,\omega_t)},(\omega_1,\ldots,\omega_t)\in\mathcal R^t\right).
$$
Thus we have
$$
\mathcal O=\sigma\left(\bigcup_{t\in\mathbb N}\mathcal O_t\right).
$$
This sigma-algebra describes actually the sequence of results of the entire measurement protocol. 

We introduce the random variables
$$
\begin{array}{rcc}V_i:\Omega&\rightarrow&\mathcal B(\mathcal H_S)\\
\omega&\mapsto&V_i(\omega)=V_{\omega_i}
\end{array}
$$
and 
$$
W_t(\omega)=V_t(\omega)\cdots V_1(\omega),
$$
for all $\omega\in\Omega$ and all $t\in\mathbb N^*$. The quantum trajectory is then the Markov chain, defined on the set $\cJ_1$ of density matrices over $\cH_\cS$, by
\begin{equation*}\label{def:markovchain}
\rho_{t+1}(\omega) = \frac{V_{t+1}(\omega)\rho_t(\omega) V_{t+1}(\omega)^*}{\tr(V_{t+1}(\omega)\rho_t(\omega) V_{t+1}(\omega)^*)},
\end{equation*}
or in an equivalent way, and if $\rho$ is the initial state of the cavity,
\begin{equation*}\label{def:markovchain2}
\rho_t(\omega) = \frac{W_t(\omega)\rho W_t(\omega)^*}{\tr(W_t(\omega)\rho W_t(\omega)^*)}.
\end{equation*}
Note that $\rho_t(\omega)$ and $W_t(\omega)$ actually depend only on $\omega_1,\ldots,\omega_t$, hence are $\cO_t$ measurable. Moreover, according to the previous discussion, if $\rho$ is the initial state of the cavity then the probability that one has obtained $(\omega_1,\ldots,\omega_t)$ as the first $t$ outcomes of the measurements is
\begin{equation}\label{eq:probaoutcomes}
\P(\omega_1,\ldots,\omega_t | \rho_0=\rho) = \tr\left( W_t(\omega) \rho W_t^*(\omega)\right).
\end{equation}
It follows, as mentioned in the introduction, that $\mathbb E(\rho_t|\rho_0=\rho)=\mathcal L^t(\rho)$.

If $\mu$ denotes the counting measure on $\cR$, i.e.
$$
\mu=\sum_{y\in\mathcal R}\delta_y,
$$ 
the Markov kernel associated to this Markov chain is thus
\begin{eqnarray}\label{def:markovkernel}
\Pi(\rho,S)& =&\sum_{y\in\mathcal R} \one_S\left(\frac{V_y\rho V_y^*}{\tr(V_y\rho V_y^*)}\right) \tr(V_y\rho V_y^*)\nonumber\\&=&\int_\cR  \one_S\left(\frac{V_y\rho V_y^*}{\tr(V_y\rho V_y^*)}\right) \tr(V_y\rho V_y^*)  \d\mu(y),
\end{eqnarray}
where $\rho\in\cJ_1$ and $S\in \mathfrak B$ the Borel sigma-algebra over $\cJ_1$.

Our main result concerns invariant measures for this Markov chain. We recall that if $\nu$ is a probability measure over $(\cJ_1,\mathfrak B)$ then $\nu\Pi$ is the probability measure defined by
\[
\nu\Pi(S) = \int \Pi(\rho,S) \d\nu(\rho),
\]
and that $\nu$ is $\Pi$-invariant if $\nu\Pi=\nu$. We equip the set of probability measures over $(\cJ_1,\mathfrak B)$ with the Wassertein metric of order $1$. Using the Kantorovich-Rubinstein duality theorem, for two probability measures $\nu_1, \nu_2$ it can be defined by
\begin{equation}\label{eq:def Wasserstein}
W_1(\nu_1,\nu_2)=\sup\left\{\int_{\cJ_1} f(\rho)\d\nu_1(\rho)- \int_{\cJ_1} f(\rho)\d\nu_2(\rho), f\in \operatorname{Lip}_1(\cJ_1)\right\}
\end{equation}
with
$$\operatorname{Lip}_1(\cJ_1)=\left\{f:\cJ_1\to\rr, |f(\rho)-f(\varrho)|\leq \|\rho-\varrho\|_1\right\}.$$

\begin{theorem}\label{thm:invariantmeasure} Suppose the non-resonant condition holds and $\beta>0$. Then 
\begin{equation}\label{def:invariantmeasure}
\nu_\inv \equiv \sum_{n=0}^{+\infty} \bra n, \rho_\inv n\ket \delta_{|n\ket\bra n|}
\end{equation}
is the unique $\Pi$-invariant probability measure and for any probability measure $\nu$ over $(\cJ_1,\mathfrak B)$ 
$$\lim_{t\to\infty}W_1(\nu\Pi^t, \nu_\inv)=0.$$
In particular, $(\nu\Pi^t)_{t\in \nn}$ converges weakly to $\nu_\inv$.
\end{theorem}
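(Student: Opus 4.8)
The plan is to establish the theorem in two stages: first the purification statement (Theorem~\ref{thm:purification}), which tells us that $\rho_t$ almost surely approaches the set of pure Fock states $\{|n\ket\bra n| : n\in\N\}$, and then combine it with the convergence of the averaged dynamics (Theorem~\ref{thm:Lmixing}) to pin down the invariant measure and prove Wasserstein convergence. The key heuristic is that once $\rho_t$ is close to the set of Fock states, its further evolution is governed (up to small errors) by a \emph{classical} Markov chain on $\N$: if $\rho_t \approx |n\ket\bra n|$, then applying $V_{\sigma\sigma'}$ and renormalizing sends it close to $|n-1\ket\bra n-1|$, $|n\ket\bra n|$, or $|n+1\ket\bra n+1|$ according to the explicit formulas~\eqref{eq:krausoperators}; the transition probabilities are $\bra n, \cL(|n\ket\bra n|) m\ket$ for $m=n-1,n,n+1$. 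This is a birth-and-death chain on $\N$, and one checks from Theorem~\ref{thm:Lmixing} (applied to $\rho=|n\ket\bra n|$) that its unique invariant distribution is $n\mapsto\bra n,\rho_\inv n\ket$, i.e. a geometric law with ratio $\e^{-\beta\eps}$, which is exactly the weight appearing in~\eqref{def:invariantmeasure}.

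The concrete steps are as follows. \textbf{Step 1: invariance of $\nu_\inv$.} One verifies directly from~\eqref{def:markovkernel} that $\nu_\inv\Pi=\nu_\inv$. Since $\nu_\inv$ is supported on Fock states, this reduces to the statement that the classical birth-and-death chain above has $(\bra n,\rho_\inv n\ket)_n$ as invariant distribution; equivalently, that $\cL(\rho_\inv)=\rho_\inv$ restricted to diagonal entries, which is part of Theorem~\ref{thm:Lmixing}. One must also check $\rho_\inv$ is actually diagonal in the Fock basis (it is, being a function of $N$). \textbf{Step 2: from purification to convergence.} Fix $f\in\Lip_1(\cJ_1)$. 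Using Theorem~\ref{thm:purification}, $\rho_t$ converges a.s.\ (and in $\|\cdot\|_1$, or at least in the sense of the distance to the Fock set) to some random pure Fock state; write $p_t$ for the random integer such that $\dist(\rho_t, |p_t\ket\bra p_t|)\to 0$ — more precisely $\rho_t$ spends, with high probability, long stretches near a single Fock state. The point is that $|f(\rho_t) - f(|p_t\ket\bra p_t|)|\to 0$, so it suffices to control the law of $p_t$. \textbf{Step 3: identify the limiting law of $p_t$.} Couple the process $(p_t)$ with the classical birth-and-death chain; show the coupling error is summable (this uses that the corrections to the Fock-state approximation decay, which is again where martingale/Banach-space arguments from the proof of purification enter). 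Then invoke the ergodic theorem / convergence to equilibrium for the classical chain — here one needs irreducibility and positive recurrence, which follow from $\beta>0$ (so that the drift is toward lower photon numbers) and from non-resonance (so that all the relevant transition rates are nonzero, i.e. no Fock state is absorbing or decoupled). Conclude $\law(p_t)\to(\bra n,\rho_\inv n\ket)_n$ weakly, hence $\E f(\rho_t)\to \int f\,\d\nu_\inv$, i.e.\ $W_1(\nu\Pi^t,\nu_\inv)\to 0$. \textbf{Step 4: uniqueness.} If $\nu'$ is any invariant measure, then $\nu'\Pi^t=\nu'$ for all $t$; but the convergence just proved gives $\nu'\Pi^t\to\nu_\inv$, forcing $\nu'=\nu_\inv$.

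The main obstacle is \textbf{Step 3}, and within it the fact that purification alone does not immediately give a clean reduction to a classical chain: one must show that the sequence of laws $(\nu\Pi^t)_t$ is tight — automatic in finite dimension but genuinely needed here — and that the photon number $p_t$ does not escape to $+\infty$ with positive probability. The natural way is to dominate $p_t$ (or the full photon-number distribution $n\mapsto\bra n,\rho_t n\ket$) stochastically by a recurrent birth-and-death process whose invariant law has exponential tails, using $\beta>0$; this is precisely the "linking quantum trajectories and the invariant state of $\cL$ with a classical birth and death process" mentioned in the introduction. Making the coupling quantitative — controlling how the off-diagonal and non-nearest-neighbour corrections perturb the classical transition probabilities, and showing these perturbations are small once $t$ is large (a consequence of purification) — is the technical heart of the argument. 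Once tightness and recurrence are in hand, the conclusion follows from standard Markov chain convergence together with the soft fact that a Wasserstein-$1$ limit is determined by its action on Lipschitz functions, for which the purification-driven approximation $f(\rho_t)\approx f(|p_t\ket\bra p_t|)$ suffices.
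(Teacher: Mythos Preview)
Your outline has the right ingredients---purification (Theorem~\ref{thm:purification}) plus convergence of the averaged dynamics (Theorem~\ref{thm:Lmixing})---but the glue you propose in Step~3 is both harder than necessary and contains a genuine gap. You suggest defining $p_t$ as ``the integer such that $\rho_t$ is close to $|p_t\ket\bra p_t|$'' and then \emph{coupling} $(p_t)_t$ to the classical birth-and-death chain, controlling coupling errors and proving tightness. The problem is that your $p_t$ depends on the initial state $\rho$ through $\rho_t$, so comparing two initial measures $\nu$ and $\nu_\inv$ via $p_t$ gives you nothing for free; you would need a quantitative coupling of two quantum trajectories with different initial states, which the paper never attempts and which purification alone does not provide.

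The paper avoids all of this by exploiting a sharper form of the purification statement. Theorem~\ref{thm:purification} does not merely say $\rho_t$ is close to \emph{some} Fock state: it produces a specific estimator $\rho_{\infty,t}:=|N_t(n_\infty,\cdot)\ket\bra N_t(n_\infty,\cdot)|$ which is $\cO$-\emph{measurable}, i.e.\ a function of the measurement record only, independent of the initial state. This single observation replaces your coupling/tightness program: since $\rho_{\infty,t}$ is $\cO$-measurable, $\E_\nu[f(\rho_{\infty,t})]=\E^{\rho_\nu}[f(\rho_{\infty,t})]$ depends on $\nu$ only through the barycentre $\rho_\nu=\E_\nu[\rho]$, and then the total-variation Lipschitz bound $|\E^{\rho}[g]-\E^{\varrho}[g]|\le\|g\|_\infty\|\rho-\varrho\|_1$ (Proposition~\ref{dist_var_totale}) together with $\|\cL^s(\rho_\nu)-\rho_\inv\|_1\to0$ closes the argument via a simple $t=s+u$ splitting. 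No tightness is needed, and the classical chain enters only through the fact that $\rho_{\infty,t}$ is \emph{exactly} (not approximately) a Fock-state trajectory. Your Steps~1 and~4 are fine and match the paper; the missing idea is the $\cO$-measurability of the estimator and the barycentre reduction it enables.
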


\begin{remark} If the non-resonant condition holds and $\beta\leq 0$ there is actually no invariant measure, see Corollary~\ref{coro:invariantmeasureinfo}.
\end{remark}

\begin{remark} The proof of the convergence uses Equation~(\ref{eq:strongmixing}). As mentioned in Remark~\ref{rem:noinvstate} the latter can be arbitrarily slow, hence one can not expect any rate of convergence in the above theorem.
\end{remark}

The strategy of proof of Theorem~\ref{thm:invariantmeasure} is similar to the one developed in \cite{BFPP19}. As already mentioned it partly relies on Equation~\eqref{eq:strongmixing}. More precisely, the latter implies that the dynamical system on $\Omega$ induced by the left-shift is strongly mixing in total variation -- see Theorem~\ref{thm:mixing-dynsys}. Then, the trajectory $(\rho_t)_t$ is consistently estimated as $t$ grows using a filter independent of the initial state. Namely, it depends only on the realization $\omega\in \Omega$. The consistency of the estimator (or filter) is proved using purification of the quantum trajectories. The proof of purification is the main innovation, as known proofs rely on compactness of the set of density operators, assumption that is not verified for the one atom maser. The next section is dedicated to the dynamical system on $\Omega$. The purification theorem is then given in Section~\ref{sec:purification-results}.

\medskip
%%%%%%%%%%%%%%%%%%%%%%%%%%%%%%%%%%%%%%%%%%%%%%%%%%%%%%%%%%%%%%%%%%%%
%%%%%%%%%%%%%%%%%%%%%%%%%%%%%%%%%%%%%%%%%%%%%%%%%%%%%%%%%%%%%%%%%%%%

\subsection{Dynamical system properties}\label{sec:dynamical sys}

Let $\theta:\Omega\to\Omega$ denote the left-shift $\theta(\omega)_t=\omega_{t+1}$. This section is devoted to a result of mixing type for the dynamical system $(\Omega,\theta)$. Note that it can be easily generalized to arbitrary repeated quantum measurements using the instrument formalism -- see \cite{DL70}.

Let us start by defining some probability measures. They are defined by Kolmogorov extension theorem. The condition of Equation~\eqref{eq:tracepreserving} ensures the theorem can be applied.
\begin{definition}\label{def:Prho}
  For any state $\rho\in \cJ_1$, let $\pp^\rho$ be the probability measure over $\Omega$ defined by
  $$\pp^\rho(O)=\int_O \tr(\rho W_t^*(\omega)W_t(\omega))\d\mu_t(\omega)$$
  for any $t\in \nn^*$ and $O\in \cO_t$.
\end{definition}
In this definition $\mu_t$ is the product measure of $\mu$ on the first $t$ terms of the sequence $\omega$ and an arbitrary probability measure on the rest of the terms. The fact that $O\in \cO_t$ ensures that the choice of this ``tail'' probability measure is not relevant.
\begin{remark} Note that the restriction of $\P^\rho$ to $\cO_t$ is nothing but the conditional probability distribution in Equation~\eqref{eq:probaoutcomes}.
\end{remark}

We can now state the two main results of this section. The first one is a strong mixing result on the measurement outcomes.
\begin{theorem}\label{thm:mixing-dynsys}
 Assume the non-resonant condition holds and $\beta>0$. Then for any state $\rho\in \cJ_1$,
 $$\lim_{t\to\infty}\|\pp^\rho\circ\theta^{-t}-\pp^{\rho_{\rm inv}}\|_{TV}=0$$
 where $\|\cdot\|_{TV}$ is the total variation norm.
\end{theorem}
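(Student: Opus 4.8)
The plan is to reduce the total variation convergence on the shift-transformed measures to the convergence \eqref{eq:strongmixing} of $\cL^t$, by computing explicitly how $\pp^\rho\circ\theta^{-t}$ acts on a fixed cylinder set. Fix $s\in\nn^*$ and a cylinder $O=\Lambda_{(\omega_1,\dots,\omega_s)}\in\cO_s$. By definition of the shift, $\theta^{-t}O=\{\,z\in\Omega: z_{t+1}=\omega_1,\dots,z_{t+s}=\omega_s\,\}\in\cO_{t+s}$, so using Definition~\ref{def:Prho} and the fact that $\pp^\rho$ restricted to $\cO_{t+s}$ is the measure of Equation~\eqref{eq:probaoutcomes}, one gets
\[
\pp^\rho(\theta^{-t}O)=\sum_{y_1,\dots,y_t\in\cR}\tr\!\left(V_{\omega_s}\cdots V_{\omega_1}\,V_{y_t}\cdots V_{y_1}\,\rho\,V_{y_1}^*\cdots V_{y_t}^*\,V_{\omega_1}^*\cdots V_{\omega_s}^*\right)=\tr\!\left(V_{\omega_s}\cdots V_{\omega_1}\,\cL^t(\rho)\,V_{\omega_1}^*\cdots V_{\omega_s}^*\right),
\]
where the sum over the intermediate outcomes $y_1,\dots,y_t$ collapses precisely to $\cL^t(\rho)$ by the definition of $\cL$. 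Thus $\pp^\rho\circ\theta^{-t}$ and $\pp^{\rho_{\rm inv}}\circ\theta^{-t}=\pp^{\rho_{\rm inv}}$ (the latter by invariance of $\rho_{\rm inv}$ under $\cL$) agree on $\cO_s$ up to replacing $\rho$ by $\cL^t(\rho)$, respectively $\rho_{\rm inv}$, inside a fixed finite expression.

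Next I would control the total variation distance. Since $\cO=\sigma(\bigcup_s\cO_s)$ and the restrictions $\cO_s$ form an increasing filtration, it suffices to bound $\sup_{O\in\cO_s}\bigl|\pp^\rho(\theta^{-t}O)-\pp^{\rho_{\rm inv}}(O)\bigr|$ uniformly in $s$, or more efficiently to argue directly: for any $O\in\cO_{t+s}$ coming from a cylinder of length $s$ shifted by $t$, the computation above gives
\[
\bigl|\pp^\rho(\theta^{-t}O)-\pp^{\rho_{\rm inv}}(O)\bigr|=\bigl|\tr\!\left(V_{\omega_s}\cdots V_{\omega_1}\bigl(\cL^t(\rho)-\rho_{\rm inv}\bigr)V_{\omega_1}^*\cdots V_{\omega_s}^*\right)\bigr|.
\]
Summing the positive quantities $\tr(V_{\omega_s}\cdots V_{\omega_1}\,\eta\,V_{\omega_1}^*\cdots V_{\omega_s}^*)$ for $\eta$ the positive and negative parts of the self-adjoint trace-class operator $\cL^t(\rho)-\rho_{\rm inv}$, and using $\sum_{\omega_1,\dots,\omega_s}V_{\omega_1}^*\cdots V_{\omega_s}^*V_{\omega_s}\cdots V_{\omega_1}=\one$ (iterating \eqref{eq:tracepreserving}), one gets $\sum_{\omega_1,\dots,\omega_s}\tr(V_{\omega_s}\cdots V_{\omega_1}\,\eta\,V_{\omega_1}^*\cdots V_{\omega_s}^*)=\tr(\eta)$. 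Hence the total variation distance between $\pp^\rho\circ\theta^{-t}$ and $\pp^{\rho_{\rm inv}}$ is bounded by $\|\cL^t(\rho)-\rho_{\rm inv}\|_1$, which tends to $0$ by Theorem~\ref{thm:Lmixing}.

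The only genuinely delicate point is the measure-theoretic bookkeeping: a priori $\|\pp^\rho\circ\theta^{-t}-\pp^{\rho_{\rm inv}}\|_{TV}=\sup_{A\in\cO}|\pp^\rho(\theta^{-t}A)-\pp^{\rho_{\rm inv}}(A)|$ is a supremum over all of $\cO$, not just cylinders. I would handle this by noting that the preimage $\theta^{-t}A$ of any $A\in\cO$ lies in $\cO$, and that on the algebra generated by cylinders the two (finite, positive) measures differ by the above bound; then a standard approximation argument (the algebra of finite unions of cylinders generates $\cO$, and total variation distance of two measures equals the supremum over any generating algebra by a monotone-class / Hahn-decomposition argument) upgrades the cylinder bound to the full $\sigma$-algebra. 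Concretely, fixing $t$, define $\mathrm{d}_t=\sup_{s}\sup_{A\in\cO_{s}}|\pp^\rho(\theta^{-t}A)-\pp^{\rho_{\rm inv}}(A)|$; the computation shows $\mathrm{d}_t\le\|\cL^t(\rho)-\rho_{\rm inv}\|_1$, and since $\bigcup_s\cO_s$ is an algebra generating $\cO$ one has $\|\pp^\rho\circ\theta^{-t}-\pp^{\rho_{\rm inv}}\|_{TV}=\mathrm{d}_t$. I expect this last step to be the main (though routine) obstacle, the rest being the explicit collapse of intermediate sums into $\cL^t$ together with the citation of \eqref{eq:strongmixing}.
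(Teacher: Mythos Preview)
Your proof is correct and follows essentially the same route as the paper: compute $\pp^\rho\circ\theta^{-t}$ on cylinders, recognize $\cL^t(\rho)$, bound the difference in total variation by $\|\cL^t(\rho)-\rho_{\rm inv}\|_1$, and invoke Theorem~\ref{thm:Lmixing}. The paper's packaging is slightly different: instead of working on cylinders and then extending by approximation, it first proves (via a Kolmogorov-type extension theorem for POVMs, Lemma~\ref{lem:POVM}) that there is a single POVM $P$ on all of $\cO$ with $\pp^\rho(O)=\tr(\rho P(O))$, which immediately gives the Lipschitz bound $\|\pp^\rho-\pp^\varrho\|_{TV}\le\|\rho-\varrho\|_1$ (Proposition~\ref{dist_var_totale}) on the full $\sigma$-algebra; the identity $\pp^\rho\circ\theta^{-t}=\pp^{\cL^t(\rho)}$ then follows in one line. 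Your approach is more elementary in that it avoids the POVM extension result, at the cost of the measure-theoretic approximation step you flag at the end (which is indeed routine, since $\bigcup_s\cO_s$ is an algebra generating $\cO$). The paper's packaging has the side benefit that the POVM lemma and the Lipschitz bound are reused elsewhere (e.g.\ for Proposition~\ref{prop:absolutecontinuity}).
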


The second result concerns the transfer of the notion of absolute continuity between states to their associated probability distribution. We recall that if $\rho$ and $\varrho$ are two states, $\varrho$ is said to be absolutely continuous with respect to $\rho$, denoted $\varrho \ll \rho$, if $\ker(\rho)\subset \ker(\varrho)$. In particular if $\rho$ is a faithful state then any state is absolutely continuous with respect to it.
\begin{proposition}\label{prop:absolutecontinuity} If $\rho,\varrho$ are two states such $\varrho\ll \rho$ then $\P^\varrho \ll \P^\rho$. In particular if $\rho$ is faithful then $\P^\varrho \ll \P^\rho$ for all states $\varrho$.
\end{proposition}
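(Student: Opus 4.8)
The plan is to reduce the statement about the infinite-product measures $\P^\varrho$ and $\P^\rho$ to a family of statements on the cylinder $\sigma$-algebras $\cO_t$, and there to use that the defining densities $\omega\mapsto \tr(\rho W_t^*(\omega)W_t(\omega))$ are given by explicit finite-rank-type operators built from the $V_{\sigma\sigma'}$. First I would fix $t\in\nn^*$ and an elementary cylinder $\Lambda_{(\omega_1,\dots,\omega_t)}$. By Definition~\ref{def:Prho}, $\P^\rho(\Lambda_{(\omega_1,\dots,\omega_t)}) = \tr\bigl(\rho\, W_t^*W_t\bigr)$ where $W_t = V_{\omega_t}\cdots V_{\omega_1}$, and similarly for $\varrho$. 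The key observation is the operator inequality: if $\ker(\rho)\subset\ker(\varrho)$, then there exists a constant $c>0$ (depending on $\rho,\varrho$ but not on $t$ or the cylinder) such that $\varrho \le c\,\rho$ as quadratic forms — indeed, writing $\rho = \sum_n p_n |e_n\rangle\langle e_n|$ with $p_n>0$ on $(\ker\rho)^\perp$, the condition $\varrho\ll\rho$ means $\varrho$ is supported on $(\ker\rho)^\perp$, but $c$ need not exist if the spectrum of $\rho$ restricted to $\supp\varrho$ is not bounded below. So I would \emph{not} try to get a uniform $c$; instead I would argue directly with null sets.

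The cleaner route is: suppose $O\in\cO$ with $\P^\rho(O)=0$; I want $\P^\varrho(O)=0$. It suffices to treat $O\in\cO_t$ for fixed $t$ (and then take unions / use that $\cO = \sigma(\bigcup_t \cO_t)$ together with the fact that $\P^\rho$ and $\P^\varrho$ restricted to $\cO_t$ are finite measures that are consistent in $t$, so a $\P^\rho$-null set in $\cO$ is, up to the monotone class / $\pi$-$\lambda$ argument, controlled by its $\cO_t$-measurable approximants). For $O\in\cO_t$, $O$ is a finite disjoint union of elementary cylinders $\Lambda_{(\omega_1,\dots,\omega_t)}$, so $\P^\rho(O)=0$ forces $\tr(\rho W_t^* W_t)=0$ for each such cylinder appearing in $O$, i.e. $\tr(W_t\rho W_t^*)=0$, i.e. $W_t\rho W_t^* = 0$ (a positive trace-class operator with zero trace vanishes), i.e. $\range(\rho^{1/2})\subset \ker(W_t)$, equivalently $(\ker\rho)^\perp\subset\ker(W_t)$. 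Since $\ker\rho\subset\ker\varrho$ gives $(\ker\varrho)^\perp\subset(\ker\rho)^\perp\subset\ker W_t$, we get $W_t\varrho W_t^*=0$, hence $\tr(\varrho W_t^* W_t)=0$, hence $\P^\varrho(\Lambda_{(\omega_1,\dots,\omega_t)})=0$ for each cylinder in $O$, hence $\P^\varrho(O)=0$.

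To upgrade from $\bigcup_t\cO_t$ to $\cO$, I would invoke a standard argument: the collection $\{O : \P^\varrho(O)\le \P^\rho(O)\cdot(\text{no, use null sets})\}$ — more precisely, let $\cN = \{O\in\cO : \P^\rho(O)=0 \Rightarrow \P^\varrho(O)=0\}$ is not obviously a $\sigma$-algebra, so instead fix a $\P^\rho$-null set $O$, approximate from outside: by regularity of $\P^\rho$ on the cylinder algebra (or by a direct Carathéodory/outer-measure argument) there is, for every $\eps>0$, a set $O_\eps\in\bigcup_t\cO_t$ with $O\subset O_\eps$ and $\P^\rho(O_\eps)<\eps$; by the cylinder case $\P^\varrho(O_\eps)=0$ whenever $\P^\rho(O_\eps)=0$, but here $\P^\rho(O_\eps)$ need only be small, so this needs care. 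The cleanest fix: it is enough to prove $\P^\varrho \ll \P^\rho$ on each $\cO_t$ with the \emph{same} null sets, and then note that a measure on $\cO=\sigma(\bigcup\cO_t)$ that is absolutely continuous on each element of an increasing sequence of $\sigma$-algebras generating $\cO$ is absolutely continuous on $\cO$ — this follows from the martingale / Radon–Nikodym consistency argument, or simply because $\bigcup_t\cO_t$ is an algebra generating $\cO$ and both measures are $\sigma$-finite (finite, even), so by the uniqueness part of Carathéodory extension the implication "$\P^\rho$-null $\Rightarrow$ $\P^\varrho$-null" propagates from the generating algebra to $\cO$. The last sentence of the Proposition is then immediate: a faithful state has $\ker\rho=\{0\}\subset\ker\varrho$ for every $\varrho$, so $\varrho\ll\rho$ always holds. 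The only mildly delicate point is this last extension step from the algebra $\bigcup_t\cO_t$ to $\cO$; everything else is the elementary observation that $\tr(\rho W_t^*W_t)=0 \iff (\ker\rho)^\perp\subset\ker W_t$.
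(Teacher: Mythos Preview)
Your argument on cylinders is correct, but the extension step from $\bigcup_t\cO_t$ to $\cO$ is a genuine gap, and the justifications you offer do not work. It is \emph{false} in general that absolute continuity on each $\cO_t$ implies absolute continuity on $\cO$: take $\Omega=\{0,1\}^{\nn}$ with $\P^\rho$ the product of Bernoulli$(1/2)$ and $\P^\varrho$ the product of Bernoulli$(p)$, $p\neq 1/2$. On every $\cO_t$ these are mutually absolutely continuous (all cylinders have positive mass), yet on $\cO$ they are mutually singular by Kakutani's dichotomy. Neither the ``martingale / Radon--Nikodym consistency'' argument nor the ``uniqueness part of Carath\'eodory extension'' rescues this: the Radon--Nikodym derivatives $D_t$ do form a $\P^\rho$-martingale, but its uniform integrability (equivalently, the vanishing of the singular part in the Lebesgue decomposition) is precisely what is at stake and cannot be assumed; and Carath\'eodory uniqueness concerns equality of measures on a generating algebra, not preservation of null sets.

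The paper closes this gap by first proving a POVM extension (Lemma~\ref{lem:POVM}): there exists a single positive-operator-valued measure $P$ on all of $\cO$ such that $\P^\rho(O)=\tr(\rho P(O))$ for \emph{every} $O\in\cO$, not just cylinders. With this in hand, $\P^\rho(O)=0$ gives $\range(P(O))\subset\ker\rho\subset\ker\varrho$, hence $\P^\varrho(O)=\tr(\varrho P(O))=0$, directly on the full $\sigma$-algebra with no extension step needed. Your cylinder computation is essentially the $\cO_t$-restriction of this same identity, but the existence of the global POVM (obtained via a Kolmogorov-type extension for POVMs, \cite{Tu08}) is the missing ingredient that makes the argument go through.
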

\begin{remark} When $\rho$ is faithful this result will be refined in the sequel by exhibiting the density of $\P^\varrho$ with respect to $\P^\rho$, see Proposition~\ref{prop:Mn-martingale}. We shall use the above proposition several time with the faithful invariant state $\rho=\rho_\inv$.
\end{remark}

Both these results are actually the consequence of an extension theorem for positive operator valued measures (POVM).
\begin{lemma}\label{lem:POVM}
There exists a POVM $P:\cO\to\cB(\cH_\cS)$ such that for any state $\rho\in \cJ_1$ and $O\in \cO$,
$$\pp^\rho(O)=\tr(\rho P(O)).$$
\end{lemma}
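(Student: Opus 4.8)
The plan is to construct the POVM $P$ by Kolmogorov-type extension from its values on cylinder sets, where the prescription is forced: on an elementary cylinder $\Lambda_{(\omega_1,\ldots,\omega_t)}$ we must have $P(\Lambda_{(\omega_1,\ldots,\omega_t)}) = W_t(\omega)^*W_t(\omega) = V_{\omega_1}^*\cdots V_{\omega_t}^* V_{\omega_t}\cdots V_{\omega_1}$, since that is precisely the operator for which $\tr(\rho W_t^*W_t) = \pp^\rho(\Lambda_{(\omega_1,\ldots,\omega_t)})$. First I would check that this prescription is well-defined and consistent: each such operator is positive and bounded by $\one$ (indeed $0 \le W_t^*W_t \le \one$ follows by induction from $\sum_{\omega_{t+1}} V_{\omega_{t+1}}^* A V_{\omega_{t+1}} \le \|A\|\, \one$ together with Equation~\eqref{eq:tracepreserving}), and the stochasticity relation~\eqref{eq:tracepreserving} gives the finite additivity $\sum_{\omega_{t+1}\in\cR} W_{t+1}^*W_{t+1} = W_t^* \big(\sum_{\omega_{t+1}} V_{\omega_{t+1}}^*V_{\omega_{t+1}}\big) W_t = W_t^*W_t$, so the values on cylinders of successive sizes are compatible. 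Thus $P$ is a finitely additive, $\cB(\cH_\cS)$-valued, $[0,\one]$-bounded set function on the algebra $\bigcup_t \cO_t$.

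Next I would upgrade finite additivity to a genuine POVM on $\cO = \sigma(\bigcup_t \cO_t)$. The cleanest route is to fix a faithful state on $\cH_\cS$ — say $\rho_\inv$, or more simply $\omega_0 := \sum_n 2^{-(n+1)} |n\rangle\langle n|$ — and observe that $O \mapsto \tr(\omega_0 P(O))$ is a finitely additive probability measure on the algebra $\bigcup_t \cO_t$; since $\Omega = \cR^{\nn^*}$ is a countable product of finite sets and every cylinder set is both open and closed (hence compact), this set function is automatically countably additive on the algebra and extends by Carathéodory to a probability measure $\lambda$ on $\cO$. One then wants the operator-valued extension. For each fixed vector $\psi \in \cH_\cS$ the map $O \mapsto \langle \psi, P(O)\psi\rangle$ on the algebra is a finite positive measure, dominated by $\|\psi\|^2$ times a common probability measure (take $\psi$ in the span of the $|n\rangle$; in general approximate), so it extends countably additively; polarization then yields the complex measures $O \mapsto \langle \psi, P(O)\varphi\rangle$, and uniform boundedness ($|\langle\psi,P(O)\varphi\rangle|\le\|\psi\|\|\varphi\|$) plus a density argument let one define, for each $O\in\cO$, a bounded sesquilinear form, hence an operator $P(O)$ with $0\le P(O)\le \one$. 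Normalization $P(\Omega) = \one$ and $\sigma$-additivity in the weak (hence, by the standard Vitali–Hahn–Saks / monotone convergence argument for bounded positive operators, strong) operator topology give that $P$ is a POVM. Finally, for a general state $\rho = \sum_k p_k |\psi_k\rangle\langle\psi_k|$, the identity $\pp^\rho(O) = \tr(\rho P(O))$ holds on the algebra by construction and on all of $\cO$ because both sides are $\sigma$-additive measures in $O$ (using dominated convergence and $\sum_k p_k = 1$), so they agree on the generating algebra and hence everywhere.

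The main obstacle is the passage from a finitely additive operator-valued set function on the cylinder algebra to a countably additive POVM on $\cO$, because there is no Hahn–Kolmogorov theorem directly for operator-valued measures on an infinite-dimensional space; one has to route through scalar measures and reassemble, controlling everything uniformly in the vector so that the reassembled $P(O)$ is a well-defined bounded operator. The topology of $\Omega$ helps decisively here — since $\Omega$ is a countable product of finite discrete spaces, the cylinder algebra consists of clopen (compact) sets, so any finitely additive $[0,\infty)$-valued set function on it is automatically continuous at $\emptyset$ and hence countably additive on the algebra; this removes the only genuinely delicate point and makes every scalar extension automatic. After that, Proposition~\ref{prop:absolutecontinuity} is immediate: if $\varrho \ll \rho$, i.e. $\ker\rho \subset \ker\varrho$, and $\pp^\rho(O) = \tr(\rho P(O)) = 0$ with $P(O)\ge 0$, then $P(O)^{1/2}$ vanishes on $\range(\rho) = (\ker\rho)^\perp$, so $P(O)^{1/2}$ also vanishes on $\range(\varrho)\subset(\ker\varrho)^\perp\subset(\ker\rho)^\perp$, whence $\tr(\varrho P(O)) = 0$, i.e. $\pp^\varrho(O) = 0$; and Theorem~\ref{thm:mixing-dynsys} follows by writing $\pp^\rho\circ\theta^{-t}(O) = \tr(\rho P(\theta^{-t}O))$ and recognizing, via the shift-covariance $P(\theta^{-t}O)$ relating to $W_t$, that the difference is controlled by $\|\cL^t(\rho) - \rho_\inv\|_1$, which tends to $0$ by Theorem~\ref{thm:Lmixing}.
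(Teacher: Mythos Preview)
Your approach is correct and, in spirit, identical to the paper's: both define the POVM on elementary cylinders by $P(\Lambda_{(\omega_1,\ldots,\omega_t)})=W_t^*(\omega)W_t(\omega)$, check consistency via the stochasticity relation~\eqref{eq:tracepreserving}, and then invoke a Kolmogorov-type extension for POVMs. The only difference is that the paper dispatches the extension step in one line by citing Tumulka's Kolmogorov extension theorem for POVMs \cite[Corollary~1]{Tu08}, whereas you sketch that result by hand --- reducing to scalar measures $O\mapsto\langle\psi,P(O)\psi\rangle$, using the clopen/compact structure of cylinders in $\cR^{\nn^*}$ to upgrade finite to countable additivity, and reassembling via polarization and bounded sesquilinear forms. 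Your route is more self-contained and makes transparent exactly where the finiteness of $\cR$ enters (automatic $\sigma$-additivity on the algebra), at the cost of some bookkeeping; the paper's route is shorter but hides that mechanism in the citation. One minor point: your aside about domination ``by $\|\psi\|^2$ times a common probability measure'' is unnecessary --- each scalar premeasure extends on its own, and the operator $P(O)$ is recovered from the family $\{\langle\psi,P(O)\psi\rangle\}_\psi$ directly via the sesquilinear form argument you give, with no need for a common dominating measure.
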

\begin{proof} Let $\mathcal G=\mathcal P(\mathcal R)$ be the collection of all subsets of $\cR$ (recall $\cR$ is a finite set). For any $t\in \nn^*$ let $(\mathcal R^t,\mathcal G^{\otimes t},P_t)$ be the positive operator valued measure (POVM) defined by 
$$P_t(O)=\int_O V_{\omega_1}^*\dotsb V_{\omega_t}^*V_{\omega_t}\dotsb V_{\omega_1}\d\mu^{\otimes t}(\omega_1,\dotsc,\omega_t)$$ 
for any $O\in \mathcal G^{\otimes t}$. Since for any $O\in \mathcal G^{\otimes t}$, Equation~\eqref{eq:tracepreserving} implies $P_{t+1}(O\times \cR)=P_t(O)$, the family of POVMs $(P_t)_t$ is consistent. Then, \cite[Corollary~1]{Tu08} implies there exists a POVM $(\Omega,\cO,P)$ such that for any $O\in \cO_t$, $P(O)=P_t(O)$ where we denote $O$ as an element of $\cO_t$ and as an element of $\mathcal G^{\otimes t}$ with the same letter. Since $\mathcal R$ is finite, it follows that for any $O\in \cO$, $\pp^\rho(O)=\tr(\rho P(O))$.
\end{proof}
Note that the proof relies only on the fact that Kolmogorov's extension theorem can be generalized to POVM using Riesz's representation theorem as done in \cite{Tu08}. This lemma then implies the following Lipschitz continuity of the family of probability measures $(\P^\rho)_\rho$.

\begin{proposition}\label{dist_var_totale} For all states $\rho,\varrho\in \cJ_1$, the following holds
$$
\|\mathbb P^\rho-\mathbb P^\varrho\|_{TV}\leq\|\rho-\varrho\|_{1}.
$$
\end{proposition}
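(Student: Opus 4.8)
The plan is to deduce Proposition~\ref{dist_var_totale} directly from Lemma~\ref{lem:POVM}, which expresses each $\pp^\rho$ through a single POVM $P$ that does not depend on the state. Once we know that $\pp^\rho(O)=\tr(\rho P(O))$ and $\pp^\varrho(O)=\tr(\varrho P(O))$ for every $O\in\cO$, the difference of the two measures evaluated on $O$ is simply $\tr\bigl((\rho-\varrho)P(O)\bigr)$, and the total variation norm is obtained by taking a supremum of this quantity over $O$.

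First I would recall the characterization of the total variation norm: for two probability measures on $(\Omega,\cO)$,
\[
\|\pp^\rho-\pp^\varrho\|\TV = \sup_{O\in\cO}\bigl|\pp^\rho(O)-\pp^\varrho(O)\bigr| \quad\text{(up to the usual factor convention)},
\]
or equivalently $\|\pp^\rho-\pp^\varrho\|\TV=\sup\{\,|\pp^\rho(O_1)-\pp^\varrho(O_1)|+|\pp^\rho(O_2)-\pp^\varrho(O_2)|\,\}$ over a partition $\Omega=O_1\sqcup O_2$. Using the POVM representation, for any $O\in\cO$ we have $\pp^\rho(O)-\pp^\varrho(O)=\tr\bigl((\rho-\varrho)P(O)\bigr)$. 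Writing $\Delta=\rho-\varrho$, a self-adjoint trace-class operator, and using $0\le P(O)\le\one$, I would bound $|\tr(\Delta P(O))|$ by $\|\Delta\|_1=\|\rho-\varrho\|_1$. Concretely, decompose $\Delta=\Delta_+-\Delta_-$ into its positive and negative parts; then $\tr(\Delta P(O))=\tr(\Delta_+P(O))-\tr(\Delta_-P(O))$ lies between $-\tr(\Delta_-)$ and $\tr(\Delta_+)$ because $0\le\tr(\Delta_\pm P(O))\le\tr(\Delta_\pm)$, and $\tr(\Delta_+)+\tr(\Delta_-)=\|\Delta\|_1$. Moreover $\tr(\Delta_+)=\tr(\Delta_-)$ since $\tr\Delta=\tr\rho-\tr\varrho=0$, so in fact $|\tr(\Delta P(O))|\le\tfrac12\|\rho-\varrho\|_1$, which gives the two-measure total variation bound $\|\pp^\rho-\pp^\varrho\|\TV\le\|\rho-\varrho\|_1$ regardless of which normalization convention one uses for $\|\cdot\|\TV$.

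The only genuinely substantive input is Lemma~\ref{lem:POVM}; everything after it is elementary operator-theoretic bookkeeping, so I do not expect any real obstacle. The one point requiring a little care is matching the normalization conventions: depending on whether $\|\cdot\|\TV$ is defined as $\sup_O|\mu(O)-\nu(O)|$ or as twice that, one gets the bound with constant $\tfrac12$ or $1$; since the traceless-ness of $\rho-\varrho$ already gives the sharper constant $\tfrac12$ for the one-set supremum, the stated inequality holds comfortably in either convention. I would therefore write the proof as: invoke Lemma~\ref{lem:POVM}; fix $O\in\cO$ and compute $\pp^\rho(O)-\pp^\varrho(O)=\tr((\rho-\varrho)P(O))$; estimate this using the Jordan decomposition of $\rho-\varrho$ together with $0\le P(O)\le\one$ and $\tr(\rho-\varrho)=0$; take the supremum over $O$.
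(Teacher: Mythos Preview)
Your proposal is correct and follows essentially the same approach as the paper: both invoke Lemma~\ref{lem:POVM} to write $\pp^\rho(O)-\pp^\varrho(O)=\tr\bigl((\rho-\varrho)P(O)\bigr)$ and then bound this using $0\le P(O)\le\one$. The only cosmetic difference is that the paper uses H\"older's inequality for Schatten norms directly, whereas you spell out the Jordan decomposition of $\rho-\varrho$ and exploit $\tr(\rho-\varrho)=0$ to obtain the slightly sharper constant $\tfrac12$ for the one-set supremum; either way the stated inequality follows.
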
 

This result has been given in \cite{BFP23} in the finite dimensional setting. The proof is the same. Since it is rather short we give it for the reader convenience.
\begin{proof} Lemma~\ref{lem:POVM} ensures there exists a POVM $P$ such that for any $O\in\cO$
  $$\pp^\rho(O)-\pp^\varrho(O)=\tr[(\rho-\varrho) P(O)].$$
  Then H\"older inequality for Schatten norms implies
\begin{eqnarray*}
|\mathbb{P}^{\rho}(O) - \mathbb{P}^{\varrho}(O)| & \leq & \|\rho-\varrho\|_1 \left\| P(O) \right\| \\
 & \leq & \|\rho-\varrho\|_1,
\end{eqnarray*}
where $\|\cdot  \|$ denotes the usual operator norm and we have used $P(O)\leq \id_\cH$ in the last line.
\end{proof}

We can now prove the two results of this section. 
\begin{proof}[Proof of Theorem~\ref{thm:mixing-dynsys}]
  By definition, for any $O\in \cO$, $\pp^{\rho}\circ\theta^{-t}(O)=\mathbb P^\rho(\mathcal R^t\times O)$. Then Lemma~\ref{lem:POVM} leads to
  $$\pp^{\rho}\circ\theta^{-t}(O)=\sum_{(\omega_1,\dotsc,\omega_t)\in \cR^t}\tr(\rho V_{\omega_1}^*\dotsb V_{\omega_t}^*P(O)V_{\omega_t}\dotsb V_{\omega_1}).$$
   Using the cyclicity of the trace and the definition of $\mathcal L$ we further have
  $$\pp^{\rho}\circ\theta^{-t}(O)=\tr(\mathcal L^t(\rho)P(O))=\pp^{\mathcal L^t(\rho)}(O)$$
  and Proposition~\ref{dist_var_totale} together with Theorem~\ref{thm:Lmixing} yield the result.
\end{proof}

\begin{proof}[Proof of Proposition~\ref{prop:absolutecontinuity}]
 Let $\rho,\varrho$ be two states such that $\varrho\ll\rho$, i.e. $\ker(\rho)\subset \ker(\varrho)$. If $O\in\cO$ is such that $\P^\rho(O)=0$ then, using Lemma~\ref{lem:POVM} we have $\operatorname*{range}(P(O))\subset \ker\rho$. Hence $\operatorname*{range}(P(O))\subset \ker\varrho$ and another use of Lemma~\ref{lem:POVM} leads to $\pp^\varrho(O)=\tr(\varrho P(O))=0$ and indeed $\pp^\varrho\ll \pp^\rho$.
\end{proof}

\medskip
%%%%%%%%%%%%%%%%%%%%%%%%%%%%%%%%%%%%%%%%%%%%%%%%%%%%%%%%%%%%%%%%%%%%
%%%%%%%%%%%%%%%%%%%%%%%%%%%%%%%%%%%%%%%%%%%%%%%%%%%%%%%%%%%%%%%%%%%%

\subsection{Purification}\label{sec:purification-results}

It is easy to see that the set of pure states is preserved along the trajectory. Namely, if the initial state $\rho$ is a pure state, i.e. $\rho=|\phi\ket\bra \phi|$ for some unit vector $\phi\in\cH_\cS$, then $\rho_t(\omega)$ remains a pure state for all $t\in\N$ and $\omega\in\Omega$. Even stronger, we shall see in Section~\ref{ssec:classicalchain} that if this unit vector is a Fock vector $|n\ket$ then $\rho_t$ remains a Fock state. As the result of Theorem~\ref{thm:invariantmeasure} suggests, one can actually expect that starting from any state $\rho$ the evolved state $\rho_t(\omega)$ approaches a pure state, and moreover that the latter is a Fock state. When $\cH_\cS$ is finite dimensional this phenomenon is known as purification. The first proof of purification, under some suitable and natural assumption, was published in \cite{MK06}. We refer the reader to e.g. \cite{BFPP19} and references therein for alternative proofs and more details. Our second main result, which is also the second main ingredient in the proof of Theorem~\ref{thm:invariantmeasure}, is a precise formulation of the above heuristic.

One can infer from Equation~\eqref{eq:krausoperators} that, up to normalization, the operators $V_\ss$ map Fock states to Fock states, see Equation~\eqref{eq:krausonfock}, and more precisely (except $V_{-+}|0\ket$ which vanishes) map the state $|k\ket$ to either $|k-1\ket$, $|k\ket$ or $|k+1\ket$ depending on $(\sigma,\sigma')$. This is not surprising because of the form of the interaction in the Jaynes-Cummings hamiltonian, see Equation~(\ref{eq:interactionham}). This will thus lead to a natural birth and death Markov chain, see Section~\ref{ssec:classicalchain} for more details. 

We actually construct explicitly a process on $\nn\cup\{-1\}$ that we use to obtain the estimator of the quantum trajectories mentioned at the end of Section~\ref{ssec:qtraj}. To construct this process we introduce the following quantities.
\begin{definition}\label{def:shift} For all $t\in\N^*$, define $s_t:\Omega\to \Z$ as,
\[
s_t(\omega):=\sum_{u=1}^t \delta(\omega_u) \quad \mbox{where} \quad 
\delta(\omega_u) = \left\{\begin{array}{cl} 1 & \mbox{if } \omega_u=(+,-), \\ -1 & \mbox{if } \omega_u=(-,+), \\ 0 & \mbox{otherwise}.\end{array}\right.
\]
The quantity $s_t(\omega)$ amounts for the total left or right shift induced by $W_t(\omega)$ on Fock states.
\end{definition}

\begin{definition}\label{def:extinctiontime} For all $k\in\N^*$, define the extinction times $\tau^k:\Omega\to \N^*\cup\{+\infty\}$ by 
\begin{equation}\label{eq:extinctiontime}
\tau^k(\omega) =\inf\{t\geq 1: W_t|k\ket=0\}
\end{equation}
for any $\omega\in\Omega$. In the non resonant case this is equivalent to $\tau^k= \inf\{t\geq 1: k+s_t(\omega)<0\}.$
\end{definition}

\begin{definition}\label{def:evolvedfockstate} For all $t\in\N^*$, define $N_t:\N\times \Omega\to \N\cup\{-1\}$ as follows. For $\omega\in\Omega$: 
$$N_t(k,\omega)=\left\{ \begin{array}{cl} k+s_t(\omega) & \mbox{if } t<\tau^k(\omega),\\ -1 & \mbox{otherwise.}\end{array}\right.$$
This defines a family of $\mathcal O$ measurable random variables that we shall denote $(N_t(k,.))_{k\in\mathbb N,t\in\mathbb N^*}.$
\end{definition}

\begin{remark}\label{rk:W_t, N_t} If the system is initially in the pure Fock state $|k\ket\bra k|$, then $|N_t(k,\omega)\ket\bra N_t(k,\omega)|$ is the pure Fock state which is reached at time $t$ if we have observed $\omega$ (see Equation~(\ref{eq:evolvedfockstate})), i.e. it is such that $$\ds |N_t(k,\omega)\ket = \frac{W_t(\omega)|k\ket}{\left\| W_t(\omega)|k\ket\right\|},$$ 
if $t<\tau^k(\omega)$.

The state $-1$ plays the role of a cemetery state as if the extinction of the birth and death process occurs and in that case $W_t(\omega)|k\ket=0$. For all $t$ and $k$, the map $N_t(k,\cdot)$ is clearly $\cO_t$ measurable and then $\mathcal O$ mesurable.

As we shall see, the state $0$ of the involved birth and death process is not absorbing (one can jump from $0$ to $1$ with a positive probability). The state $0$ of the birth and death process corresponds actually to the Fock state $\vert 0\ket$. In particular it follows directly from Definition~\ref{def:Prho} that $\P^{|k\ket\bra k|}(\tau^k<\infty)=0$ for all $k$ i.e. starting from a Fock state $k$ we always have $\tau^k=\infty$ almost surely. In the sequel we shall consider $\tau^k$ when $k$ is not the starting point and in this case one can have $\tau^k<\infty$ with non zero probability.
\end{remark}

We can now state our purification theorem.
\begin{theorem}\label{thm:purification} Suppose the non-resonant condition holds and $\beta>0$. Then there exists a random variable $n_\infty$ valued in $\mathbb N$ such that
\begin{enumerate}
\item for any $\rho\in \cJ_1$, the law of $n_\infty$ under $\P^{\rho}$ is given by $\P^{\rho}(n_\infty=n)=\bra n,\rho n\ket$. In particular $\E^{\rho_\inv}\left( |n_\infty\ket\bra n_\infty|\right) =\rho_\inv$,
\item $\tau^{n_\infty}=+\infty$, $\P^{\rho_\inv}$-almost surely,
\item For any initial state $\rho\in\mathcal J_1$ the associated quantum trajectory
$$
\rho_t=\frac{W_t\rho W_t^*}{\tr(W_t\rho W_t^*)}
$$
satisfies, $\P^\rho$-almost surely and in $L^1$, 
\begin{equation}\label{eq:purification}
\lim_{t\to+\infty} \left\| \rho_t - \left|N_t(n_\infty,.)\right\ket\left\bra N_t(n_\infty,.)\right| \right\|_1 =0.
\end{equation}
\end{enumerate}
\end{theorem}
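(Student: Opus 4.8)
\textbf{Proof proposal for Theorem~\ref{thm:purification}.}

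The plan is to build $n_\infty$ as an almost-sure limit of a suitable martingale and then to upgrade convergence of the "Fock-weight vector" to convergence of the full density operator in trace norm. First I would fix a faithful initial state, say $\rho=\rho_\inv$ (and more generally any $\rho$), and look at the diagonal entries $p_t^{(n)}(\omega):=\bra n,\rho_t(\omega) n\ket$. The key observation is that, because each $V_\ss$ maps $|k\ket$ to (a multiple of) $|k-1\ket$, $|k\ket$ or $|k+1\ket$, the sequences $t\mapsto p_t^{(n)}$ are not themselves martingales, but the quantity $M_t^{(n)}(\omega):=\bra n, W_t^*(\omega)W_t(\omega) n\ket / \tr(\rho W_t^*(\omega)W_t(\omega))$ — i.e. the density of $\P^{|n\ket\bra n|}$ with respect to $\P^\rho$ restricted to $\cO_t$, which exists by Proposition~\ref{prop:absolutecontinuity} applied with faithful $\rho$ — is a nonnegative $(\cO_t,\P^\rho)$-martingale. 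Indeed one checks directly from Definition~\ref{def:Prho} that $\E^\rho[M_{t+1}^{(n)}\mid \cO_t]=M_t^{(n)}$, and $\E^\rho[M_t^{(n)}]=1$. By Doob's martingale convergence theorem each $M_t^{(n)}$ converges $\P^\rho$-a.s. and in $L^1$ to some $M_\infty^{(n)}$, and one identifies $p_t^{(n)}=\bra n,\rho n\ket\, M_t^{(n)}$ (up to the convention $\bra n,\rho n\ket\neq 0$), so $p_t^{(n)}\to p_\infty^{(n)}$ a.s.; moreover $\sum_n p_\infty^{(n)}=1$ a.s. by Fatou plus the $L^1$ convergence and $\sum_n M_t^{(n)}\equiv 1$, i.e. no mass escapes to infinity. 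The point (2) that the surviving index satisfies $\tau^{n_\infty}=\infty$ will come from the structure of $N_t$: on the event $\tau^n<\infty$ the weight $M_t^{(n)}$ vanishes for large $t$, so $p_\infty^{(n)}=0$ there.

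The heart of the argument is to show that the limiting weight vector $(p_\infty^{(n)})_n$ is a.s. a Dirac mass at a single (random) integer $n_\infty$ — this is the actual purification statement — and that $\rho_t$ itself, not just its diagonal, collapses onto the corresponding Fock state. For the first part I would argue by contradiction / by a variance-type computation: suppose with positive probability two distinct indices $m\neq n$ both keep positive limiting weight. Using the explicit action of the $V_\ss$ on Fock states from~\eqref{eq:krausoperators} one computes the one-step conditional increment of $p_t^{(m)}p_t^{(n)}$ (or of an entropy-like functional $\sum_n g(p_t^{(n)})$ for strictly concave $g$) under $\P^\rho$; the non-resonant condition guarantees that the relevant Rabi factors $C(N),S(N)$ never produce exact probability $0$ or $1$ for the $\pm 1$ shifts, so each step genuinely spreads or contracts the support, forcing the functional to have a strictly negative (resp. positive) drift unless the support is already a singleton. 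A supermartingale-convergence / Borel–Cantelli argument then shows the support of $p_\infty$ is a.s. a single point. This pins down $n_\infty$, and its law is $\P^\rho(n_\infty=n)=\E^\rho[p_\infty^{(n)}]=\bra n,\rho n\ket$ by the $L^1$ convergence, giving (1).

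For point (3), once we know $p_t^{(n_\infty)}\to 1$ a.s., I would use the general fact that for density operators $\sigma_t$ on a Hilbert space, if some rank-one projection $|e_t\ket\bra e_t|$ satisfies $\bra e_t,\sigma_t e_t\ket\to 1$ then $\|\sigma_t-|e_t\ket\bra e_t|\|_1\to 0$ — this follows from $\|\sigma_t-|e_t\ket\bra e_t|\|_1\le 2\sqrt{1-\bra e_t,\sigma_t e_t\ket}$ (e.g. via the Fuchs–van de Graaf inequality, or elementarily by splitting $\sigma_t$ on $\C e_t\oplus e_t^\perp$). Here $e_t=|N_t(n_\infty,\cdot)\ket$, which by Remark~\ref{rk:W_t, N_t} is exactly the normalized image $W_t|n_\infty\ket/\|W_t|n_\infty\ket\|$ on $\{t<\tau^{n_\infty}\}$, and on $\{\tau^{n_\infty}=\infty\}$ (full measure under $\P^{\rho_\inv}$ by (2), and the general $\rho$ case reduces to this via the absolute-continuity Proposition~\ref{prop:absolutecontinuity} with $\rho_\inv$ faithful) one has $\bra N_t(n_\infty,\cdot),\rho_t N_t(n_\infty,\cdot)\ket\ge p_t^{(n_\infty)}$-type comparisons... actually the cleanest route is: $\bra N_t(n_\infty),\rho_t N_t(n_\infty)\ket$ and $p_t^{(n_\infty)}=\bra n_\infty,\rho_t n_\infty\ket$ are comparable because $W_t$ only shifts the single Fock line $|n_\infty\ket$, so both tend to $1$ together. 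The $L^1$ convergence in~\eqref{eq:purification} then follows from a.s. convergence plus uniform boundedness by $2$ and dominated convergence. The main obstacle I anticipate is the purification step proper — proving the limiting weight vector is a singleton in infinite dimension, where one cannot invoke compactness of the state space as in \cite{MK06,BFPP19}; the replacement must be a careful quantitative drift estimate on a concave functional of $(p_t^{(n)})_n$ using the non-resonant Rabi structure, together with control that no mass leaks to $n=\infty$, which is precisely where Theorem~\ref{thm:Lmixing} and the birth-and-death comparison of Section~\ref{ssec:classicalchain} (and the positivity of $\beta$, ensuring the chain is positive recurrent toward low photon numbers) enter.
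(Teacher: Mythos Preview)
Your scalar-martingale setup is close in spirit to the paper's (its operator $M_t$ has diagonal entries $\langle n,\rho_\inv n\rangle M_t^{(n)}$ in your notation, with base measure $\P^{\rho_\inv}$), and your endgame for part~(3) via the Fuchs--van de Graaf inequality is arguably cleaner than the paper's route through the polar decomposition of $W_t$ and a density argument in $\cJ_1$. But there is one confusion and one genuine gap.

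The confusion: the identification $p_t^{(n)}=\langle n,\rho\,n\rangle M_t^{(n)}$ with $p_t^{(n)}:=\langle n,\rho_t n\rangle$ is false. What \emph{is} true, using that $W_t^*W_t$ is a function of $N$ (Lemma~\ref{lem:Mndiagonal}), is
\[
\big\langle N_t(n,\cdot),\,\rho_t\, N_t(n,\cdot)\big\rangle \;=\; \langle n,\rho\, n\rangle\, M_t^{(n)},
\]
since $W_t^*W_t|n\rangle=\|W_t|n\rangle\|^2\,|n\rangle$. This is exactly the quantity you need for Fuchs--van de Graaf, so once purification gives $M_\infty^{(n)}=\langle n,\rho\, n\rangle^{-1}\mathbf 1_{\{n_\infty=n\}}$, part~(3) follows immediately; there is no need for the hand-wavy comparison with $\langle n_\infty,\rho_t n_\infty\rangle$.

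The genuine gap is the purification step itself, i.e.\ showing $M_\infty^{(m)}M_\infty^{(n)}=0$ a.s.\ for $m\neq n$. Your proposed drift argument on an entropy-type functional of the weight vector is the K\"ummerer--Maassen route; in finite dimension it works because compactness of the state space gives a uniform lower bound on the one-step contraction away from pure states. In the present model no such uniform bound exists: the non-resonant condition only says $\alpha_k\notin\{0,1\}$, but $\alpha_k=\sin^2(\pi\sqrt{\xi k+\eta})\,\xi k/(\xi k+\eta)$ comes arbitrarily close to $0$ along subsequences, so the ``drift'' can be made as small as one likes by letting the trajectory visit large photon numbers, and a Borel--Cantelli closure fails. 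The paper replaces compactness by \emph{recurrence}: Proposition~\ref{prop:recurrenceqtraj} produces, $\P^{\rho_\inv}$-a.s., a single subsequence $(t_j)_j$ along which $N_{t_j}(k,\omega)=k$ for \emph{every} surviving $k$ simultaneously. Along this subsequence the conditional-increment computation~\eqref{eq:conditionalexpectation} freezes the moving targets $N_t(k,\omega)$ at $k$, and passing to the limit together with the distinguishability Lemma~\ref{lem:purificationassumption} yields~\eqref{eq:afterusingpurification}. This recurrence-subsequence trick is the paper's main technical innovation, and your sketch contains no substitute for it.
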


The meaning  of Equation~(\ref{eq:purification}) is that the state $\rho_t$ indeed approaches a pure state which is a Fock state. More precisely, independently of the initial state of $\cS$ the quantum trajectory asymptotically approaches the one which starts from the (random) Fock state $|n_\infty(\omega)\ket\bra n_\infty(\omega)|$. This evolved Fock state $\left|N_t(n_\infty,.)\right\ket\left\bra N_t(n_\infty,.)\right|$ will be our estimator in the proof of Theorem~\ref{thm:invariantmeasure}. Note also that the law of $n_\infty$, given the initial state $\rho$, is actually the Born's rule law of a measurement of the number operator $N$ with respect to the same state $\rho$. Finally, remark that point (2) means that $\P^{\rho_\inv}$-almost surely extinction does not occur when one starts from $n_\infty$. Because $\rho_\inv$ is faithful, and due to Proposition~\ref{prop:absolutecontinuity}, it is also true $\P^\rho$-almost surely for any state $\rho.$

%%%%%%%%%%%%%%%%%%%%%%%%%%%%%%%%%%%%%%%%%%%%%%%%%%%%%%%%%%%%%%%%%%%%
%%%%%%%%%%%%%%%%%%%%%%%%%%%%%%%%%%%%%%%%%%%%%%%%%%%%%%%%%%%%%%%%%%%%
%%%%%%%%%%%%%%%%%%%%%%%%%%%%%%%%%%%%%%%%%%%%%%%%%%%%%%%%%%%%%%%%%%%%
%%%%%%%%%%%%%%%%%%%%%%%%%%%%%%%%%%%%%%%%%%%%%%%%%%%%%%%%%%%%%%%%%%%%

\section{Purification of the trajectory}\label{sec:purification}

%%%%%%%%%%%%%%%%%%%%%%%%%%%%%%%%%%%%%%%%%%%%%%%%%%%%%%%%%%%%%%%%%%%%
%%%%%%%%%%%%%%%%%%%%%%%%%%%%%%%%%%%%%%%%%%%%%%%%%%%%%%%%%%%%%%%%%%%%

\subsection{The quantum trajectory on Fock states}\label{ssec:classicalchain}

Using Equation~(\ref{eq:krausoperators}) a simple computation gives
\begin{equation}\label{eq:krausonfock}
\begin{array}{ll}
\ds V_{--}|k\ket\bra k| V_{--}^* = p_\at(-)\,\left[1-\alpha_k\right] \, |k\ket\bra k|, &
\ds V_{-+}|k\ket\bra k| V_{-+}^* = p_\at(-)\,\alpha_k \, |k-1\ket\bra k-1|, \\[16pt]
\ds V_{+-}|k\ket\bra k| V_{+-}^* = p_\at(+)\,\alpha_{k+1} \, |k+1\ket\bra k+1|, &
\ds V_{++}|k\ket\bra k| V_{++}^* = p_\at(+)\,\left[1-\alpha_{k+1}\right] \, |k\ket\bra k|,
\end{array}
\end{equation}
where $\ds \alpha_k:= \sin^2(\pi\sqrt{\xi k+\eta})\frac{\xi k}{\xi k+\eta}\in[0,1]$. Hence the set $\left\{ |k\ket\bra k|, \ k\in\N\right\}$ is invariant along the trajectory. More precisely,  note that $V_{-+}|0\ket$ vanishes and otherwise
\[
\frac{V_\ss |k\ket}{\|V_\ss |k\ket\|} =  |k+\delta(\sigma,\sigma')\ket,
\]
where $\delta(\sigma,\sigma')$ is the one-step shift introduced in Definition~\ref{def:shift}.

If we identify a Fock state $|k\ket\bra k|$ with the integer $k$, the quantum trajectory is thus identified with the classical Markov chain on $\mathbb N$ whose transition matrix $P$ is given by 
$$p(0,0)=1-p_\at(+)\alpha_{1},\quad p(0,1)=p_\at(+)\alpha_{1},$$
and for $k\in\mathbb N^*$ by
\begin{eqnarray}
p(k, k-1) & = & p_\at(-) \alpha_k, \label{eq:transitionproba-}\\ 
p(k, k+1)  & = & p_\at(+) \alpha_{k+1}, \label{eq:transitionproba+}\\ 
p(k, k)  & = & 1- p(k,k-1)-p(k,k+1),\label{eq:transitionproba0}
\end{eqnarray}
and $p(k,\ell)=0$ otherwise. The quantity $p(k, \ell)$ denotes the probability that a transition from exactly $k$ to $\ell$ photons occurs in the cavity. Note that $p(k,k-1)$ and $p(k-1,k)$ vanish exactly when $k$ is a Rabi resonance. Hence under the non-resonant condition they never vanish. This is exactly the context of a classical birth and death processes. As we shall see, this classical Markov chain will be central in our final result.

Classical results about birth and death Markov chains -- see e.g. \cite{Bre13} -- lead to the following proposition.

\begin{proposition}\label{prop:classicalmarkov}
The Markov chain on $\nn$ with transition kernel $P=(p(k,\ell))_{k,\ell\in \nn}$ is irreducible and positive recurrent. Its unique invariant probability measure is
$$\mu_{Gibbs}(k)=\frac{e^{-k\beta\eps}}{1+e^{-\beta\eps}} = \bra k,\rho_\inv k\ket.$$
Furthermore the Markov chain is aperiodic and for any probability measure $\mu$ over $\nn$,  $\lim_t \mu P^t= \mu_{Gibbs}$ weakly.
\end{proposition}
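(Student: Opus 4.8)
The plan is to invoke the standard theory of birth and death chains on $\nn$, so the main task is to verify the hypotheses of those classical theorems for the explicit kernel $P$ given by \eqref{eq:transitionproba-}--\eqref{eq:transitionproba0}. \textbf{Irreducibility.} Under the non-resonant condition, for every $k\in\nn^*$ one has $\alpha_k\in(0,1]$ and in fact $\alpha_k\neq 0$ (since $\alpha_k=0$ would force $\sin^2(\pi\sqrt{\xi k+\eta})=0$, i.e. $k$ a Rabi resonance); together with $p_\at(\pm)>0$ this gives $p(k,k-1)>0$ for $k\geq 1$ and $p(k,k+1)>0$ for $k\geq 0$. Hence every state communicates with its neighbours, so the chain is irreducible on $\nn$. \textbf{Aperiodicity.} It suffices to exhibit one state $k$ with $p(k,k)>0$; e.g. $p(0,0)=1-p_\at(+)\alpha_1$, and since $p_\at(+)<1$ and $\alpha_1\le 1$ this is strictly positive (one should just note it cannot equal zero: $p_\at(+)\alpha_1=1$ is impossible as $p_\at(+)<1$). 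By irreducibility aperiodicity then propagates to all states.

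\textbf{Positive recurrence and the invariant measure.} For a birth and death chain the candidate reversible measure is built by the detailed balance recursion: set $\pi(0)=1$ and
\[
\pi(k)=\pi(k-1)\,\frac{p(k-1,k)}{p(k,k-1)}=\prod_{j=1}^{k}\frac{p_\at(+)\,\alpha_j}{p_\at(-)\,\alpha_j}=\left(\frac{p_\at(+)}{p_\at(-)}\right)^{\!k}=e^{-k\beta\eps},
\]
using \eqref{eq:transitionproba+}--\eqref{eq:transitionproba-} and the explicit formulas for $p_\at(\pm)$; the ratio telescopes precisely because the same factor $\alpha_j$ appears in the up and down rates. Since $\beta>0$ the series $\sum_{k\ge 0}e^{-k\beta\eps}=(1-e^{-\beta\eps})^{-1}$ converges, so $\pi$ is summable and can be normalised to the probability measure $\mu_{Gibbs}(k)=e^{-k\beta\eps}/(1+e^{-\beta\eps})=\bra k,\rho_\inv k\ket$, the last equality being exactly the definition of $\rho_\inv$ in Theorem~\ref{thm:Lmixing}. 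One then checks $\mu_{Gibbs}$ satisfies detailed balance $\mu_{Gibbs}(k)p(k,\ell)=\mu_{Gibbs}(\ell)p(\ell,k)$ for all $k,\ell$ (trivial unless $|k-\ell|=1$, where it holds by construction), hence $\mu_{Gibbs}$ is $P$-invariant. An irreducible Markov chain on a countable state space that admits an invariant probability measure is positive recurrent, and that invariant probability measure is unique; this gives the first two assertions.

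\textbf{Weak convergence.} An irreducible, aperiodic, positive recurrent Markov chain on a countable state space converges to its stationary distribution: $\lim_t p^{(t)}(k,\ell)=\mu_{Gibbs}(\ell)$ for every $k,\ell$, and more generally $\mu P^t\to\mu_{Gibbs}$ in total variation (hence weakly) for any initial law $\mu$, by the standard convergence theorem for such chains (see e.g. \cite{Bre13}). For this last step one can alternatively invoke the general positive-recurrent convergence theorem directly; either way it is immediate once irreducibility, aperiodicity and positive recurrence are in hand. I do not expect any genuine obstacle here: the only mild subtlety is the bookkeeping at the boundary state $0$ (where the down-rate is absent and $p(0,0)=1-p_\at(+)\alpha_1$), but this is handled automatically by the detailed-balance recursion starting at $k=1$, and one should simply remark explicitly that the non-resonant hypothesis is what guarantees all $\alpha_j>0$ so that no transition degenerates and the chain does not decompose.
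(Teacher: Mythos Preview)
Your proof is correct and follows essentially the same route as the paper's: irreducibility from the non-resonant condition, the reversible measure via the detailed-balance product $\prod_{j=1}^k p(j-1,j)/p(j,j-1)=e^{-k\beta\eps}$, positive recurrence from its summability when $\beta>0$, and aperiodicity from $p(0,0)>0$. One cosmetic slip: having correctly computed $\sum_{k\ge 0} e^{-k\beta\eps}=(1-e^{-\beta\eps})^{-1}$, the normalising constant should read $1-e^{-\beta\eps}$ rather than $1+e^{-\beta\eps}$ (this typo is already present in the statement itself).
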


\begin{proof} An immediate consequence of the non-resonant condition is that for any $k,\ell\in \nn$, there exists $t\in \nn$ such that $P^t(k,\ell)>0$. A direct computation shows that the (not normalized) measure $\mu$ defined for any $k\in\nn$ by
$$\mu(k)=\prod_{j=1}^k\frac{p(j-1,j)}{p(j,j-1)}=\left(\frac{p_\at(+)}{p_\at(-)}\right)^k=e^{-k\beta\eps},$$
is invariant. It is finite therefore the Markov chain is positive recurrent. Irreducibility yields $\mu_{Gibbs}=\mu/\mu(\nn)$ is the unique invariant probability measure. Since $\beta>0$ and $\alpha_1\leq 1$, $p(0,0)>0$ therefore the Markov chain is aperiodic. The convergence follows.
\end{proof}

\begin{remark} 
This proposition can also be seen as a direct consequence of \cite[Theorem~3.4]{Bru14}. For $\rho\in\cJ_1$ let $\mu_\rho$ be the probability measure defined by $\mu_\rho(n)=\langle n,\rho n\rangle$. Then $\langle n,\cL^t(\rho)n\rangle=\mu_\rho P^t$ and the irreducibility and positive recurrence follow from $\lim_t\|\cL^t(\rho)-\rho_{inv}\|_1=0$ remarking that $\mu_{\rho_{inv}}=\mu_{Gibbs}$.
\end{remark}

Equations~(\ref{eq:krausonfock}) together with Definitions~\ref{def:shift},~\ref{def:extinctiontime} and~\ref{def:evolvedfockstate} imply that the above defined birth and death Markov chains encodes the action of $(W_t)_t$ on the set of Fock states. Namely,
\begin{itemize}
\item[i)] Recall the extinction times $\tau^k$ introduced in Definition~\ref{def:extinctiontime}. As already mentioned, for all $k$ we have $\pp^{|k\ket\bra k|} (\tau^k<+\infty)=0$.
\item[ii)] if $\tau^k(\omega)=+\infty$ then, for any $t$, 
\begin{equation}\label{eq:evolvedfockstate}
\frac{W_t(\omega)|k\ket\bra k| W_t^*(\omega) }{\tr\left(W_t(\omega)|k\ket\bra k|W_t^*(\omega)\right)} = |N_t(k,\omega)\ket\bra N_t(k,\omega)|,
\end{equation}
\item[iii)] if $\tau^k(\omega)<+\infty$ then Equation~\eqref{eq:evolvedfockstate} holds for $t<\tau^k(\omega)$. For $t\geq \tau^k(\omega)$, from Definition~\ref{def:evolvedfockstate}, $N_t(k,\omega)=-1$ and we denote by $|-1\ket$ an arbitrary unit vector in $\cH_\cS$ which is not in $\{|k\ket, \ k\in\N\}$.
\end{itemize}

\noindent The following lemma follows immediately from Definition~\ref{def:shift}.
\begin{lemma}\label{lem:translationinvariance} For $k,m\in\mathbb N$ we have $\{\tau^{k}=+\infty\}\subset \{\tau^{k+m}=+\infty\}$ and, on $\{\tau^{k}=+\infty\}$, 
\[
N_t(k+m,.)=N_t(k,.)+m, \quad \forall t\in \nn.
\]
\end{lemma}

We end this section with the consequence of the recurrence property on the quantum trajectory. 
\begin{proposition}\label{prop:recurrenceqtraj} For any $k\in\N$ let  $\ds A_k:= \{\omega\, |\, N_t(k,\omega)=k \mbox{ for infinitely many } t\in\nn\}$,
and let $\ds A=\cup_k A_k$. Then $\P^{\rho_\inv}(A)=1$. Moreover, for any $\omega\in A$ there exists a strictly increasing sequence of integers $(t_j)_j$ such that for all $k$
\begin{itemize}
\item[$\bullet$] if $\tau^k(\omega)=+\infty$ then $\ds N_{t_j}(k,\omega) = k$ for all $j$,
\item[$\bullet$] if $\tau^k(\omega)<+\infty$ then $\ds N_{t_j}(k,\omega) = -1$ for any $j$ large enough.
\end{itemize}
\end{proposition}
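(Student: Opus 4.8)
The plan is to deduce this from the recurrence of the classical birth and death chain of Proposition~\ref{prop:classicalmarkov}, transported to the probability space $(\Omega,\cO,\P^{\rho_\inv})$. First I would observe that, because $\rho_\inv$ is faithful, Definition~\ref{def:Prho} gives $\P^{\rho_\inv}(\Lambda_{(\omega_1,\dots,\omega_t)}) = \tr(\rho_\inv W_t^*W_t)>0$ for every admissible finite word, so $\P^{\rho_\inv}$ has full support on the realizable sequences. More importantly, I would identify the process $(N_t(k,\cdot))_{t\ge 0}$ under $\P^{|k\ket\bra k|}$ with the classical birth and death chain started at $k$: by Equation~\eqref{eq:krausonfock} and the remark in Section~\ref{ssec:classicalchain} the one-step transition probabilities match $p(k,k\pm 1)$ and $p(k,k)$. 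Since $\rho_\inv = \sum_n \bra n,\rho_\inv n\ket\, |n\ket\bra n|$ and, by Remark~\ref{rk:W_t, N_t}, $\tau^n=+\infty$ $\P^{|n\ket\bra n|}$-a.s., the law of $(N_t(n_\infty,\cdot))_t$ under $\P^{\rho_\inv}$ is that of the birth and death chain with initial law $\mu_{Gibbs}$, which is stationary. Then the event $A_k$ is exactly the event that this stationary chain visits the site $k$ infinitely often. By positive recurrence and irreducibility of the chain (Proposition~\ref{prop:classicalmarkov}) every site is visited infinitely often almost surely, so $\P^{\rho_\inv}(A_k)=1$ for each $k$; a fortiori $\P^{\rho_\inv}(A)=1$. (One could also run the argument more directly: on $\{\tau^k=+\infty\}$ the process $N_t(k,\cdot)$ is itself a copy of the chain started at $k$, and by translation invariance — Lemma~\ref{lem:translationinvariance} — it suffices to treat, say, $k=0$, whose associated event has probability one.)

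Next I would extract the sequence $(t_j)_j$. Fix $\omega\in A$; then $\omega\in A_{k_0}$ for some $k_0$, so there is a strictly increasing sequence $(t_j)_j$ with $N_{t_j}(k_0,\omega)=k_0$ for all $j$, and in particular $t_j<\tau^{k_0}(\omega)$ for all $j$, so $\tau^{k_0}(\omega)=+\infty$. Now take an arbitrary $k$. If $\tau^k(\omega)=+\infty$, I distinguish two cases according to whether $k\ge k_0$ or $k<k_0$. If $k\ge k_0$, write $k=k_0+m$; since $\tau^{k_0}(\omega)=+\infty$, Lemma~\ref{lem:translationinvariance} gives $N_{t_j}(k,\omega)=N_{t_j}(k_0,\omega)+m=k_0+m=k$ for all $j$. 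If $k<k_0$, then since $\tau^k(\omega)=+\infty$ another application of Lemma~\ref{lem:translationinvariance} (with roles reversed, $k_0=k+(k_0-k)$) gives $N_{t_j}(k_0,\omega)=N_{t_j}(k,\omega)+(k_0-k)$; but $N_{t_j}(k_0,\omega)=k_0$, hence $N_{t_j}(k,\omega)=k$ for all $j$. If instead $\tau^k(\omega)<+\infty$, then by Definition~\ref{def:evolvedfockstate}, $N_t(k,\omega)=-1$ for all $t\ge\tau^k(\omega)$; since $t_j\to\infty$ there is $j_0$ with $t_j\ge\tau^k(\omega)$ for all $j\ge j_0$, giving $N_{t_j}(k,\omega)=-1$ for $j$ large enough. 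This establishes both bullet points with a single sequence $(t_j)_j$ working simultaneously for all $k$.

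The one point that needs a little care — and which I expect to be the main (minor) obstacle — is making rigorous the identification of $(N_t(k,\cdot))_t$ under $\P^{\rho_\inv}$ restricted to $\{\tau^k=+\infty\}$ with a genuine copy of the classical chain: one must check that the conditional transition probabilities really are the $p(k,\ell)$ of \eqref{eq:transitionproba-}--\eqref{eq:transitionproba0}, uniformly along the trajectory, which follows from \eqref{eq:krausonfock} together with the fact that on a Fock state the increments $\tr(V_\ss|N_t\ket\bra N_t|V_\ss^*)$ depend only on the current integer $N_t$ and coincide with the stated $p(\cdot,\cdot)$. Everything else is then a direct transcription of the classical positive-recurrence statement. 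It is worth noting that only $\P^{\rho_\inv}$-almost sure conclusions are needed here; the extension to $\P^\rho$ for general $\rho$ (used later) will come from Proposition~\ref{prop:absolutecontinuity} and the faithfulness of $\rho_\inv$, but that is not required for the present proposition.
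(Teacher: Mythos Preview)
Your second half --- extracting the single sequence $(t_j)_j$ via Lemma~\ref{lem:translationinvariance} --- is correct and is exactly what the paper does. The first half, however, contains a genuine error.

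The claim $\P^{\rho_\inv}(A_k)=1$ for each $k$ is false. Since $A_k\subset\{\tau^k=+\infty\}$, and under $\P^{|n\ket\bra n|}$ with $n>k$ the recurrent chain $(n+s_t)_t$ visits $0$ almost surely (forcing $k+s_t<0$ at some time), one has $\P^{|n\ket\bra n|}(A_k)=0$ for $n>k$; the decomposition $\P^{\rho_\inv}=\sum_n\bra n,\rho_\inv n\ket\,\P^{|n\ket\bra n|}$ then gives $\P^{\rho_\inv}(A_k)=\sum_{n\le k}\bra n,\rho_\inv n\ket<1$. Your identification of $A_k$ with ``the stationary chain visits $k$ infinitely often'' is the source of the error: $A_k$ concerns $N_t(k,\cdot)$, not $N_t(n_\infty,\cdot)$, and these are different events unless $n_\infty=k$. (Invoking $n_\infty$ here is also circular: it is constructed later in Proposition~\ref{prop:Mn-martingale}, whose proof uses the present proposition.)

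The fix is the paper's: recurrence of the chain started at $k$ gives $\P^{|k\ket\bra k|}(A_k)=1$, hence $\P^{|k\ket\bra k|}(A)=1$ for every $k$, and then the decomposition yields $\P^{\rho_\inv}(A)=\sum_k\bra k,\rho_\inv k\ket\,\P^{|k\ket\bra k|}(A)=1$. The point is that one must pass to the union $A$ \emph{before} averaging over the initial Fock state, not after.
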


\begin{proof} Let $k\in\mathbb N$. From Proposition~\ref{prop:classicalmarkov}, the recurrence property with $\mu=\delta_k$ means that $\ds \P^{|k\ket\bra k|} (A_k)=1$. Hence $\ds \P^{|k\ket\bra k|} (A)=1$ for all $k$. Since $\ds \rho_\inv = \sum_k \bra k,\rho_\inv k\ket |k\ket\bra k|$, using Definition \eqref{def:Prho}, we get
\[
\P^{\rho_\inv}(A)=\sum_k\bra k,\rho_\inv k\ket\P^{|k\ket\bra k|}(A)=1.
\]
Take now $\omega\in A$ and choose $l$ such that $\omega\in A_l$. By definition, there exists a strictly increasing subsequence $(t_j)_j$ such that $N_{t_j}(l,\omega)= l$ for all $j$. In particular $\tau^l(\omega)=+\infty$. Pick now $k\in\N$. If $\tau^k(\omega)<+\infty$, then $N_t(k,\omega)=-1$ for all $t\geq \tau^k$ and the proposition holds, while if $\tau^k(\omega)=+\infty$ the result follows from Lemma~\ref{lem:translationinvariance}.
\end{proof}

\medskip
%%%%%%%%%%%%%%%%%%%%%%%%%%%%%%%%%%%%%%%%%%%%%%%%%%%%%%%%%%%%%%%%%%%%
%%%%%%%%%%%%%%%%%%%%%%%%%%%%%%%%%%%%%%%%%%%%%%%%%%%%%%%%%%%%%%%%%%%%

\subsection{A key martingale}\label{sec:martigale}

This section is devoted to the analysis of a key martingale. While inspired by the finite dimensional situation studied in \cite{BFPP19}, the fact that $\cH_\cS$ has infinite dimension imposes a more careful approach. Some arguments are based on the theory of Banach space valued martingales. On this subject we refer the reader to e.g. \cite{Pi16}.

\begin{definition} For all $t\in\N^*$, let
\begin{equation}\label{def:Mn}
M_t(\omega)=\frac{\rho_\inv^{\frac12}W_t^*(\omega)W_t(\omega)\rho_\inv^{\frac12}}{\tr(W_t^*(\omega)W_t(\omega)\rho_\inv)}.
\end{equation}
\end{definition}

\begin{remark} The state $\rho_\inv$ is a faithful state so $M_t(\omega)$ is a well defined state for any $\omega$.
\end{remark}

The random operator $M_t$ is related to the polar decomposition of $W_t$. Indeed, up to the normalization factor $\sqrt{\tr(W_t^*(\omega)W_t(\omega)\rho_\inv)}$, $M_t^{1/2}$ is the positive part of the polar decomposition of $W_t\rho_\inv^{1/2}$. As such it inherits the following elementary, but important, property.
\begin{lemma}\label{lem:Mndiagonal} For any $t\in \nn$ and $\omega\in \Omega$, the operator $W_t^*(\omega)W_t(\omega)$ is a bounded function of the number operator $N$. As a consequence so is $M_t(\omega)$.
\end{lemma}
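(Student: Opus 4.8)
\textbf{Proof plan for Lemma~\ref{lem:Mndiagonal}.}

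The plan is to prove by induction on $t$ that $W_t^*(\omega)W_t(\omega)$ is a (bounded) function of the number operator $N$, and then deduce the statement for $M_t$ immediately. The base case $t=0$ (or $t=1$) is handled by direct inspection of the Kraus operators in Equation~\eqref{eq:krausoperators}: each $V_{\sigma\sigma'}$ is, up to the unitary phase $\e^{-i\tau\eps N}$, of the form $f(N)$, $f(N)a$, or $g(N)a^*$ for functions $f,g$ of $N$, and since $a^*a=N$ and $aa^* = N+\one$, each product $V_{\sigma\sigma'}^* V_{\sigma\sigma'}$ is a bounded function of $N$. Concretely $V_{--}^*V_{--}=p_\at(-)|C(N)|^2$, $V_{-+}^*V_{-+}=p_\at(-)\,a^*S(N+1)^2a = p_\at(-)\,N\,S(N)^2/\cdots$ (a function of $N$ after using $a^*h(N+1)a = N\,h(N)$), and similarly for the other two; these are exactly the coefficients $1-\alpha_k$, $\alpha_k$, etc., appearing in Equation~\eqref{eq:krausonfock}. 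Boundedness is clear since each coefficient lies in $[0,1]$.

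For the induction step, suppose $W_t^*(\omega)W_t(\omega) = F_t(N)$ for some bounded Borel function $F_t$ (depending on $\omega_1,\dots,\omega_t$). Then
\[
W_{t+1}^*(\omega)W_{t+1}(\omega) = V_{\omega_{t+1}}^* W_t^*(\omega) W_t(\omega) V_{\omega_{t+1}} = V_{\omega_{t+1}}^* F_t(N) V_{\omega_{t+1}}.
\]
The key algebraic fact is that each $V_{\sigma\sigma'}$ \emph{shifts} the number operator by a fixed amount: writing $V_{\sigma\sigma'} = h_{\sigma\sigma'}(N)\,a^{\#}$ with $a^\# \in \{\one, a, a^*\}$ the appropriate shift, one has $a^{\#}\,G(N) = G(N + \delta(\sigma,\sigma'))\,a^{\#}$ for any function $G$, where $\delta(\sigma,\sigma')\in\{-1,0,+1\}$ is the one-step shift of Definition~\ref{def:shift}. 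Hence
\[
V_{\sigma\sigma'}^* F_t(N) V_{\sigma\sigma'} = (a^{\#})^* \overline{h_{\sigma\sigma'}(N)}\, F_t(N)\, h_{\sigma\sigma'}(N)\, a^{\#} = (a^\#)^* a^\# \cdot \big(|h_{\sigma\sigma'}|^2 F_t\big)(N+\delta),
\]
and $(a^\#)^*a^\#$ is itself a function of $N$ (namely $\one$, $N$, or $N+\one$). Therefore $W_{t+1}^*(\omega)W_{t+1}(\omega)$ is again a function of $N$, bounded because $F_t$ is bounded and the coefficients $|h_{\sigma\sigma'}|^2$ are bounded by $1$; this closes the induction.

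Finally, since $\rho_\inv = \e^{-\beta\eps N}/\tr(\e^{-\beta\eps N})$ is a function of $N$, so is $\rho_\inv^{1/2}$, and thus the numerator $\rho_\inv^{1/2} W_t^*(\omega)W_t(\omega)\rho_\inv^{1/2}$ of $M_t(\omega)$ is a product of three commuting functions of $N$, hence a function of $N$; dividing by the scalar normalization $\tr(W_t^*(\omega)W_t(\omega)\rho_\inv)$ preserves this, so $M_t(\omega)$ is a function of $N$. I do not expect any serious obstacle here: the only point requiring a little care is to make the commutation identity $a^\#\, G(N) = G(N+\delta)\, a^\#$ rigorous on the appropriate dense domain and to check it extends to bounded $G$, which is routine on the Fock basis. $\square$
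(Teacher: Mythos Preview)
Your approach is essentially the paper's, which sketches the same induction using the shift relation $af(N)=f(N+1)a$. There is one slip, however: since $W_{t+1}=V_{\omega_{t+1}}W_t$ in the paper's convention, the correct identity is
\[
W_{t+1}^*W_{t+1}=W_t^*\,V_{\omega_{t+1}}^*V_{\omega_{t+1}}\,W_t,
\]
not $V_{\omega_{t+1}}^*W_t^*W_tV_{\omega_{t+1}}$. With the correct ordering your induction hypothesis ``$W_t^*W_t=F_t(N)$'' is too weak to close the step, because you now need $W_t^*G(N)W_t$ to be a function of $N$ for $G=V_{\omega_{t+1}}^*V_{\omega_{t+1}}$. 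The easy fix is either to peel from the other end --- write $W_t^*W_t=V_{\omega_1}^*\big(V_{\omega_t}\cdots V_{\omega_2}\big)^*\big(V_{\omega_t}\cdots V_{\omega_2}\big)V_{\omega_1}$ and induct on the inner block --- or to strengthen the hypothesis to ``for every bounded $G$, $W_t^*G(N)W_t$ is a bounded function of $N$''. Either way the argument goes through exactly as you intend. (Incidentally, the shift relations read $aG(N)=G(N+1)a$ and $a^*G(N)=G(N-1)a^*$, so your sign on $\delta$ is inverted; this does not affect the qualitative conclusion.)
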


\begin{proof} It follows from Equation \eqref{eq:krausoperators} that for any $\sigma,\sigma'\in \{+,-\}$ the operators $V_\ss^*V_\ss$ are bounded functions of $N$. Moreover for any bounded function $f:\nn\to \cc$,  $af(N)=f(N+1)a$. An induction argument then proves that $W_t^*(\omega)W_t(\omega)$ is a function of $N$. 
Since $\rho_\inv$ is also a function of $N$, see Theorem~\ref{thm:Lmixing}, it follows that $M_t(\omega)$ is a function of $N$.
\end{proof}

As an immediate consequence we have the following decomposition result of the measures $\P^\rho$ which we shall use several times in the sequel. We have already used such a decomposition in the particular case of $\P^{\rho_\inv}$ in the proof of Proposition~\ref{prop:recurrenceqtraj}.
\begin{lemma}\label{lem:Prho decomposition} For any $\rho \in\cJ_1$ we have
\[
\P^\rho = \sum_{n\in\N} \bra n, \rho\, n\ket \P^{|n\ket\bra n|}.
\]
\end{lemma}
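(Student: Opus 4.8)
The plan is to prove the decomposition $\P^\rho = \sum_{n\in\N} \bra n, \rho\, n\ket \, \P^{|n\ket\bra n|}$ by testing both sides against cylinder sets, which generate $\cO$. Fix $t\in\N^*$ and $O\in\cO_t$; it suffices to check equality of the two measures on such $O$, since cylinder sets form a $\pi$-system generating $\cO$ and the identity then extends by the monotone class theorem (both sides being finite measures, indeed probability measures). By Definition~\ref{def:Prho}, $\P^\rho(O)=\tr\bigl(\rho\, P_t(O)\bigr)$ where $P_t(O)=\int_O W_t^*(\omega)W_t(\omega)\,\d\mu_t(\omega)$ is the positive operator from Lemma~\ref{lem:POVM}; similarly $\P^{|n\ket\bra n|}(O)=\tr\bigl(|n\ket\bra n|\,P_t(O)\bigr)=\bra n, P_t(O)\, n\ket$.

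The key point is then Lemma~\ref{lem:Mndiagonal}: for each fixed $\omega$, $W_t^*(\omega)W_t(\omega)$ is a bounded function of the number operator $N$, hence diagonal in the Fock basis. Consequently $P_t(O)=\int_O W_t^*(\omega)W_t(\omega)\,\d\mu_t(\omega)$ is also diagonal in the Fock basis — the integral over the finite index set $\cR^t$ (restricted to $O$) of operators each diagonal in $(|n\ket)_n$ remains diagonal, with $\bra m, P_t(O)\, n\ket = 0$ for $m\neq n$. Therefore, for any state $\rho$,
\[
\tr\bigl(\rho\, P_t(O)\bigr) = \sum_{n\in\N} \bra n, \rho\, n\ket \, \bra n, P_t(O)\, n\ket = \sum_{n\in\N} \bra n, \rho\, n\ket \, \P^{|n\ket\bra n|}(O),
\]
which is exactly $\bigl(\sum_n \bra n,\rho\,n\ket\,\P^{|n\ket\bra n|}\bigr)(O)$. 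This establishes the identity on $\cO_t$ for every $t$, hence on $\bigcup_t \cO_t$, and the monotone class argument concludes.

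I do not expect a genuine obstacle here; the only mild subtlety is the interchange of the (possibly infinite) sum over $n$ with the trace, which is justified since all terms are nonnegative, $\rho$ is trace class, and $\|P_t(O)\|\le 1$, so everything converges absolutely and the rearrangement is legitimate. One could also phrase the whole argument slightly more slickly by invoking the POVM $P$ of Lemma~\ref{lem:POVM} directly: $P(O)$ commutes with $N$ for every $O\in\cO$ (being a weak limit of the diagonal $P_t(O)$'s on cylinders, or by the same Fock-diagonality on the generating algebra extended to $\cO$), and then $\tr(\rho P(O))=\sum_n\bra n,\rho\,n\ket\bra n,P(O)n\ket$ for all $O\in\cO$ at once. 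Either route is short; the essential input is that $W_t^*W_t$ is a function of $N$, already recorded in Lemma~\ref{lem:Mndiagonal}.
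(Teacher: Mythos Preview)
Your proof is correct and follows essentially the same route as the paper: both check the identity on elementary cylinder sets (or more generally on $\cO_t$) using that $W_t^*W_t$ is a function of the number operator (Lemma~\ref{lem:Mndiagonal}), hence diagonal in the Fock basis, which yields $\tr(\rho\,W_t^*W_t)=\sum_n\bra n,\rho\,n\ket\bra n,W_t^*W_t\,n\ket$. The only cosmetic difference is that the paper works directly with the elementary cylinders $\Lambda_{(\omega_1,\ldots,\omega_t)}$ rather than integrating to form $P_t(O)$, but the argument is the same.
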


\begin{proof} It suffices to prove equality when applied on elementary cylinder sets $\Lambda_{(\omega_1,\ldots,\omega_t)}$. We then have
\begin{eqnarray*}
\P^\rho (\Lambda_{(\omega_1,\ldots,\omega_t)}) & = & \tr(\rho W_t^*(\omega)W_t(\omega)) \\
 & = & \sum_{n\in\N} \bra n, \rho W_t^*(\omega)W_t(\omega) \, n\ket \\
 & = &  \sum_{n\in\N} \bra n, \rho \, n\ket \bra n, W_t^*(\omega)W_t(\omega) \, n\ket  \\
 & = &  \sum_{n\in\N} \bra n, \rho\, n\ket  \P^{|n\ket\bra n|}  (\Lambda_{(\omega_1,\ldots,\omega_t)}), 
\end{eqnarray*}
where we have used that $W_t^*W_t$ is a function of the number operator in the third line.
\end{proof}

Before turning to the analysis of $(M_t)_t$ let us comment on its meaning. The Fock states are eigenstates of $\rho_\inv$ hence, for all $t\in\N^*$ and $n\in\N$, one has %$\P^{\rho}$-almost surely
\[
\bra n, \rho_\inv^{\frac12}W_t^*W_t\rho_\inv^{\frac12} n\ket = \bra n,\rho_\inv n\ket \times \tr\left( W_t^*W_t |n\ket\bra n| \right).
\]
On the right-hand side, the second factor is precisely the probability that we observe the first $t$ outcomes $(\omega_1,\ldots,\omega_t)$ if the initial state is $|n\ket\bra n|$, while the first one can also be seen as the probability that we obtain $|n\ket\bra n|$ if the initial state is distributed according to the invariant measure $\nu_\inv$, see Equation~\eqref{def:invariantmeasure}. If on $\mathcal J_1 \times \Omega$ we consider the sigma algebras
\begin{equation}\label{def:extendedalgebra}
\cF_t=\mathfrak B\otimes\mathcal O_t,\quad \cF=\mathfrak B\otimes \mathcal O,
\end{equation}
where we recall that $\mathfrak B$ is the Borel sigma-algebra on $\cJ_1$, and introduce
\begin{equation}\label{def:probaMinterpretation}
\P(S,O_t) := \int_{S\times O_t} \tr\left(W_t(\omega)\rho W_t^*(\omega)\right)\,\d\nu_\inv(\rho)\d\mu^{\otimes t}(\omega),
\end{equation}
then $\P$ defines a probability measure on $(\cJ_1\times\Omega,\cF)$ and we can write
\begin{equation}\label{eq:probastateoutcomeinvariantmeasure}
\bra n, \rho_\inv^{\frac12}W_t^*(\omega)W_t(\omega)\rho_\inv^{\frac12} n\ket = \P\left(\{|n\ket\bra n|\}, \Lambda_{(\omega_1,\ldots,\omega_t)}\right).
\end{equation}
On the other hand,
\begin{equation}\label{eq:probaoutcomeinvariantmeasure}
\tr\left( W_t^*(\omega)W_t(\omega) \rho_\inv \right) = \P\left(\cJ_1, \Lambda_{(\omega_1,\ldots,\omega_t)}\right).
\end{equation}
With a slight abuse of notation we can therefore write
\begin{equation}\label{eq:Mtinterpretation}
\bra n, M_t(\omega) n\ket = \frac{\P(n;\omega_1,\ldots,\omega_t)}{\P(\omega_1,\ldots,\omega_t)} = \P(n|\omega_1,\ldots,\omega_t).
\end{equation}
In other words $\bra n, M_t(\omega) n\ket$ can be interpreted as the probability to start in the Fock state $|n\ket\bra n|$ if the initial state is distributed according to the invariant distribution $\nu_\inv$ and we have observed $(\omega_1,\ldots,\omega_t)$ as the $t$ first measurement outcomes.

\begin{remark} The probability measure $\P$ is a special case, when $\nu=\nu_\inv$, of the family $\P_\nu$ we will consider in Section~\ref{sec:invariantmeasure}.% Indeed as we shall see  $\P=\P_{\nu_\inv}$. 
\end{remark}

\begin{lemma}\label{lem:Mn-martingale} The process $(M_t)_t$ is a $\P^{\rho_\inv}$ uniformly bounded martingale in $\cJ_1$. As a consequence there exists $M_\infty$ such that $(M_t)_t$ converges in trace norm, $\mathbb P^{\rho_\inv}$-almost surely and in $L^1$, to $M_\infty$.
\end{lemma}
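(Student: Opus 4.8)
The plan is to verify the martingale property directly from the definition of $M_t$ and the structure of $\mathbb P^{\rho_\inv}$, then invoke a convergence theorem for Banach-space-valued martingales. Uniform boundedness is automatic since each $M_t(\omega)$ is a state, hence $\|M_t(\omega)\|_1 = 1$ for all $t$ and $\omega$; this is where we use that $\rho_\inv$ is faithful (Lemma~\ref{lem:Mndiagonal} and the remark after the definition of $M_t$), so that the denominator $\tr(W_t^*W_t\rho_\inv)$ never vanishes and $M_t$ is a genuine density operator.

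**For the martingale property**, I would work with respect to the filtration $(\cO_t)_t$ under $\mathbb P^{\rho_\inv}$. The key computation is: for a fixed elementary cylinder $\Lambda_{(\omega_1,\ldots,\omega_t)}$ of size $t$, one has $W_{t+1}(\omega) = V_{\omega_{t+1}} W_t(\omega)$, and summing over the $|\cR|$ possible values of $\omega_{t+1}$ one gets, using $\sum_{\sigma\sigma'} V_\ss^* V_\ss = \one$ from Equation~\eqref{eq:tracepreserving},
\[
\sum_{\omega_{t+1}\in\cR} \rho_\inv^{1/2} W_{t+1}^* W_{t+1} \rho_\inv^{1/2}
= \rho_\inv^{1/2} W_t^* \Big(\sum_{\omega_{t+1}} V_{\omega_{t+1}}^* V_{\omega_{t+1}}\Big) W_t \rho_\inv^{1/2}
= \rho_\inv^{1/2} W_t^* W_t \rho_\inv^{1/2}.
\]
Likewise, since $W_t^*W_t$ is a function of $N$ (Lemma~\ref{lem:Mndiagonal}), $\sum_{\omega_{t+1}} \tr(W_{t+1}^*W_{t+1}\rho_\inv) = \tr(W_t^*W_t\rho_\inv)$. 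The measure $\mathbb P^{\rho_\inv}$ assigns mass $\tr(W_{t+1}^*W_{t+1}\rho_\inv)$ to each cylinder $\Lambda_{(\omega_1,\ldots,\omega_{t+1})}$; dividing the previous two displays accordingly shows that the $\mathbb P^{\rho_\inv}$-conditional expectation of the (unnormalized) numerator of $M_{t+1}$ given $\cO_t$, weighted by the cylinder probabilities, reproduces the numerator of $M_t$, and the normalizations match. Concretely, $\ee^{\rho_\inv}[\, \tr(W_{t+1}^*W_{t+1}\rho_\inv) M_{t+1} \mid \cO_t\,] = \tr(W_t^*W_t\rho_\inv) M_t$ on each cylinder, and since $\tr(W_t^*W_t\rho_\inv)$ is $\cO_t$-measurable and positive this gives $\ee^{\rho_\inv}[M_{t+1}\mid\cO_t] = M_t$. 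One must check this conditional expectation makes sense as a Bochner-integrable $\cJ_1$-valued random variable, which follows from the uniform bound $\|M_t\|_1 = 1$ and the separability of $(\cJ(\cH_\cS),\|\cdot\|_1)$.

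**For the convergence**, I would cite the convergence theorem for martingales valued in a Banach space with the Radon-Nikodym property (the reference \cite{Pi16} is already flagged for exactly this purpose): $\cJ(\cH_\cS) = \cJ_1(\cH_\cS)$ is a separable dual space (it is the dual of the compact operators), hence has the Radon-Nikodym property, so any $L^1$-bounded $\cJ_1$-valued martingale converges almost surely in norm and in $L^1$. Uniform boundedness in $\|\cdot\|_1$ is stronger than $L^1$-boundedness, so the hypothesis is met and we obtain $M_\infty$ with $M_t \to M_\infty$ in trace norm, $\mathbb P^{\rho_\inv}$-a.s.\ and in $L^1$.

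**The main obstacle** is purely the infinite-dimensional bookkeeping: making sure the conditional-expectation identity is valid as a Bochner-integral identity rather than just an entrywise (weak) one, and correctly invoking the Radon-Nikodym property of $\cJ_1$ so that a.s.\ \emph{norm} convergence — not merely weak convergence — is obtained. In finite dimension this is trivial; here one genuinely needs the vector-valued martingale convergence theorem. All the algebraic manipulations (the telescoping via $\sum V_\ss^*V_\ss = \one$ and the diagonality of $W_t^*W_t$ in the Fock basis) are routine given Lemmas~\ref{lem:Mndiagonal} and~\ref{lem:Prho decomposition}.
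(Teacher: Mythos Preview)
Your proposal is correct and follows essentially the same route as the paper: verify the martingale property via the stochasticity relation $\sum_{\sigma\sigma'} V_\ss^*V_\ss=\one$, then invoke the vector-valued martingale convergence theorem in $\cJ_1$ (a separable dual, hence with the Radon--Nikodym property), using $\|M_t(\omega)\|_1=1$ for the requisite bounds. One small imprecision: $L^1$-boundedness alone gives almost sure convergence but not $L^1$ convergence---for the latter you need uniform integrability, which here is immediate from the pointwise bound $\|M_t(\omega)\|_1=1$; the paper states this distinction explicitly (citing Theorem~2.9 and Corollary~2.15 of \cite{Pi16}), and your invocation of Lemma~\ref{lem:Mndiagonal} in the martingale computation is unnecessary since the trace identity follows directly from \eqref{eq:tracepreserving}.
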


\begin{proof} The main point is to prove the martingale property. The proof is an immediate adaptation of the one in the finite dimensional case, see e.g. \cite{BFPP19}. We provide it for the reader convenience.

Let $O_t\in \cO_t$, then it follows from Definition~\ref{def:Prho} and Equation~(\ref{def:Mn}) that
\begin{eqnarray*}
\mathbb E_{\rho_\inv}[M_{t+1}\one_{O_t}]&=&\int_{O_t\times \cR} M_{t+1} \d \P^{\rho_\inv}\\ & = & \int_{O_t\times \cR}  \rho_\inv^{\frac12} W_t^*(\omega)V_{\omega_{t+1}}^*V_{\omega_{t+1}}W_t(\omega)\rho_\inv^{1/2}\, \d\mu^{\otimes t+1} (\omega).
\end{eqnarray*}
Then the consistency condition of Equation~\eqref{eq:tracepreserving} implies
\begin{eqnarray*}
  \mathbb E_{\rho_\inv}[M_{t+1}\one_{O_t}] & = & \int_{O_t}  \rho_\inv^{\frac12} W_t^*(\omega)W_t(\omega)\rho_\inv^{1/2}\, \d\mu^{\otimes t} (\omega)\\
 & = & \int_{O_t} M_t \d\P^{\rho_\inv}\\
 &=& \mathbb E_{\rho_\inv}[M_{t}\one_{O_t}].
 \end{eqnarray*}
This proves $(M_t)_t$ is a $\P^{\rho_\inv}$ martingale.

The convergence then follows from the general theory of Banach space valued martingales, see e.g. \cite{Pi16}.  Since the space  of trace class operators is a separable dual (it is separable since $\cH_S$ is separable and it is the dual of the space of compact operators on $\cH_S$) it is sufficient to prove that $(M_t)_t$ is bounded in $L^1$ to get almost sure convergence in trace class norm, and that it is uniformly integrable to get the $L^1$ convergence, see Theorem 2.9 and Corollary 2.15 in \cite{Pi16}. These properties follow directly from the definition of $M_t$ because $\|M_t(\omega)\|_1=\tr(M_t(\omega))=1$ for any $\omega\in \Omega$.
\end{proof}

\begin{remark} As mentioned, the use of $M_t$ is inspired by the finite dimensional case. The main difference is the presence of the two $\rho_\inv^{1/2}$ factors in the numerator. The reason for adding these extra terms is related to the theory of Banach space valued martingales. While we can apply it in the space $\cJ(\cH_\cS)$ of trace class operators, it is not valid in the space of bounded operators $\cB(\cH_\cS)$. The introduction of these factors will require some care later on: we will obtain Equation~\eqref{eq:purification} first for a suitable dense subset of initial states and then proceed via an approximation argument.
\end{remark}

\begin{corollary}\label{cor:Mn-convergence} Let $\rho\in\cJ_1$. Then $(M_t)_t$ converges in trace norm, $\mathbb P^{\rho}$-almost surely and in $L^1$, towards $M_\infty$.
\end{corollary}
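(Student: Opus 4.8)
The plan is to transfer the $\P^{\rho_\inv}$-convergence of $(M_t)_t$ established in Lemma~\ref{lem:Mn-martingale} to an arbitrary $\P^\rho$ by exploiting absolute continuity. The key observation is that the faithful invariant state $\rho_\inv$ has trivial kernel, so every state $\rho$ satisfies $\rho \ll \rho_\inv$. By Proposition~\ref{prop:absolutecontinuity} this yields $\P^\rho \ll \P^{\rho_\inv}$. Almost-sure statements are preserved under passage to an absolutely continuous measure: if a set of full $\P^{\rho_\inv}$-measure has its complement of $\P^{\rho_\inv}$-measure zero, that complement also has $\P^\rho$-measure zero. Hence the $\P^{\rho_\inv}$-almost sure trace-norm convergence $M_t \to M_\infty$ automatically upgrades to $\P^\rho$-almost sure convergence to the same limit $M_\infty$.

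The only remaining point is the $L^1$ convergence under $\P^\rho$. Here I would invoke uniform integrability, exactly as in the $\rho_\inv$ case. Since $\|M_t(\omega)\|_1 = \tr(M_t(\omega)) = 1$ for every $\omega \in \Omega$ and every $t$ — a fact already noted in the proof of Lemma~\ref{lem:Mn-martingale}, and which holds pointwise independently of the reference measure — the family $(M_t)_t$ is trivially bounded, and more importantly uniformly integrable, under $\P^\rho$ as well: a family of random variables with constant trace norm $1$ is uniformly integrable with respect to any probability measure. Combining almost-sure convergence with uniform integrability gives convergence in $L^1(\P^\rho)$ to $M_\infty$.

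The main (and essentially only) subtlety to handle carefully is that I must ensure the limit is genuinely the same $M_\infty$ and not merely some $\P^\rho$-dependent limit. This is guaranteed because $M_\infty$ is defined as the $\P^{\rho_\inv}$-almost sure pointwise limit on a set $G \subset \Omega$ with $\P^{\rho_\inv}(G) = 1$; on this same set the sequence $(M_t(\omega))_t$ converges for each fixed $\omega$, a purely deterministic statement about the sequence of operators indexed by $\omega$, and $\P^\rho(G) = 1$ follows from $\P^\rho \ll \P^{\rho_\inv}$. Thus there is no ambiguity in the identification of the limit across different reference measures. I do not anticipate any real obstacle here, as the corollary is a soft consequence of Lemma~\ref{lem:Mn-martingale} together with Proposition~\ref{prop:absolutecontinuity}; the proof should be short.
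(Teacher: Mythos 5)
Your proof is correct and follows essentially the same route as the paper: faithfulness of $\rho_\inv$ plus Proposition~\ref{prop:absolutecontinuity} gives $\P^\rho\ll\P^{\rho_\inv}$, which transfers the $\P^{\rho_\inv}$-almost sure trace-norm convergence of Lemma~\ref{lem:Mn-martingale} to $\P^\rho$ with the same limit $M_\infty$. The only cosmetic difference is that you phrase the $L^1$ step via uniform integrability (Vitali) while the paper invokes dominated convergence; with the constant bound $\|M_t\|_1=\|M_\infty\|_1=1$ these are the same argument.
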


\begin{proof} Since $\rho_\inv$ is faithful, for all $\rho\in\mathcal J_1$, Proposition~\ref{prop:absolutecontinuity} implies $\mathbb P^\rho\ll\mathbb P^{\rho_\inv}$. Hence the almost sure convergence under $\mathbb P^{\rho_\inv}$ implies the one with respect to $\mathbb P^\rho$. The $L^1$ convergence follows using Lebesgue's dominated convergence theorem and $\|M_t\|_1=\|M_\infty\|_1=1$.
\end{proof}

\begin{remark}
  Following Lemma~\ref{lem:POVM}, actually $M_\infty(\omega)\d\pp^{\rho_{\rm inv}}(\omega)=\rho_{\rm inv}^{\frac12}P(\d \omega)\rho_{\rm inv}^{\frac12}$, where we recall that $P$ is the POVM such that $\d\pp^\rho(\omega)=\tr(\rho P(\d\omega))$ for any $\rho\in \cJ_1$. 
\end{remark}

\begin{remark}\label{rem:vanishingofM_t} We can make the following observation relating $\bra n, M_t(\omega) n\ket$, hence $\P(n|\omega_1,\ldots,\omega_t)$, to the extinction time $\tau^n(\omega)$. Using again that $|n\ket$ is an eigenvector of $\rho_\inv$, it is easy to check that $\bra n, M_t(\omega) n\ket=\P(n|\omega_1,\ldots,\omega_t)=0$ if and only if $t\geq  \tau^n(\omega)$. %, i.e. $\P(n|\omega_1,\ldots,\omega_t)\neq 0$ if and only if $t<\tau^n(\omega)$. 
As a consequence, 
\begin{equation}\label{eq:vanishingMinfini}
\bra n, M_\infty(\omega) n\ket=0, \qquad \forall \omega \in \{\tau^n<+\infty\}.
\end{equation}
\end{remark}

Our main result about $(M_t)_t$, which is the key step to prove Theorem~\ref{thm:purification}, is the following.
\begin{proposition}\label{prop:Mn-martingale} If the non-resonant condition holds and $\beta>0$ there exists a random variable $n_\infty$ valued in $\mathbb N$ such that $$M_\infty=|n_\infty\ket\bra n_\infty|.$$ Moreover for any $\rho\in \cJ_1$,
  $$\d\pp^{\rho}=\tfrac{\bra n_\infty,\rho\,n_\infty\ket}{\bra n_\infty,\rho_\inv\,n_\infty\ket}\d\pp^{\rho_\inv}$$
  and, for any $n\in \nn$, $\pp^\rho(n_\infty=n)=\bra n, \rho\ n \ket$.
\end{proposition}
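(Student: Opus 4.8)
The plan is to exploit the martingale structure established in Lemma~\ref{lem:Mn-martingale} together with the diagonal structure from Lemma~\ref{lem:Mndiagonal} and the recurrence of the underlying birth and death chain from Proposition~\ref{prop:classicalmarkov}. First I would record that, by Lemma~\ref{lem:Mndiagonal}, each $M_t(\omega)$ is a function of $N$, hence diagonal in the Fock basis; writing $m_t^{(n)}(\omega):=\bra n, M_t(\omega)\,n\ket$, the trace-norm convergence $M_t\to M_\infty$ of Lemma~\ref{lem:Mn-martingale} forces $M_\infty$ to be diagonal as well, with entries $m_\infty^{(n)}(\omega):=\bra n, M_\infty(\omega)\,n\ket = \lim_t m_t^{(n)}(\omega)$, and $\sum_n m_\infty^{(n)}(\omega)=1$ $\pp^{\rho_\inv}$-almost surely since $\|M_\infty\|_1=1$ by $L^1$-convergence. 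The goal is then to show that $\pp^{\rho_\inv}$-a.s. exactly one of the $m_\infty^{(n)}(\omega)$ equals $1$ and the rest vanish; the integer $n$ achieving this is the random variable $n_\infty(\omega)$.

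The key mechanism is that along the recurrence times $(t_j)_j$ of Proposition~\ref{prop:recurrenceqtraj} the diagonal entries get "refreshed" and cannot remain spread out. Concretely I would fix $\omega\in A$ (a full-measure set under $\pp^{\rho_\inv}$) and let $\ell=\ell(\omega)$ be such that $\omega\in A_\ell$, so $N_{t_j}(\ell,\omega)=\ell$ for all $j$. Using Lemma~\ref{lem:translationinvariance}, for every $m$ with $\tau^\ell(\omega)=+\infty$ one has $N_{t_j}(\ell+m,\omega)=\ell+m$. The point is to compute $m_{t_j}^{(n)}(\omega)$ using the explicit action of $W_{t_j}$ on Fock states: from the definition \eqref{def:Mn} and the fact that $W_t^*W_t$ is a function of $N$, $m_t^{(n)}(\omega)=\dfrac{\bra n,\rho_\inv n\ket\,\|W_t(\omega)|n\ket\|^2}{\sum_k \bra k,\rho_\inv k\ket\,\|W_t(\omega)|k\ket\|^2}$. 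Along $t=t_j$ the vector $W_{t_j}(\omega)|n\ket$, when non-zero, is proportional to $|N_{t_j}(n,\omega)\ket=|n+s_{t_j}(\omega)\ket$, and the ratios $\|W_{t_j}|n\ket\|^2/\|W_{t_j}|\ell\ket\|^2$ can be written, via a telescoping product of the transition probabilities \eqref{eq:transitionproba-}--\eqref{eq:transitionproba+} (equivalently the Gibbs weights), as $\bra \ell,\rho_\inv \ell\ket/\bra n,\rho_\inv n\ket$ times products of $\alpha_k$'s along the path. Tracking this carefully should show that the probability vector $(m_{t_j}^{(n)}(\omega))_n$ becomes, up to the surviving indices, proportional to a fixed Gibbs-type profile shifted by $s_{t_j}$, so it is "the same shape" at every $t_j$; since it also converges (the whole sequence converges), the limit along $t_j$ — hence the limit — must be concentrated at a single Fock vector. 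The bookkeeping for the extinction of indices $n$ with $\tau^n(\omega)<\infty$ is handled by Remark~\ref{rem:vanishingofM_t}, which already gives $m_\infty^{(n)}(\omega)=0$ there.

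Once $M_\infty=|n_\infty\ket\bra n_\infty|$ is established, the change-of-measure formula follows from the martingale interpretation and Corollary~\ref{cor:Mn-convergence}: for $\rho$ faithful-dominated (which holds for all $\rho$ since $\rho_\inv$ is faithful, by Proposition~\ref{prop:absolutecontinuity}) the Radon–Nikodym density $\mathrm{d}\pp^\rho/\mathrm{d}\pp^{\rho_\inv}$ restricted to $\cO_t$ equals $\tr(\rho W_t^*W_t)/\tr(\rho_\inv W_t^*W_t) = \sum_n \bra n,\rho n\ket\,\|W_t|n\ket\|^2/\sum_n\bra n,\rho_\inv n\ket\|W_t|n\ket\|^2$, which by the same diagonal computation equals $\sum_n \bra n,\rho n\ket\, m_t^{(n)}/\bra n,\rho_\inv n\ket$; letting $t\to\infty$ and using that $m_t^{(n)}\to m_\infty^{(n)}=\ind_{n=n_\infty}$ (dominated convergence, all terms bounded by $1$) yields $\mathrm{d}\pp^\rho/\mathrm{d}\pp^{\rho_\inv}=\bra n_\infty,\rho n_\infty\ket/\bra n_\infty,\rho_\inv n_\infty\ket$. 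Finally, $\pp^\rho(n_\infty=n)=\int_{\{n_\infty=n\}}\mathrm{d}\pp^\rho = \tfrac{\bra n,\rho n\ket}{\bra n,\rho_\inv n\ket}\,\pp^{\rho_\inv}(n_\infty=n) = \tfrac{\bra n,\rho n\ket}{\bra n,\rho_\inv n\ket}\,\bra n,\rho_\inv n\ket = \bra n,\rho n\ket$, once we know $\pp^{\rho_\inv}(n_\infty=n)=\bra n,\rho_\inv n\ket$; this last identity comes from $\ee^{\rho_\inv}(m_\infty^{(n)})=\ee^{\rho_\inv}(m_0^{(n)})=\bra n,\rho_\inv n\ket$ by the martingale property and closedness, together with $m_\infty^{(n)}=\ind_{n_\infty=n}$.

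The main obstacle I anticipate is the second paragraph: showing that the limiting diagonal vector is a single atom rather than merely that its shape is translation-recurrent. One must rule out the possibility that mass leaks to infinity or spreads over several shifted copies; the clean way is to argue that along $(t_j)_j$ the conditional law of $n_\infty$ given $\cO_{t_j}$ (which is exactly $(m_{t_j}^{(n)})_n$) is, on the event $\{\tau^\ell=\infty\}$, a deterministic shift of the Gibbs profile on the surviving indices, so it has fixed "entropy/spread"; but a convergent martingale has conditional laws converging to a point mass along a.e.\ realization precisely because $M_\infty$ must be an extreme point — here one invokes that $M_t^{1/2}$ is the positive part of the polar decomposition of $W_t\rho_\inv^{1/2}$ and that the only fixed points of this renormalized dynamics compatible with the birth-death recurrence are rank-one. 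Making this rigorous without finite-dimensional compactness — using instead the $L^1$ Banach-space martingale convergence and the explicit Fock-basis formulae — is the delicate part.
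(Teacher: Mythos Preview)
Your setup and the final two steps are correct and match the paper: the diagonal structure of $M_t$ in the Fock basis, the identification $m_t^{(n)}=\bra n,M_t n\ket$, the change-of-measure argument via the $\cO_t$-densities and $L^1$ convergence, and the computation of $\pp^\rho(n_\infty=n)$ from $\ee^{\rho_\inv}[m_\infty^{(n)}]=\bra n,\rho_\inv n\ket$ are all essentially what the paper does.

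The gap is precisely where you flag it: the argument that $M_\infty$ has rank one. Your claim that along the recurrence times $t_j$ the profile $(m_{t_j}^{(n)})_n$ is ``the same shape'' --- a fixed Gibbs-type profile up to shift --- is false. At a recurrence time $s_{t_j}(\omega)=0$, the ratio
\[
\frac{m_{t_j}^{(n)}}{m_{t_j}^{(\ell)}}=\frac{\bra n,\rho_\inv n\ket}{\bra \ell,\rho_\inv \ell\ket}\cdot\frac{\|W_{t_j}|n\ket\|^2}{\|W_{t_j}|\ell\ket\|^2}
\]
has the $p_\at(\pm)$ factors cancel, but what remains is a product of factors $\alpha_{k}$ and $1-\alpha_k$ evaluated along two \emph{different} paths (the one shifted by $n-\ell$), and these products genuinely depend on the path $\omega_1,\dots,\omega_{t_j}$ and on $j$. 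So nothing forces the limiting profile to be a point mass; your ``fixed entropy/spread'' heuristic does not apply. The fallback you mention --- that $M_\infty$ must be an extreme point because it is a fixed point of some renormalized dynamics --- is not substantiated.

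The paper's argument supplies exactly the missing mechanism. It uses the bounded-martingale fact that $\ee^{\rho_\inv}\big[|m_t^{(n)}-m_{t+s}^{(n)}|\,\big|\,\cO_t\big]\to 0$ a.s., rewrites this conditional increment via Bayes' rule and the total-probability formula as
\[
m_t^{(n)}\cdot\Big|\sum_k \|V_{\zeta_s}\cdots V_{\zeta_1}|N_t(k,\omega)\ket\|^2\,m_t^{(k)}-\|V_{\zeta_s}\cdots V_{\zeta_1}|N_t(n,\omega)\ket\|^2\Big|,
\]
and then passes to the limit along the recurrence times $t_j$ (so $N_{t_j}(k,\omega)\to k$ for surviving $k$) to obtain, for every finite word $\zeta$,
\[
m_\infty^{(n)}\,m_\infty^{(m)}\,\big(\|V_{\zeta_s}\cdots V_{\zeta_1}|m\ket\|^2-\|V_{\zeta_s}\cdots V_{\zeta_1}|n\ket\|^2\big)=0.
\]
The non-resonant condition enters here through Lemma~\ref{lem:purificationassumption}: for $n\neq m$ one can choose $\zeta$ (e.g.\ $\zeta=(-,+)^{\max(n,m)}$) so that the bracket is nonzero, forcing $m_\infty^{(n)}m_\infty^{(m)}=0$. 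This is the purification step your proposal is missing; without it, nothing in your outline distinguishes different Fock states and rules out a non-degenerate limiting profile.
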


\begin{remark}
  This proposition is the analogue of Proposition 2.2 in \cite{BFPP19}.
\end{remark}

\begin{remark} Coming back to our probabilistic interpretation of $M_t$, the quantity $\bra n, M_\infty(\omega) n\ket$ is the conditional probability $\P(n|\omega)$ to start in the Fock state $|n\ket\bra n|$ if the initial state is distributed according to the invariant distribution $\nu_\inv$ and we have observed the full sequence $\omega$ of measurement outcomes. The result of Proposition~\ref{prop:Mn-martingale} says that $\P(n|\omega)$ is nothing but the delta measure at $n_\infty(\omega)$: the knowledge of the full sequence of outcomes tells us where we started from (if the initial state is distributed according to $\nu_\inv$).
\end{remark}

\noindent The rest of this section is devoted to the proof of Proposition~\ref{prop:Mn-martingale}.

\smallskip

In the finite dimensional situation, the fact that $M_\infty$ is a rank one projection holds if and only if a so-called purification condition holds. In our case the non-resonant condition implies the following lemma which plays the role of purification condition, see the remark below. 
\begin{lemma}\label{lem:purificationassumption}
  Suppose the non-resonant condition holds. Then, for any $n\neq m$ there exists $s\in \nn$ and $w\in \mathcal R^s$ such that
  $$\mathbb P^{|m\ket\bra m|}(\Lambda_w)\neq \mathbb P^{|n\ket\bra n|}(\Lambda_w),$$
  where we recall that $\Lambda_w\in \cO_s$ is the cylinder set associated to $w\in\cR^s$.
\end{lemma}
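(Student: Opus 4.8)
The plan is to distinguish the Fock states $|n\ket$ and $|m\ket$ by the \emph{survival probabilities} of the associated birth and death chain, i.e. by looking at cylinders of the form $\Lambda_w$ with $w=((+,-),\ldots,(+,-))$ of length $s$, which correspond to the event ``$s$ consecutive up-jumps''. Concretely, from Definition~\ref{def:Prho} and Equation~\eqref{eq:krausonfock} one has
\[
\pp^{|k\ket\bra k|}(\Lambda_w)=\tr\big(W_s^*(w)W_s(w)|k\ket\bra k|\big)=p_\at(+)^{\,s}\,\alpha_{k+1}\alpha_{k+2}\cdots\alpha_{k+s},
\]
since applying $V_{+-}$ to $|j\ket\bra j|$ produces $p_\at(+)\alpha_{j+1}|j+1\ket\bra j+1|$ and the trace picks up the product of the corresponding factors. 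So it suffices to find $s\in\nn$ with
\[
\alpha_{n+1}\cdots\alpha_{n+s}\neq \alpha_{m+1}\cdots\alpha_{m+s}.
\]

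The key observation is that under the non-resonant condition $\xi k+\eta\ne k'^2$ for all $k,k'\in\nn^*$, so that $\alpha_k=\sin^2(\pi\sqrt{\xi k+\eta})\,\frac{\xi k}{\xi k+\eta}>0$ for all $k\geq 1$; hence all the finite products above are strictly positive, and we may reason with their ratios. Assume, without loss of generality, $m>n$ and write $m=n+p$ with $p\geq 1$. If the two products were equal for \emph{every} $s$, then taking the ratio of the identity for $s$ and for $s-1$ would force $\alpha_{n+s}=\alpha_{m+s}=\alpha_{n+p+s}$ for all $s\geq 1$, i.e. the sequence $(\alpha_k)_{k\geq n+1}$ would be periodic with period $p$. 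The plan is to rule this out: periodicity of $(\alpha_k)$ is incompatible with the non-resonant (indeed with the mere non-constancy) behaviour of $\alpha_k$. One clean way is to use that $\alpha_k/\big(\tfrac{\xi k}{\xi k+\eta}\big)=\sin^2(\pi\sqrt{\xi k+\eta})$, and that the rational factor $\tfrac{\xi k}{\xi k+\eta}$ is \emph{strictly increasing} in $k$ and tends to $1$; if $(\alpha_k)$ were $p$-periodic then $\sin^2(\pi\sqrt{\xi(k+p)+\eta})=\frac{\xi k}{\xi k+\eta}\cdot\frac{\xi(k+p)+\eta}{\xi(k+p)}\,\sin^2(\pi\sqrt{\xi k+\eta})$, and letting $k\to\infty$ along a subsequence where $\sin^2(\pi\sqrt{\xi k+\eta})$ stays bounded away from $0$ (such a subsequence exists since $\sqrt{\xi(k+1)+\eta}-\sqrt{\xi k+\eta}\to 0$, so the fractional parts of $\sqrt{\xi k+\eta}$ are dense in, or at least accumulate throughout, $[0,1]$) forces the two strictly monotone rational correction factors to agree in the limit — which is fine — but comparing two different such subsequences, or more simply differentiating the resulting functional equation $\sin^2(\pi\sqrt{\xi k+\eta})$ against its shift, yields a contradiction with the non-resonant condition. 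I will instead give the short self-contained argument below.

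Here is the argument I expect to write. Suppose for contradiction that $\alpha_{n+1}\cdots\alpha_{n+s}=\alpha_{m+1}\cdots\alpha_{m+s}$ for all $s\ge 1$; as noted this gives $\alpha_{k}=\alpha_{k+p}$ for all $k\ge n+1$ with $p=m-n\ge 1$. Writing $g(k)=\frac{\xi k}{\xi k+\eta}$ (strictly increasing, $g(k)\uparrow 1$) and $h(k)=\sin^2(\pi\sqrt{\xi k+\eta})\in[0,1]$, periodicity reads $g(k)h(k)=g(k+p)h(k+p)$, hence $h(k+p)=\frac{g(k)}{g(k+p)}h(k)$ with $\frac{g(k)}{g(k+p)}<1$ strictly. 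Iterating, $h(k+\ell p)=\big(\prod_{j=0}^{\ell-1}\frac{g(k+jp)}{g(k+(j+1)p)}\big)h(k)=\frac{g(k)}{g(k+\ell p)}h(k)\to g(k)h(k)$ as $\ell\to\infty$. Thus along the arithmetic progression $k,k+p,k+2p,\ldots$ the values $h(k+\ell p)=\sin^2(\pi\sqrt{\xi(k+\ell p)+\eta})$ converge. But $\sqrt{\xi(k+(\ell+1)p)+\eta}-\sqrt{\xi(k+\ell p)+\eta}\to 0$, so the differences of consecutive arguments of $\sin^2(\pi\cdot)$ tend to $0$ while their sum diverges; by Weyl-type equidistribution (or simply because $\sin^2(\pi x_\ell)$ cannot converge when $x_{\ell+1}-x_\ell\to 0^+$ and $x_\ell\to\infty$ unless the limit is attained, which would force $\sqrt{\xi(k+\ell p)+\eta}$ to become eventually integer — impossible in the non-resonant case) the sequence $(h(k+\ell p))_\ell$ is dense in $[0,1]$, contradicting convergence to a single value $g(k)h(k)<1$. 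Hence such an $s$ exists, and for that $s$, with $w=((+,-))^{\otimes s}$, we get $\pp^{|m\ket\bra m|}(\Lambda_w)\ne \pp^{|n\ket\bra n|}(\Lambda_w)$, proving the lemma.

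The main obstacle is the elementary number-theoretic step: showing that $p$-periodicity of $(\alpha_k)$ is impossible. The cleanest route may be the monotonicity-plus-limit argument sketched above (periodicity forces $h(k+\ell p)\to g(k)h(k)$, contradicting that $h$ cannot have a limit along an arithmetic progression with step going to zero), but one should double-check that the non-resonant hypothesis is exactly what prevents $\sqrt{\xi k+\eta}$ from being eventually integer and hence $h$ from stabilising at $0$; if $\xi$ is such that $\sqrt{\xi k+\eta}\in\nn$ infinitely often that would be a fully resonant situation, explicitly excluded. Alternatively, and perhaps more robustly, one can avoid limits entirely: pick $s=1$ first and note $\alpha_{n+1}\ne \alpha_{m+1}$ already works unless $\sin^2(\pi\sqrt{\xi(n+1)+\eta})\,g(n+1)=\sin^2(\pi\sqrt{\xi(m+1)+\eta})\,g(m+1)$; since $g(n+1)\ne g(m+1)$ this forces $\sin^2(\pi\sqrt{\xi(n+1)+\eta})\ne \sin^2(\pi\sqrt{\xi(m+1)+\eta})$ \emph{unless} both sines vanish, which is the resonant case — so in fact $s=1$ suffices whenever neither $n+1$ nor $m+1$ is a Rabi resonance, and since there are no Rabi resonances at all, $s=1$ always works. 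This last remark short-circuits the whole argument and is what I would actually present.
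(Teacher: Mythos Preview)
Your final ``short-circuit'' argument, which you say is what you would actually present, contains a genuine logical error. From $\alpha_{n+1}=\alpha_{m+1}$, i.e.\ $g(n+1)h(n+1)=g(m+1)h(m+1)$ with $g(n+1)\neq g(m+1)$, you correctly deduce that \emph{either} $h(n+1)\neq h(m+1)$ \emph{or} $h(n+1)=h(m+1)=0$. But the first alternative is not a contradiction: nothing prevents two distinct nonzero sine values from compensating the two distinct $g$-values to produce the same product. Concretely, take $\eta=0$ and $\sqrt{\xi}=\sqrt{2}+1$; then $\sqrt{\xi}(\sqrt{2}-\sqrt{1})=1$, hence $\sin^2(\pi\sqrt{2\xi})=\sin^2(\pi\sqrt{\xi})$ and $\alpha_1=\alpha_2>0$, while $\xi=3+2\sqrt{2}$ is irrational so the system is non-resonant. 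Thus $s=1$ does \emph{not} always suffice with your up-jump word $w=(+,-)$.

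Your earlier periodicity-plus-density argument can be made rigorous (the step ``$x_{\ell+1}-x_\ell\to 0$, $x_\ell\to\infty$ implies $\{x_\ell\}$ is dense in $[0,1]$'' is elementary and correct, and it does yield the required contradiction), but it is far heavier than necessary. The paper instead uses \emph{down}-jumps: taking $m>n$ and $w=(-,+)^m$, one gets from Equation~\eqref{eq:krausonfock} that
\[
\pp^{|m\ket\bra m|}(\Lambda_w)=\prod_{j=1}^{m}p_\at(-)\alpha_j>0
\]
by the non-resonant condition, while starting from $|n\ket$ the chain reaches $|0\ket$ after $n$ steps and the next $V_{-+}$ annihilates it, so $\pp^{|n\ket\bra n|}(\Lambda_w)=0$. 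No analysis of the sequence $(\alpha_k)_k$ is needed.
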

\begin{proof}
 Assume without loss of generality that $m>n$, and set $w=(-,+)^m$. On the one hand, the non resonant condition implies $\pp^{|m\ket\bra m|}(\Lambda_w)=\prod_{s=0}^{m-1}p(m-s,m-s-1)>0$. On the other hand $\pp^{|n\ket\bra n|}(\Lambda_w)=\prod_{s=0}^{m-1}p(n-s,n-s-1)$ and $m>n$ implies $p(n-m,n-m-1)=0$ so that $\pp^{|n\ket\bra n|}(w)=0$ and the lemma is proved.
\end{proof}

\begin{remark}\label{rem:purificationcondition} In finite dimension the purification assumption is, see \cite{BFPP19}: 
\begin{quote} {\bf (Pur)} if $\pi$ is an orthogonal projection such that, for any $\omega\in\Omega$ and $t\geq 1$, $\ds \pi W_t^*(\omega)W_t(\omega)\pi =\lambda \pi$ for some $\lambda\in \R_+$ then $\pi$ is of rank one. 
\end{quote}
If $\pi$ is an orthogonal projection, the condition ``$\pi W_t^*(\omega)W_t(\omega)\pi =\lambda \pi$ for some $\lambda\in \R_+$'' is equivalent to ``there exists $\lambda\in\R_+$ such that $\|W_t(\omega)\phi\|^2=\lambda$  for any unit vector $\phi\in\Ran(\pi)$''. That this holds for all $t$ and $\omega$ therefore means that any initial state in the range of $\pi$ leads to the same distribution of probability on the sequence of outcomes, i.e. if $\phi,\psi \in\Ran(\pi)$ then $\P^{|\phi\ket\bra\phi|}=\P^{|\psi\ket\bra\psi|}$. 

The above lemma says that any two different Fock states lead to different probability distributions on the sequence of outcomes. In our case that is sufficient to ensure purification as expressed in Theorem~\ref{thm:purification}.
\end{remark}

In the sequel we shall use computation techniques which are reminiscent of the usual Bayes' rule and total probability formula. Although it is not necessary, we believe that it will make the computation more transparent if write it in a classical probability theory language. We have already introduced
\[
\P(n;\omega_1,\ldots,\omega_t)=\P\left(|n\ket\bra n|, \Lambda_{(\omega_1,\ldots,\omega_t)}\right),\ \P(\omega_1,\ldots,\omega_t)=\P\left(\cJ_1,\Lambda_{(\omega_1,\ldots,\omega_t)}\right)
 \ \mbox{and} \ \P(n|\omega_1,\ldots,\omega_t),
\]
with the obvious interpretation, see Equations~(\ref{eq:probastateoutcomeinvariantmeasure})-(\ref{eq:Mtinterpretation}). We further denote, for any $n\in\N$, $s,t\in\mathbb N$ and $(\omega_1,\ldots,\omega_{t+s})\in\cR^{t+s}$,  
\begin{equation}\label{def:probanotation1}
\P(\omega_{t+1},\ldots,\omega_{t+s}|\omega_1,\ldots,\omega_t)\equiv \frac{\P(\omega_1,\ldots,\omega_{t+s})}{\P(\omega_1,\ldots,\omega_t)} = \frac{\tr(W_{t+s}(\omega)^*W_{t+s}(\omega)\rho_\inv)}{\tr(W_t^*(\omega)W_t(\omega)\rho_\inv)} , 
\end{equation}
\begin{equation}\label{def:probanotation2}
\P(n;\omega_{t+1},\ldots,\omega_{t+s}|\omega_1,\ldots,\omega_t)\equiv \frac{\P(n;\omega_1,\ldots,\omega_{t+s})}{\P(\omega_1,\ldots,\omega_t)} = \frac{\bra n,\rho_\inv^{1/2} W_{t+s}(\omega)^*W_{t+s}(\omega)\rho_\inv^{1/2}n\ket}{\tr(W_t^*(\omega)W_t(\omega)\rho_\inv)} , 
\end{equation}
\begin{equation}\label{def:probanotation3}
\P(\omega_1,\ldots,\omega_t | n) \equiv \frac{\P(n;\omega_1,\ldots,\omega_t)}{\P(n)} = \frac{\bra n,\rho_\inv^{1/2} W_t(\omega)^*W_t(\omega)\rho_\inv^{1/2}n\ket}{\bra n, \rho_\inv n\ket} = \bra n,W_t^*(\omega) W_t(\omega) n\ket,
\end{equation}
and
\begin{equation}\label{def:probanotation4}
\P(\omega_{t+1},\ldots,\omega_{t+s}|n;\omega_1,\ldots,\omega_t)\equiv \frac{\P(n;\omega_1,\ldots,\omega_{t+s})}{\P(n;\omega_1,\ldots,\omega_t)} = \frac{\bra n,\rho_\inv^{1/2} W_{t+s}(\omega)^*W_{t+s}(\omega)\rho_\inv^{1/2}n\ket}{\bra n,\rho_\inv^{1/2} W_{t}(\omega)^*W_{t}(\omega)\rho_\inv^{1/2}n\ket}, 
\end{equation}
where we have used that $|n\ket$ is an eigenstate of $\rho_\inv$ in Equation~(\ref{def:probanotation3}), and where Equation~(\ref{def:probanotation4}) makes sense only if $\P(n;\omega_1,\ldots,\omega_t)\neq 0$. Note that all the other terms are always well defined because $\rho_\inv$ is faithful. By convention we have fixed $\pp(\omega_1,\dotsc,\omega_t)=1$ for $t=0$.

All these quantities have transparent probabilistic interpretations, remembering that the initial state is distributed according to the invariant distribution $\nu_\inv$. For example the quantity $\P(\omega_1,\ldots,\omega_t | n)$ is the probability to observe the first $t$ outcomes $\omega_1,\ldots,\omega_t$ if the initial state is $|n\ket$ while $\P(\omega_{t+1},\ldots,\omega_{t+s}|n;\omega_1,\ldots,\omega_t)$ is the probability to further obtain the $s$ outcomes $\omega_{t+1},\ldots,\omega_{t+s}$ if we have previously observed $\omega_1,\ldots,\omega_t$ and started from state $|n\ket$. Note that if the latter event occurs then at time $t$ the system is in the state $|N_t(n,\omega)\ket$, see Equation~(\ref{eq:evolvedfockstate}), so that one expects that
\begin{equation}\label{eq:probastartingtimet}
\P(\omega_{t+1},\ldots,\omega_{t+s}|n;\omega_1,\ldots,\omega_t) = \P\left(\omega_{t+1},\ldots,\omega_{t+s} \big| N_t(n,\omega)\right).
\end{equation}
The following lemma shows that this is indeed the case and that the above notations are consistent in the sense that they indeed allow to use the classical Bayes' rule and total probability law.
\begin{lemma} Let $s,t\in\mathbb N^*$ and $(\omega_1,\ldots,\omega_{t+s})\in\cR^{t+s}$.
\begin{enumerate}
\item If $n\in\N$ is such that $\P(n|\omega_1,\ldots,\omega_t)\neq 0$, i.e. $t<\tau^n(\omega)$, then Equality~(\ref{eq:probastartingtimet}) holds and we have the following Bayes' rule
\begin{equation}\label{eq:bayes}
\frac{\P(n|\omega_1,\ldots,\omega_{t+s})\times \P(\omega_{t+1},\ldots,\omega_{t+s}|\omega_1,\ldots,\omega_t)}{\P(n|\omega_1,\ldots,\omega_t)}  = \P\left(\omega_{t+1},\ldots,\omega_{t+s} \big| N_t(n,\omega)\right).
\end{equation}
\item The total probability formula 
\begin{equation}\label{eq:totalproba}
\P(\omega_{t+1},\ldots,\omega_{t+s}|\omega_1,\ldots,\omega_t) = \sum_k  \P\left(\omega_{t+1},\ldots,\omega_{t+s} \big| N_t(k,\omega)\right) \P(k|\omega_1,\ldots,\omega_t)
\end{equation}
holds. In the right-hand side, $N_t(k,\omega)=-1$ when $\P(k|\omega_1,\ldots,\omega_t)= 0$, i.e. $t\geq \tau^k(\omega)$. We therefore set $\pp(\omega_{t+1},\dotsc,\omega_{t+s}|-1)=0$ without loss of generality.
\end{enumerate}
\end{lemma}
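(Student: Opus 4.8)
The plan is to translate every symbol in the statement back into squared norms of the cocycle $W_t$ and then exploit its multiplicative structure together with the fact, recorded in Remark~\ref{rk:W_t, N_t}, that $W_t(\omega)$ sends the Fock vector $|n\rangle$ to a scalar multiple of $|N_t(n,\omega)\rangle$ (or to $0$, precisely when $t\ge\tau^n(\omega)$). First I would fix notation: write $\tilde W_s := V_{\omega_{t+s}}\cdots V_{\omega_{t+1}} = W_s(\theta^t\omega)$, so that $W_{t+s}(\omega) = \tilde W_s\, W_t(\omega)$, and note that, by the very definition of $\P(\,\cdot\mid m)$ applied to a word of length $s$ (see \eqref{def:probanotation3}), one has $\P(\omega_{t+1},\ldots,\omega_{t+s}\mid m) = \langle m,\tilde W_s^*\tilde W_s\, m\rangle = \|\tilde W_s\, m\|^2$. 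Since $\rho_\inv$ is diagonal in the Fock basis (Theorem~\ref{thm:Lmixing}), $\rho_\inv^{1/2}|n\rangle = \langle n,\rho_\inv n\rangle^{1/2}|n\rangle$, whence $\P(n;\omega_1,\ldots,\omega_t) = \langle n,\rho_\inv n\rangle\,\|W_t(\omega) n\|^2$, which is nonzero exactly when $t<\tau^n(\omega)$; moreover all the denominators $\P(\omega_1,\ldots,\omega_m)$ are strictly positive, because $\rho_\inv$ is faithful and, under the non-resonant condition, $W_m(\omega)|k\rangle\neq0$ for every $k\ge m$ (Definition~\ref{def:extinctiontime}).

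For part (1), assuming $t<\tau^n(\omega)$, Remark~\ref{rk:W_t, N_t} gives $W_t(\omega)|n\rangle = \|W_t(\omega) n\|\,|N_t(n,\omega)\rangle$, hence $\|W_{t+s}(\omega) n\|^2 = \|\tilde W_s\,W_t(\omega) n\|^2 = \|W_t(\omega) n\|^2\,\|\tilde W_s\,N_t(n,\omega)\|^2$. Plugging this into the defining ratio \eqref{def:probanotation4} yields $\P(\omega_{t+1},\ldots,\omega_{t+s}\mid n;\omega_1,\ldots,\omega_t) = \|\tilde W_s\,N_t(n,\omega)\|^2 = \P(\omega_{t+1},\ldots,\omega_{t+s}\mid N_t(n,\omega))$, which is exactly \eqref{eq:probastartingtimet}. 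The Bayes' rule \eqref{eq:bayes} is then a purely formal identity: substituting the definitions \eqref{def:probanotation1} and \eqref{eq:Mtinterpretation} into its left-hand side, the two occurrences of $\P(\omega_1,\ldots,\omega_{t+s})$ and of $\P(\omega_1,\ldots,\omega_t)$ cancel and what remains is $\P(n;\omega_1,\ldots,\omega_{t+s})/\P(n;\omega_1,\ldots,\omega_t) = \P(\omega_{t+1},\ldots,\omega_{t+s}\mid n;\omega_1,\ldots,\omega_t)$, which equals the right-hand side by \eqref{eq:probastartingtimet}.

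For part (2), I would compute $\P(\omega_1,\ldots,\omega_{t+s}) = \tr(W_{t+s}^*W_{t+s}\rho_\inv)$ by evaluating the trace in the Fock basis: since $\rho_\inv=\sum_k\langle k,\rho_\inv k\rangle|k\rangle\langle k|$ and $W_{t+s}^*W_{t+s}$ is bounded, $\P(\omega_1,\ldots,\omega_{t+s}) = \sum_k\langle k,\rho_\inv k\rangle\,\|W_{t+s}(\omega) k\|^2$, an absolutely convergent series of nonnegative terms. For $k$ with $t\ge\tau^k(\omega)$ the term is $0$ (already $W_t(\omega)|k\rangle=0$); for $k$ with $t<\tau^k(\omega)$, exactly as in part (1), $\langle k,\rho_\inv k\rangle\|W_{t+s}(\omega) k\|^2 = \P(k;\omega_1,\ldots,\omega_t)\,\P(\omega_{t+1},\ldots,\omega_{t+s}\mid N_t(k,\omega))$. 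Summing and dividing by $\P(\omega_1,\ldots,\omega_t)$ turns $\P(k;\cdots)$ into $\P(k\mid\omega_1,\ldots,\omega_t)$; since the latter vanishes whenever $t\ge\tau^k(\omega)$ (Remark~\ref{rem:vanishingofM_t}), with the convention $\P(\,\cdot\mid -1)=0$ for the cemetery state the sum may be extended to all $k\in\N$, which is \eqref{eq:totalproba}.

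None of this is deep; the points that need a little care are recognizing that $\P(\omega_{t+1},\ldots,\omega_{t+s}\mid m)$ is attached to the \emph{shifted} word, so it is $\tilde W_s = W_s(\theta^t\omega)$ and not $W_{t+s}$ that enters it; justifying the term-by-term evaluation of $\tr(W_{t+s}^*W_{t+s}\rho_\inv)$ in the Fock basis, which is immediate from positivity of the summands and $\rho_\inv\in\cJ_1$; and keeping the extinction-time bookkeeping straight so that the zero-probability contributions are correctly collapsed onto the cemetery state $-1$.
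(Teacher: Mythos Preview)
Your proof is correct and follows essentially the same approach as the paper: both exploit the cocycle factorization $W_{t+s}=\tilde W_s W_t$, the fact that $|n\rangle$ is an eigenvector of $\rho_\inv$, and Remark~\ref{rk:W_t, N_t} to reduce everything to squared norms $\|W_t(\omega)n\|^2$ and $\|\tilde W_s\,N_t(n,\omega)\|^2$; part~(2) is then obtained in both cases by expanding the trace $\tr(W_{t+s}^*W_{t+s}\rho_\inv)$ in the Fock basis. Your write-up is in fact slightly more explicit about why the denominators are nonzero and about the extinction-time bookkeeping for the cemetery state, but these are exactly the points the paper handles more tersely.
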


\begin{proof} 1) Using that $|n\ket$ is an eigenstate of $\rho_\inv$ we immediately have from Equation~(\ref{def:probanotation4})
\begin{eqnarray*}
\P(\omega_{t+1},\ldots,\omega_{t+s}|n;\omega_1,\ldots,\omega_t) & = & \frac{\bra n, W_{t+s}(\omega)^*W_{t+s}(\omega) n\ket}{\bra n, W_{t}(\omega)^*W_{t}(\omega) n\ket} \\
 & = & \frac{\bra n, W_t(\omega)^*(V_{\omega_{t+s}}\cdots V_{\omega_{t+1}})^* V_{\omega_{t+s}}\cdots V_{\omega_{t+1}} W_t(\omega) n\ket}{\|W_{t}(\omega) n\ket\|^2} \\
 & = & \bra N_t(n,\omega), (V_{\omega_{t+s}}\cdots V_{\omega_{t+1}})^* V_{\omega_{t+s}}\cdots V_{\omega_{t+1}} N_t(n,\omega)\ket \\
 & = & \P\left(\omega_{t+1},\ldots,\omega_{t+s} \big| N_t(n,\omega)\right),
\end{eqnarray*}
where we have used Remark~\ref{rk:W_t, N_t} in the third line.
Bayes' rule~(\ref{eq:bayes}) is then a direct computation left to the reader using Equations~(\ref{eq:Mtinterpretation}),~(\ref{def:probanotation1}) and~(\ref{def:probanotation4})

2) We can write
\begin{eqnarray*}
\P(\omega_{t+1},\ldots,\omega_{t+s}|\omega_1,\ldots,\omega_t) & = & \frac{\tr(W_{t+s}(\omega)^*W_{t+s}(\omega)\rho_\inv)}{\tr(W_t^*(\omega)W_t(\omega)\rho_\inv)} \\
 & = & \sum_k \frac{\bra k,\rho_\inv^{1/2} W_{t+s}(\omega)^*W_{t+s}(\omega)\rho_\inv^{1/2}k\ket}{\tr(W_t^*(\omega)W_t(\omega)\rho_\inv)} \\
 & = & \sum_k \frac{\P(k;\omega_1,\ldots,\omega_{t+s})}{\P(\omega_1,\ldots,\omega_t)} \\
 & = & \sum_k \frac{\P\left(\omega_{t+1},\ldots,\omega_{t+s} \big| N_t(k,\omega)\right) \P(k;\omega_1,\ldots,\omega_t)}{\P(\omega_1,\ldots,\omega_t)} \\
 & = & \sum_k  \P\left(\omega_{t+1},\ldots,\omega_{t+s} \big| N_t(k,\omega)\right) \P(k|\omega_1,\ldots,\omega_t),
\end{eqnarray*}
where we have used the fact that $(|k\ket)_k$ is an orthonormal basis to obtain the 2nd line, Equation~(\ref{def:probanotation2}) for the 3rd one, Equations~(\ref{def:probanotation4})-(\ref{eq:probastartingtimet}) for the 4th one and Equation~(\ref{eq:Mtinterpretation}) for the last one.
\end{proof}

\noindent \emph{Proof of Proposition~\ref{prop:Mn-martingale}.} Let $n\in\N$. The sequence $(\langle n,M_t n\rangle)_t$ is then a bounded real martingale. Reproducing some arguments of the proof of Proposition 2.2 of \cite{BFPP19}, in particular Equation~(17) and the arguments leading to the two following displayed equations, we have that for all $s\in\mathbb N$
\begin{equation}\label{eq:conditionalexpectationlimit}
\lim_{t\to\infty}\mathbb E^{\rho_\inv}[\vert \langle n,M_t n\rangle-\langle n,M_{t+s} n\rangle\vert \, \vert \, \mathcal O_t]=0, \quad \pp^{\rho_{\inv}}\mbox{-a.s.}
\end{equation}

Let us write this conditional expectation using Equations~(\ref{eq:probastateoutcomeinvariantmeasure})-(\ref{eq:Mtinterpretation}) and~(\ref{def:probanotation1})-(\ref{def:probanotation4}). For any $s,t\in\N^*$ and $\omega\in\Omega$ we have
\[
\vert \langle n,M_t (\omega) n\rangle- \langle n,M_{t+s} (\omega) n\rangle\vert=\left| \P(n | \omega_1,\ldots,\omega_t) - \P(n | \omega_1,\ldots,\omega_{t+s})\right|.
\]
Taking the conditional expectation, for any $s,t\in\N^*$ and $\pp^{\rho_\inv}$-almost all $\omega_1,\ldots,\omega_t$, we get
\begin{eqnarray}\label{eq:conditionalexpectation}
 \lefteqn{ \E^{\rho_\inv} \left[\left|\langle n,M_t n\rangle-\langle n,M_{t+s} n\rangle\right| | \mathcal O_t\right] (\omega_1,\ldots,\omega_t)}\nonumber\\ 
 & = &   \int_{\cR^s} \Big| \P(n | \omega_1,\ldots,\omega_t) - \P(n | \omega_1,\ldots,\omega_t,\zeta_1,\ldots,\zeta_{s}) \Big| \times \P(\zeta_1,\ldots,\zeta_s| \omega_1,\ldots,\omega_t)\, \d\mu^{\otimes s}(\zeta_1,\ldots,\zeta_s) \nonumber \\
 & = &  \int_{\cR^s} \P(n | \omega_1,\ldots,\omega_t) \Big| \P(\zeta_1,\ldots,\zeta_s| \omega_1,\ldots,\omega_t) - \P(\zeta_1,\ldots,\zeta_s|N_t(n,\omega)) \Big| \, \d\mu^{\otimes s}(\zeta_1,\ldots,\zeta_s) \nonumber \\
 & = &  \int_{\cR^s} \P(n | \omega_1,\ldots,\omega_t) \Big| \sum_k \P(\zeta_1,\ldots,\zeta_s | N_t(k,\omega))\P(k|\omega_1,\ldots,\omega_t) - \P(\zeta_1,\ldots,\zeta_s|N_t(n,\omega)) \Big| \nonumber \\
 &  & \hspace*{130mm} \d\mu^{\otimes s}(\zeta_1,\ldots,\zeta_l) \nonumber \\
 & = &  \int_{\cR^s} \bra n,M_t(\omega)n\ket \Big| \sum_k \|V_{\zeta_s}\cdots V_{\zeta_1} |N_t(k,\omega)\ket\|^2 \bra k,M_t(\omega)k\ket - \|V_{\zeta_s}\cdots V_{\zeta_1} |N_t(n,\omega)\ket\|^2 \Big|  \\
 & &  \hspace*{130mm} \d\mu^{\otimes s}(\zeta_1,\ldots,\zeta_l),\nonumber
\end{eqnarray}
where we have successively used Equation~(\ref{def:probanotation1}), Bayes' rule~(\ref{eq:bayes}), the total probability formula~(\ref{eq:totalproba}) and finally Equations~(\ref{eq:Mtinterpretation}) and~(\ref{def:probanotation3}) in the last step.

Using Equation~(\ref{eq:conditionalexpectationlimit}), since $\mu$ has finite support and all the terms in Equation~(\ref{eq:conditionalexpectation}) are non-negative, for $\P^{\rho_\inv}$-almost all $\omega$ we have
\begin{equation}\label{eq:pointwiselimit}
\lim_{t\to \infty} \bra n,M_t(\omega)n\ket \Big| \sum_k \|V_{\zeta_s}\cdots V_{\zeta_1} | N_t(k,\omega)\ket\|^2 \bra k,M_t(\omega)k\ket - \|V_{\zeta_s}\cdots V_{\zeta_1} |N_t(n,\omega)\ket\|^2 \Big| =0
\end{equation}
for all $s\in\N^*$ and $\zeta_1,\ldots,\zeta_s\in \cR$.

Now, on the one hand, Corollary~\ref{cor:Mn-convergence} implies for any $k$ and $\P^{\rho_\inv}$-almost every $\omega$, 
$$\lim_{t\to\infty} \bra k,M_t(\omega)k\ket = \bra k,M_\infty(\omega)k\ket.$$
On the other hand, it follows from Proposition~\ref{prop:recurrenceqtraj} that for $\P^{\rho_\inv}$-almost every $\omega$ there exists a subsequence $(t_j)_j$ such that for all $k$ either $\lim_j N_{t_j}(k,\omega)=k$, if $\tau^k(\omega)=+\infty$, or $\lim_j N_{t_j}(k,\omega)=-1$, if $\tau^k(\omega)<+\infty$. Since $\bra k,M_\infty(\omega)k\ket=0$ whenever $\tau^k(\omega)<+\infty$, see Equation~(\ref{eq:vanishingMinfini}), for $\pp^{\rho_\inv}$-almost all $\omega$ there exists a subsequence $(t_j)_j$ such that for any $s\in \nn^*$ and $\zeta_1,\dotsc,\zeta_s\in \cR$,
\[
\lim_{j\to+\infty}  \|V_{\zeta_s}\cdots V_{\zeta_1} | N_{t_j}(k,\omega)\ket\|^2 \bra k,M_{t_j}(\omega)k\ket =  \|V_{\zeta_s}\cdots V_{\zeta_1} | k\ket\|^2 \bra k,M_\infty(\omega)k\ket, \quad \forall k
\]
where  we have set $\|V_{\zeta_s}\cdots V_{\zeta_1} |-1\ket\|=0$.

Considering the limit in Equation~(\ref{eq:pointwiselimit}) along this subsequence we hence obtain that, $\P^{\rho_\inv}$-almost surely, for all $s\in\N^*$ and $\zeta_1,\ldots,\zeta_s\in \cR$
\[
\bra n,M_\infty(\omega)n \ket  \Big| \sum_k \|V_{\zeta_s}\cdots V_{\zeta_1} \vert k\ket\|^2 \bra k,M_\infty(\omega)k\ket - \|V_{\zeta_s}\cdots V_{\zeta_1} \vert n\ket\|^2 \Big| =0
\]
where we have used Proposition~\ref{prop:recurrenceqtraj} for both the $N_{t_j}(k,\omega)$ and $N_{t_j}(n,\omega)$. Note that this expression is also valid if we had $\lim_j N_{t_j}(k,\omega)=-1$ for some $k\in \nn$. 

At this stage all the computations and arguments are independent of the choice of $n$ which has been fixed at the beginning of the proof. Thus, for any $n,m\in\N$,
\begin{eqnarray*}
\bra n,M_\infty(\omega)n \ket  \Big( \sum_k \|V_{\zeta_s}\cdots V_{\zeta_1} \vert k\ket\|^2 \bra k,M_\infty(\omega)k\ket - \|V_{\zeta_s}\cdots V_{\zeta_1} \vert n\ket\|^2 \Big) & = & 0,\\
\bra m,M_\infty(\omega)m \ket  \Big( \sum_k \|V_{\zeta_s}\cdots V_{\zeta_1} \vert k\ket\|^2 \bra k,M_\infty(\omega)k\ket - \|V_{\zeta_s}\cdots V_{\zeta_1} \vert m\ket\|^2 \Big) & = & 0.
\end{eqnarray*}
Multiplying the first identity by $\bra m,M_\infty(\omega)m \ket$, the second by  $\bra n,M_\infty(\omega)n \ket$ and substracting the two this yields, for any $m,n\in\N$, $s\in\N^*$ and $\zeta_1,\ldots,\zeta_s$,
\begin{equation}\label{eq:beforeusingpurification}
\bra n,M_\infty(\omega)n \ket \times \bra m,M_\infty(\omega)m \ket \times \Big( \|V_{\zeta_s}\cdots V_{\zeta_1} \vert m\ket\|^2 - \|V_{\zeta_s}\cdots V_{\zeta_1} \vert n\ket\|^2 \Big) =0.
\end{equation}
At this stage we can invoke Lemma~\ref{lem:purificationassumption} which plays the role of the purification assumption. Indeed,
$$\|V_{\zeta_s}\cdots V_{\zeta_1}\vert m\ket\|^2=\pp^{|m\ket\bra m|}(\Lambda_{\zeta_1,\dotsc,\zeta_s}).$$
Taking $(\zeta_1,\dotsc,\zeta_s)=w$ such that $\pp^{|m\ket\bra m|}(\Lambda_w)\neq \pp^{|n\ket\bra n|}(\Lambda_w)$ as allowed by Lemma~\ref{lem:purificationassumption}, Equation~\eqref{eq:beforeusingpurification} implies that, for any $n\neq m$,
\begin{equation}\label{eq:afterusingpurification}
\bra n,M_\infty(\omega)n \ket \times \bra m,M_\infty(\omega)m \ket =0, \quad \P^{\rho_\inv}\as
\end{equation}
Therefore, for $\pp^{\rho_\inv}$-almost all $\omega$, there exists at most one $n\in \nn$ such that $\bra n,M_\infty(\omega)n \ket\neq 0$. Since $M_t\in \cJ_1$ for all $t\in \nn$ and $(M_t)_t$ converges in trace norm we also have, $\pp^{\rho_\inv}$-almost surely, that $M_\infty\in \cJ_1$. Hence there exists a unique $n$ such that $\bra n,M_\infty(\omega)n \ket\neq 0$ and actually $\bra n,M_\infty(\omega)n \ket=1$. We denote it $n_\infty(\omega)$. Then Cauchy-Schwartz inequality for positive semi-definite operators implies $M_\infty=|n_\infty\ket\bra n_\infty|$.

It remains to prove that $\pp^\rho(n_\infty=k)=\bra k,\rho\, k\ket$ for any $k\in \nn$ and $\rho\in \cJ_1$, and that 
$$\d\pp^{\rho}=\tfrac{\bra n_\infty,\rho\  n_\infty\ket}{\bra n_\infty,\rho_\inv n_\infty\ket}\d\pp^{\rho_\inv}.$$
We first prove $\P^{\rho_\inv}(n_\infty=k)=\bra k,\rho_\inv k\ket$. For any $k$, the quantity $\bra k, n_\infty\ket^2$ defines a Bernoulli random variable with parameter $\P^{\rho_\inv}(n_\infty=k)$. Hence 
\[
\P^{\rho_\inv}(n_\infty=k) = \E^{\rho_\inv}\left(\bra k, n_\infty\ket^2\right) = \E^{\rho_\inv}\left( \bra k,M_\infty k\ket\right).
\]
But, by definition of $M_t$, we have $\E^{\rho_\inv}(M_t) =\rho_\inv$ for all $t$, and because $(M_t)_t$ converges in $L^1$ to $M_\infty$ we also have $\E^{\rho_\inv}(M_\infty) =\rho_\inv$ and hence $\P^{\rho_\inv}(n_\infty=k)=\bra k,\rho_\inv k\ket$.

Now, for any $n\in \nn$, because $|n \ket$ is an eigenvector of $\rho_\inv$ it follows from Definitions~\ref{def:Prho} and \ref{def:Mn} that $\d\pp^{|n\ket\bra n|}|_{\cO_t}=\frac{\bra n,M_t n\ket}{\bra n, \rho_\inv n\ket}\d\pp^{\rho_\inv}|_{\cO_t}$. Since $(M_t)_t$ converges in $L^1(\pp^{\rho_\inv})$ and in trace norm to $M_\infty$, 
\begin{equation}\label{eq:change measure proof}
\d\pp^{|n\ket\bra n|}=\frac{\bra n,M_\infty n\ket}{\bra n, \rho_\inv n\ket}\d\pp^{\rho_\inv}.
\end{equation} 
Using Lemma~\ref{lem:Prho decomposition} and $\bra n,M_\infty n\ket=\mathbf{1}_{n,n_\infty}$ we get the desired change of measure formula
$$\d\pp^{\rho}=\tfrac{\bra n_\infty,\rho\  n_\infty\ket}{\bra n_\infty,\rho_\inv n_\infty\ket}\d\pp^{\rho_\inv}.$$
Finally, $\pp^\rho(n_\infty=k)=\frac{\bra k,\rho\  k\ket}{\bra k,\rho_\inv k\ket}\pp^{\rho_\inv}(n_\infty=k)=\bra k,\rho\  k\ket$ which ends the proof.
\hfill\qed

\medskip
%%%%%%%%%%%%%%%%%%%%%%%%%%%%%%%%%%%%%%%%%%%%%%%%%%%%%%%%%%%%%%%%%%%%
%%%%%%%%%%%%%%%%%%%%%%%%%%%%%%%%%%%%%%%%%%%%%%%%%%%%%%%%%%%%%%%%%%%%

\subsection{Purification}\label{ssec:purificationproof}

This section is devoted to the proof of Theorem~\ref{thm:purification}. 

Let $\ds S_t = \frac{W_t^*W_t}{\tr(W_t^*W_t\rho_\inv)}$ and $U_t$ denote the partial isometry in the polar decomposition of $W_t$ so that
\[
W_t= \sqrt{\tr(W_t^*W_t \rho_\inv)}\, U_tS_t^{1/2}.
\]
Recall that, $\P^{\rho_\inv}$-almost surely, $\tr(W_t^*W_t \rho_\inv)\neq 0$ so that $S_t$ is indeed well defined.

We remark that $S_t$ is directly related to $M_t$, namely $M_t=\rho_\inv^{1/2}S_t\rho_\inv^{1/2}$. Actually, the same argument as in Lemma~\ref{lem:Mndiagonal} shows that $S_t$ is a bounded function of the number operator $N$. As a consequence we actually have, $\P^{\rho_\inv}$-almost surely and for all $t$,  
\begin{equation}\label{eq:commutation} 
M_t^{1/2}=S_t^{1/2}\rho_\inv^{1/2}.
\end{equation}
With a slight abuse of notation, we also denote by $S_t$ the associated function of the number operator. In particular $S_t^{1/2}\vert n\rangle\langle n\vert S_t^{1/2}=S_t(n)\vert n\rangle\langle n\vert$ for some scalar $S_t(n)$ so for any Fock state one has $W_t|n\ket\bra n| W_t^*= \tr(W_t^*W_t \rho_\inv)S_t(n) U_t|n\ket \bra n| U_t^*$. Hence, if $W_t|n\ket \neq 0$ we have
\begin{equation}\label{eq:WtonFock}
\frac{W_t|n\ket\bra n| W_t^*}{\tr(W_t|n\ket\bra n| W_t^*)} = U_t|n\ket\bra n| U_t^*.
\end{equation}

\begin{lemma}\label{lem:asymptoticfock} The random variable $n_\infty$ defined in Proposition~\ref{prop:Mn-martingale} satisfies $\tau^{n_\infty}=+\infty$, $\P^{\rho_\inv}$-almost surely. As a consequence, for all $t\in\mathbb N^*$ one has 
\begin{equation}\label{eq:asymptoticfock}
U_t |n_\infty\ket\bra n_\infty| U_t^* = \left| N_t(n_\infty,.)\ket \bra N_t(n_\infty,.) \right|, 
\end{equation}
$\P^{\rho_\inv}$-almost surely.
\end{lemma}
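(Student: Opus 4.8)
The plan is to deduce $\tau^{n_\infty}=+\infty$ directly from Proposition~\ref{prop:Mn-martingale} together with the vanishing property recorded in Equation~\eqref{eq:vanishingMinfini}, and then to read off the operator identity by combining Equations~\eqref{eq:evolvedfockstate} and~\eqref{eq:WtonFock}. Since all the ingredients are already available, this is essentially a bookkeeping argument; the only genuine point of care is passing from the deterministic identities valid for each fixed Fock index to their versions with the random index $n_\infty$, which I would handle via the decomposition $\Omega=\bigsqcup_{n\in\nn}\{n_\infty=n\}$ modulo a $\pp^{\rho_\inv}$-null set.

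First I would fix $n\in\nn$ and argue on the event $\{n_\infty=n\}$. On this event $\langle n,M_\infty(\omega)n\rangle=1\neq 0$ by Proposition~\ref{prop:Mn-martingale}. But Equation~\eqref{eq:vanishingMinfini} asserts that $\langle n,M_\infty(\omega)n\rangle=0$ for every $\omega\in\{\tau^n<+\infty\}$; contrapositively, $\langle n,M_\infty(\omega)n\rangle\neq 0$ forces $\tau^n(\omega)=+\infty$. Hence $\{n_\infty=n\}\subset\{\tau^n=+\infty\}$ up to a $\pp^{\rho_\inv}$-null set, so that $\tau^{n_\infty}=+\infty$ holds on $\{n_\infty=n\}$. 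Because $n_\infty$ is $\nn$-valued, $\bigcup_{n\in\nn}\{n_\infty=n\}$ has full $\pp^{\rho_\inv}$-measure, and therefore $\tau^{n_\infty}=+\infty$ $\pp^{\rho_\inv}$-almost surely.

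For the operator identity~\eqref{eq:asymptoticfock} I would again work on each $\{n_\infty=n\}$. Since $\tau^{n_\infty}=+\infty$, in particular $\tau^n>t$ for every $t\geq 1$, so $W_t|n\rangle\neq 0$ by Definition~\ref{def:extinctiontime}; Equation~\eqref{eq:WtonFock} then gives $\tfrac{W_t|n\rangle\langle n|W_t^*}{\tr(W_t|n\rangle\langle n|W_t^*)}=U_t|n\rangle\langle n|U_t^*$. On the other hand, as $\tau^n(\omega)=+\infty$, Equation~\eqref{eq:evolvedfockstate} (case ii) of the discussion following Proposition~\ref{prop:classicalmarkov}) gives $\tfrac{W_t|n\rangle\langle n|W_t^*}{\tr(W_t|n\rangle\langle n|W_t^*)}=|N_t(n,\omega)\rangle\langle N_t(n,\omega)|$. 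Comparing the two expressions yields $U_t|n\rangle\langle n|U_t^*=|N_t(n,\omega)\rangle\langle N_t(n,\omega)|$ on $\{n_\infty=n\}$ for all $t$, and summing over $n\in\nn$ gives Equation~\eqref{eq:asymptoticfock} $\pp^{\rho_\inv}$-almost surely.

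I do not expect a substantial obstacle here: the first assertion is a one-line consequence of Equation~\eqref{eq:vanishingMinfini} once one knows $M_\infty=|n_\infty\rangle\langle n_\infty|$, and the second follows by matching the two normalized actions of $W_t$ on the Fock state $|n_\infty\rangle$. The only thing to keep track of is that every step is performed inside the event $\{n_\infty=n\}$ so that the random index can legitimately be substituted into identities proved for deterministic Fock indices.
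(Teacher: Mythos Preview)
Your proof is correct and, if anything, more direct than the paper's. You use Equation~\eqref{eq:vanishingMinfini} in contrapositive form: since $M_\infty=|n_\infty\ket\bra n_\infty|$ gives $\bra n_\infty,M_\infty n_\infty\ket=1\neq 0$, the vanishing property forces $\tau^{n_\infty}=+\infty$; the decomposition over $\{n_\infty=n\}$ makes the substitution of the random index rigorous. The paper instead argues dynamically: it writes the trajectory started from $\rho_\inv$ as $\rho_t=U_tM_tU_t^*$, uses the trace-norm convergence $M_t\to|n_\infty\ket\bra n_\infty|$ to get $\|\rho_t-U_t|n_\infty\ket\bra n_\infty|U_t^*\|_1\to 0$, and then reads off $\|U_t|n_\infty\ket\|=1$ (hence $W_t|n_\infty\ket\neq 0$) from the fact that $\rho_t$ is a state, first for large $t$ and then for all $t$ by monotonicity of vanishing along the $W_t$'s. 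Your route bypasses this limit argument and the ``large $t$ implies all $t$'' step entirely; the paper's route has the minor side benefit of already exhibiting the convergence $\rho_t\to U_t|n_\infty\ket\bra n_\infty|U_t^*$ for the special initial state $\rho_\inv$, which fits the narrative of Section~\ref{ssec:purificationproof} but is not logically needed for the lemma as stated. The derivation of Equation~\eqref{eq:asymptoticfock} from Equations~\eqref{eq:evolvedfockstate} and~\eqref{eq:WtonFock} is identical in both arguments.
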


\begin{proof} Consider the quantum trajectory associated to the initial state $\rho_\inv$. Then for all $t$ one has
\[
\rho_t = \frac{W_t\rho_\inv W_t^*}{\tr(W_t^*W_t\rho_\inv)} = U_tS_t^{1/2} \rho_\inv S_t^{1/2} U_t^* = U_tM_tU_t^*.
\]
Since $U_t$ is a partial isometry, it follows from Lemma~\ref{lem:Mn-martingale} and Proposition~\ref{prop:Mn-martingale} that
\[
\lim_{t\to+\infty} \|\rho_t - U_t|n_\infty\ket\bra n_\infty| U_t^*\|_1=0, \quad \P^{\rho_\inv}-a.s.
\]
Because $\rho_t$ is a state one has $\|U_t|n_\infty\ket\|=1$, i.e. $W_t|n_\infty\ket\neq 0$, for all $t$ (this holds apriori only for $t$ large enough but hence for all $t$ by definition of the $W_t$'s). The lemma then follows from Equations~(\ref{eq:extinctiontime})-(\ref{eq:evolvedfockstate})  and~(\ref{eq:WtonFock}).
\end{proof}

We will prove that Theorem~\ref{thm:purification} holds with $n_\infty$ given by Proposition~\ref{prop:Mn-martingale}. Its law was given in the proposition so together with the last lemma it proves the first part of the theorem. In view of Equation~(\ref{eq:asymptoticfock}) it only remains to prove that for any initial state $\rho\in\cJ_1$ one has 
\[
\lim_{t\to\infty} \|\rho_t - U_t|n_\infty\ket\bra n_\infty| U_t^*\|_1=0, \quad \P^\rho-a.s.
\]
Using the polar decomposition of $W_t$ we may write $\ds \rho_t=U_t \frac{S_t^{1/2}\rho S_t^{1/2}}{\tr(S_t\rho)} U_t^*$ and because $U_t$ is a partial isometry it actually suffices to prove that
\begin{equation}\label{eq:purificationequiv}
\lim_{t\to\infty} \left\|\frac{S_t^{1/2}\rho S_t^{1/2}}{\tr(S_t\rho)} - |n_\infty\ket\bra n_\infty| \right\|_1=0, \quad \P^\rho-a.s.
\end{equation}
Formally, using Equation~(\ref{eq:commutation}), one would like to write $S_t^{1/2}\rho S_t^{1/2}$ as $M_t^{1/2}\rho_\inv^{-1/2}\rho\rho_\inv^{-1/2}M_t^{1/2}$ and then use the result of Section~\ref{sec:martigale} about $(M_t)_t$. However $\rho_\inv^{-1/2}\rho\rho_\inv^{-1/2}$ may not be well defined as a bounded operator because $\rho_\inv^{-1/2}$ is not. We thus first consider states of the form $\rho=\rho_\inv^{1/2}A\rho_\inv^{1/2}$ where $A\in \cB(\cH_\cS)$. Because the set $\{\rho_\inv^{1/2}A\rho_\inv^{1/2},\, A\in\cB(\cH_\cS)\mbox{ s.t. }A\geq 0 \mbox{ and }\tr[\rho_\inv A]=1\}$ is dense in $\cJ_1$ we can then use an approximation argument. 

\begin{remark} Finite rank operators are dense in $\cJ_1$ and because the Fock vectors $|n\ket$ form an orthonormal basis so is ${\rm Span}\{|n\ket\bra m|, \ n,m\in\N\}$. This subspace is contained in $\{\rho_\inv^{1/2}A\rho_\inv^{1/2},\, A\in\cB(\cH_\cS)\}$ so that the latter is indeed dense in $\cJ_1$.
\end{remark}

\begin{lemma}\label{lem:purificationdensesubset} Let $\rho\in\cJ_1$ and suppose there exists $A\in \cB(\cH_\cS)$ such that $\rho=\rho_\inv^{1/2} A \rho_\inv^{1/2}$. Then 
\[
\lim_{t\to\infty} S_t^{1/2}\rho S_t^{1/2} = \frac{\bra n_\infty, \rho \, n_\infty\ket}{\bra n_\infty,\rho_\inv n_\infty\ket} \times |n_\infty\ket\bra n_\infty|  , \quad \P^{\rho_\inv}-a.s,
\]
where the convergence is in trace norm. In particular $$\ds \lim_{t\to \infty} \tr(S_t\rho) = \frac{\bra n_\infty, \rho \, n_\infty\ket}{\bra n_\infty,\rho_\inv n_\infty\ket},\quad \P^{\rho_\inv}-a.s.$$
so that Equation~\eqref{eq:purificationequiv} holds.
\end{lemma}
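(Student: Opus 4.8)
The plan is to reduce the statement to the martingale convergence $M_t\to M_\infty=|n_\infty\ket\bra n_\infty|$ of Lemma~\ref{lem:Mn-martingale} and Proposition~\ref{prop:Mn-martingale}. First I would use \eqref{eq:commutation}, namely $M_t^{1/2}=S_t^{1/2}\rho_\inv^{1/2}$, together with the fact that $S_t$ and $\rho_\inv$ commute (both being bounded functions of the number operator $N$, hence so are their square roots), to get also $M_t^{1/2}=\rho_\inv^{1/2}S_t^{1/2}$, $\P^{\rho_\inv}$-a.s. Writing $\rho=\rho_\inv^{1/2}A\rho_\inv^{1/2}$, this yields, $\P^{\rho_\inv}$-a.s. and for every $t$,
\[
S_t^{1/2}\rho S_t^{1/2}=\bigl(S_t^{1/2}\rho_\inv^{1/2}\bigr)\,A\,\bigl(\rho_\inv^{1/2}S_t^{1/2}\bigr)=M_t^{1/2}\,A\,M_t^{1/2},
\]
so it suffices to prove $M_t^{1/2}AM_t^{1/2}\to M_\infty^{1/2}AM_\infty^{1/2}$ in trace norm and then to identify the limit.

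For the convergence I would invoke continuity of the square root: by the Powers--St\o rmer inequality $\|P^{1/2}-Q^{1/2}\|_2^2\le\|P-Q\|_1$ for positive trace-class operators $P,Q$ (with $\|\cdot\|_2$ the Hilbert--Schmidt norm), applied to $P=M_t$ and $Q=M_\infty$, the $\P^{\rho_\inv}$-a.s. convergence $\|M_t-M_\infty\|_1\to0$ of Proposition~\ref{prop:Mn-martingale} gives $\|M_t^{1/2}-M_\infty^{1/2}\|_2\to0$. Splitting
\[
M_t^{1/2}AM_t^{1/2}-M_\infty^{1/2}AM_\infty^{1/2}=\bigl(M_t^{1/2}-M_\infty^{1/2}\bigr)AM_t^{1/2}+M_\infty^{1/2}A\bigl(M_t^{1/2}-M_\infty^{1/2}\bigr)
\]
and using H\"older's inequality for Schatten norms, $\|XAY\|_1\le\|X\|_2\|A\|\|Y\|_2$, together with $\|M_t^{1/2}\|_2^2=\tr M_t=1=\tr M_\infty=\|M_\infty^{1/2}\|_2^2$, bounds the left-hand side by $2\|A\|\,\|M_t^{1/2}-M_\infty^{1/2}\|_2\to0$. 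Since $\P^{\rho_\inv}$-a.s. $M_\infty=M_\infty^{1/2}=|n_\infty\ket\bra n_\infty|$ by Proposition~\ref{prop:Mn-martingale}, the limit equals $\bra n_\infty,A\,n_\infty\ket\,|n_\infty\ket\bra n_\infty|$; and since $|n_\infty\ket$ is a Fock vector hence an eigenvector of $\rho_\inv$, one has $\bra n_\infty,\rho\,n_\infty\ket=\bra n_\infty,\rho_\inv n_\infty\ket\,\bra n_\infty,A\,n_\infty\ket$, so the limit is exactly $\tfrac{\bra n_\infty,\rho\,n_\infty\ket}{\bra n_\infty,\rho_\inv n_\infty\ket}|n_\infty\ket\bra n_\infty|$, which is the claimed first assertion.

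Finally I would take traces --- continuous in $\|\cdot\|_1$ --- and use $\tr(S_t\rho)=\tr(S_t^{1/2}\rho S_t^{1/2})$ to obtain $\tr(S_t\rho)\to\bra n_\infty,\rho\,n_\infty\ket/\bra n_\infty,\rho_\inv n_\infty\ket$, $\P^{\rho_\inv}$-a.s. To pass to \eqref{eq:purificationequiv} I would transfer from $\P^{\rho_\inv}$ to $\P^\rho$: by Proposition~\ref{prop:absolutecontinuity} (or the density formula of Proposition~\ref{prop:Mn-martingale}) $\P^\rho\ll\P^{\rho_\inv}$, so all of the above holds $\P^\rho$-a.s.; moreover that same density formula identifies the limit scalar $\bra n_\infty,\rho\,n_\infty\ket/\bra n_\infty,\rho_\inv n_\infty\ket$ with $\d\P^\rho/\d\P^{\rho_\inv}$, which is therefore $\P^\rho$-a.s. strictly positive, so dividing $S_t^{1/2}\rho S_t^{1/2}$ by $\tr(S_t\rho)$ is legitimate and gives $S_t^{1/2}\rho S_t^{1/2}/\tr(S_t\rho)\to|n_\infty\ket\bra n_\infty|$, $\P^\rho$-a.s.

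I expect the only non-routine step to be the square-root continuity: trace-norm convergence of $M_t$ does \emph{not} by itself yield operator-norm convergence of $M_t^{1/2}$, so the Powers--St\o rmer inequality (which delivers Hilbert--Schmidt convergence of the roots) is the tool that makes the argument go through; everything else is Schatten--H\"older bookkeeping. A secondary subtlety to watch is that $\bra n_\infty,\rho\,n_\infty\ket$ may vanish on a set of positive $\P^{\rho_\inv}$-measure, so the final normalisation by $\tr(S_t\rho)$ must be carried out under $\P^\rho$ and not under $\P^{\rho_\inv}$.
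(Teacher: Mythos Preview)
Your proof is correct and follows essentially the same route as the paper: rewrite $S_t^{1/2}\rho S_t^{1/2}=M_t^{1/2}AM_t^{1/2}$ via \eqref{eq:commutation}, pass from trace-norm convergence of $M_t$ to Hilbert--Schmidt convergence of $M_t^{1/2}$, and identify the limit using that $|n_\infty\ket$ is a Fock eigenvector of $\rho_\inv$. Your direct H\"older estimate for the trace-norm convergence of $M_t^{1/2}AM_t^{1/2}$ is in fact slightly more streamlined than the paper's, which instead argues weak convergence together with convergence of the trace norms and then invokes \cite[Theorem~2.20]{Si05} (using positivity of $A$) to upgrade to trace-norm convergence.
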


\begin{proof} Since $\rho=\rho_\inv^{1/2} A\rho_\inv^{1/2}$, it follows from Equation~(\ref{eq:commutation}) that
\[
S_t^{1/2}\rho S_t^{1/2} = M_t^{1/2} A M_t^{1/2}.
\]
Using Lemma~\ref{lem:Mn-martingale} and Proposition~\ref{prop:Mn-martingale} we have $\ds \lim_{t\to\infty} M_t=|n_\infty\ket\bra n_\infty|$, $\P^{\rho_\inv}$-almost surely and in trace norm. Hence by continuity of the map $\cJ_1 \ni X\mapsto |X|^{1/2} \in \cJ_2$, see e.g. \cite[Example~2 Page~28]{Si05}, $(M_t^{1/2})_t$ converges $\P^{\rho_\inv}$-almost surely to $|n_\infty\ket\bra n_\infty|$ in the Hilbert-Schmidt topology. It implies $M_t^{1/2}AM_t^{1/2}$ converges weakly to $M_\infty^{1/2} A M_\infty^{1/2}$ and $\lim_t\|M_t^{1/2}AM_t^{1/2}\|_1=\|M_\infty^{1/2} A M_\infty^{1/2}\|_1$ $\pp^{\rho_\inv}$-almost surely. Since $M_t^{1/2}AM_t^{1/2}$ and $M_\infty^{1/2}AM_\infty^{1/2}$ are positive semi-definite, \cite[Theorem~2.20]{Si05} implies, $\P^{\rho_\inv}$-almost surely and in trace norm, 
\[
\lim_{t\to+\infty} M_t^{1/2}AM_t^{1/2} = \bra n_\infty,An_\infty\ket \times |n_\infty\ket\bra n_\infty| =  \frac{\bra n_\infty, \rho \, n_\infty\ket}{\bra n_\infty,\rho_\inv n_\infty\ket}\times |n_\infty\ket\bra n_\infty|,
\]
where we have used that the $|n_\infty\ket$ are eigenstates of $\rho_\inv$ in the last step. 
\end{proof}

To extend the previous lemma to arbitrary state $\rho$ we require the following one.
\begin{lemma}\label{lem:Sn-weakcvg} The sequence $(S_t)_t$ converges $\P^{\rho_\inv}$-a.s. in $\cB(\cH_\cS)$ to $S_\infty:=\bra n_\infty, \rho_\inv n_\infty\ket^{-1} \, |n_\infty\ket\bra n_\infty|$ in the weak$-*$ topology. In particular the sequence $(S_t)_t$ is bounded in $\cB(\cH_\cS)$.
\end{lemma}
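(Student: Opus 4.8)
The plan is to read off both assertions from the results on $(M_t)_t$ in Section~\ref{sec:martigale}, all the real work being in an a.s.\ operator-norm bound for $(S_t)_t$. First I would note, exactly as in Lemma~\ref{lem:Mndiagonal}, that $S_t(\omega)$ is a bounded function of $N$; writing $S_t(n):=\bra n,S_t n\ket$ one has $\bra n,M_t n\ket=\bra n,\rho_\inv n\ket\,S_t(n)$. By Lemma~\ref{lem:Mn-martingale} and Proposition~\ref{prop:Mn-martingale}, $M_t\to|n_\infty\ket\bra n_\infty|$ in trace norm $\pp^{\rho_\inv}$-a.s., so every diagonal entry converges, $S_t(n)\to\mathbf 1_{\{n=n_\infty\}}/\bra n_\infty,\rho_\inv n_\infty\ket$; that is, $S_t\to S_\infty$ entrywise with $S_\infty=\bra n_\infty,\rho_\inv n_\infty\ket^{-1}|n_\infty\ket\bra n_\infty|$ as claimed. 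What must be added is that this is weak$-*$ convergence in $\cB(\cH_\cS)=\cJ(\cH_\cS)^*$. Since weak$-*$ convergence of a sequence forces it to be norm bounded (Banach--Steinhaus), the ``in particular'' clause would then be free; conversely, granting $K(\omega):=\sup_t\|S_t(\omega)\|<\infty$ $\pp^{\rho_\inv}$-a.s., weak$-*$ convergence follows by a routine truncation: for $T\in\cJ(\cH_\cS)$ one has $\tr(S_tT)=\sum_n S_t(n)\bra n,Tn\ket$ with $\sum_n|\bra n,Tn\ket|\le\|T\|_1$, so splitting the sum at a level $N$ the head converges by the entrywise convergence and the tail is $\le K(\omega)\sum_{n>N}|\bra n,Tn\ket|\to 0$. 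Thus everything reduces to proving $\sup_t\|S_t(\omega)\|_{\cB(\cH_\cS)}<\infty$ for $\pp^{\rho_\inv}$-a.e.\ $\omega$.

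For this I would argue as follows. From \eqref{def:Mn} and Lemma~\ref{lem:Mndiagonal},
\[
\|S_t\|=\frac{\sup_{n}\|W_t|n\ket\|^2}{\tr(W_t^*W_t\rho_\inv)},
\]
and by the Fock-state identities \eqref{eq:krausonfock} with Definitions~\ref{def:shift}--\ref{def:evolvedfockstate} each $\|W_t(\omega)|n\ket\|^2=\pp^{|n\ket\bra n|}(\Lambda_{(\omega_1,\dots,\omega_t)})$ lies in $[0,1]$ and vanishes for $n<m_t(\omega):=-\min_{u\le t}s_u(\omega)$. Because $\tau^{n_\infty}=+\infty$ $\pp^{\rho_\inv}$-a.s.\ (Lemma~\ref{lem:asymptoticfock}) and the birth and death chain started from $n_\infty$ is positive recurrent (Proposition~\ref{prop:classicalmarkov}), we have $m_t\le m_\infty\le n_\infty<\infty$ for all $t$ and that chain visits every state infinitely often. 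Using $\bra n_\infty,M_t n_\infty\ket=\bra n_\infty,\rho_\inv n_\infty\ket\,\|W_t|n_\infty\ket\|^2/\tr(W_t^*W_t\rho_\inv)\to1$, there is $t_0(\omega)$ with $\tr(W_t^*W_t\rho_\inv)\ge\tfrac12\bra n_\infty,\rho_\inv n_\infty\ket\,\|W_t|n_\infty\ket\|^2$ for $t\ge t_0$, so that
\[
\|S_t\|\ \le\ \frac{2}{\bra n_\infty,\rho_\inv n_\infty\ket}\ \sup_{n\in\nn}\frac{\|W_t|n\ket\|^2}{\|W_t|n_\infty\ket\|^2}\qquad(t\ge t_0(\omega)),
\]
reducing the matter to a uniform bound on the ``likelihood ratio'' $\sup_n\|W_t|n\ket\|^2/\|W_t|n_\infty\ket\|^2$.

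The hard part will be precisely this uniform bound. The mechanism is that along $\pp^{\rho_\inv}$-a.e.\ $\omega$ the trajectory issued from $\rho_\inv$ purifies to the Fock path issued from $n_\infty$ (Lemma~\ref{lem:asymptoticfock}), so the observed outcome string is quantitatively more likely to have started from $n_\infty$ than from any $n\ne n_\infty$: writing $\|W_t|n\ket\|^2/\|W_t|n_\infty\ket\|^2$ as a product of one factor per step, each visit of the path $N_u(n_\infty)$ to a level $v$ emitting a symbol $(\sigma,\sigma')$ contributes a factor equal to the ratio of the two transition weights $(1-\alpha_\bullet,\alpha_\bullet)$ of \eqref{eq:krausonfock} at arguments $v$ and $v+(n-n_\infty)$; these weights form a probability pair for each fixed $\sigma$, and an ergodic-theoretic argument (using uniqueness of the invariant state $\rho_\inv$ together with the mixing of Theorem~\ref{thm:mixing-dynsys}) shows the time-averaged logarithm of the ratio tends to a nonpositive expression built from Kullback--Leibler divergences $D_{\mathrm{KL}}\!\big((1-\alpha_v,\alpha_v)\,\|\,(1-\alpha_{v+n-n_\infty},\alpha_{v+n-n_\infty})\big)$, strictly negative once $n\ne n_\infty$ since $(\alpha_v)_v$ is not periodic in the non-resonant case. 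Making this estimate uniform in $n$ — handling far-away Fock vectors through the crude bound $\|W_t|n\ket\|^2\le1$ and the exponentially small weights $\bra n,\rho_\inv n\ket$ inside $\tr(W_t^*W_t\rho_\inv)$ — yields $\sup_n\|W_t|n\ket\|^2/\|W_t|n_\infty\ket\|^2\le C(\omega)$ for all $t$, hence $\sup_t\|S_t\|<\infty$ $\pp^{\rho_\inv}$-a.s. With this in hand the weak$-*$ convergence $S_t\to S_\infty$ and the norm boundedness both follow as in the first paragraph.
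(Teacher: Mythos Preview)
You have missed the paper's key shortcut: for \emph{every} state $\rho\in\cJ_1$ --- not just for $\rho_\inv$ --- the scalar process $(\tr(S_t\rho))_t$ is a nonnegative $(\cO_t)$-martingale under $\pp^{\rho_\inv}$. The verification is identical to the one for $(M_t)_t$ in Lemma~\ref{lem:Mn-martingale}: the normalizing factor $\tr(W_t^*W_t\rho_\inv)$ cancels against the density of $\pp^{\rho_\inv}|_{\cO_t}$ with respect to $\mu^{\otimes t}$, and Equation~\eqref{eq:tracepreserving} does the rest. Martingale convergence then gives $\pp^{\rho_\inv}$-a.s.\ convergence of $\tr(S_t\rho)$ for each $\rho$, hence of $\tr(S_tT)$ for each $T\in\cJ(\cH_\cS)$ by writing $T$ as a combination of at most four states. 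From this the paper reads off weak-$*$ convergence and, via the uniform boundedness principle, norm boundedness of $(S_t)_t$; the limit is identified using Lemma~\ref{lem:purificationdensesubset} and the density of $\{\rho_\inv^{1/2}A\rho_\inv^{1/2}:A\in\cB(\cH_\cS)\}$ in $\cJ_1$. The whole argument takes a few lines and involves no ergodic theory.

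Your proposed route to boundedness --- controlling $\sup_n\|W_t|n\ket\|^2/\|W_t|n_\infty\ket\|^2$ through time-averaged log-likelihoods and Kullback--Leibler divergences --- is a substantial program that you only sketch. You do not name the ergodic theorem you intend to invoke nor the stationary process to which it applies; the strict negativity of the limiting exponent for each fixed $n\neq n_\infty$ rests on a non-periodicity property of $(\alpha_v)_v$ that you assert but do not prove; and the passage to a bound \emph{uniform} in $n$ is precisely where such large-deviation estimates are most delicate, yet you wave it through. Even if this program could be completed, it is a considerable detour compared with the one-line martingale observation above.
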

\begin{proof} Using the same argument as in Lemma~\ref{lem:Mn-martingale}, for any $\rho\in \cJ_1$, the sequence $(\tr(S_t \rho))_t$ is a non-negative $(\cO_t)_t$-martingale with respect to $\pp^{\rho_\inv}$. Hence it converges $\P^{\rho_\inv}$-almost surely. Because any trace class operator is the linear combination of at most $4$ elements of $\cJ_1$ this proves the weak$-*$ convergence and thus the boundedness of the sequence $(S_t)_t$. It remains to identify the limit.

When $\rho$ is of the form $\rho=\rho_\inv^{1/2}A\rho_\inv^{1/2}$ we know by Lemma~\ref{lem:purificationdensesubset} that
\[
\lim_{t\to+\infty} \tr(S_t \rho) = \frac{\bra n_\infty, \rho \, n_\infty\ket}{\bra n_\infty,\rho_\inv n_\infty\ket}, \quad \P^{\rho_\inv}-a.s.
\]
Since the set $\{\rho_\inv^{1/2}A\rho_\inv^{1/2},\, A\in\cB(\cH_\cS)\}$ is dense in $\cJ_1$ the lemma holds.
\end{proof}

\noindent \emph{End of the proof of Theorem~\ref{thm:purification}.} It remains to prove that Equation~(\ref{eq:purificationequiv}) holds for any initial state $\rho$. 

Let $\rho\in\cJ_1$. Given $\varepsilon>0$ let $\tilde\rho=\rho_\inv^{1/2}A\rho_\inv^{1/2}$, $A\in\cB(\cH_\cS)$, be a state such that $\|\rho-\tilde\rho\|_1<\varepsilon$. We decompose 
\begin{equation}
  \label{eq:approx S_t rho S_t}
  \frac{S_t^{1/2}\rho S_t^{1/2}}{\tr(S_t\rho)} = \frac{S_t^{1/2} (\rho-\tilde\rho)S_t^{1/2}}{\tr(S_t\rho)} + \frac{S_t^{1/2}\tilde\rho S_t^{1/2}}{\tr(S_t\rho)}.
\end{equation}
From Lemma~\ref{lem:Sn-weakcvg}, $\ds \lim_{t\to\infty} \tr(S_t\rho)= \frac{\bra n_\infty, \rho \, n_\infty\ket}{\bra n_\infty,\rho_\inv n_\infty\ket}$ which is $\P^\rho$-a.s. non zero by Proposition~\ref{prop:Mn-martingale}.
Since $(S_t)_t$ is bounded in $\cB(\cH_\cS)$ by Lemma~\ref{lem:Sn-weakcvg} this proves that the first term on the right hand side of Equation~\eqref{eq:approx S_t rho S_t} is bounded $\P^\rho$-a.s. in trace norm by $C\varepsilon$ (the constant $C$ may be random but does not depend on $t$).

Now using Lemma~\ref{lem:purificationdensesubset} we have $\P^{\rho_\inv}$-a.s., hence $\P^\rho$-a.s. by Proposition~\ref{prop:absolutecontinuity}, and in trace norm
\[
\lim_{n\to \infty} S_t^{1/2}\tilde\rho S_t^{1/2} = \frac{\bra n_\infty, \tilde\rho \, n_\infty\ket}{\bra n_\infty,\rho_\inv n_\infty\ket} \times |n_\infty\ket\bra n_\infty|.
\]
Using again that, $\P^\rho$-a.s., $\ds \lim_{t\to \infty} \tr(S_t\rho) = \frac{\bra n_\infty, \rho \, n_\infty\ket}{\bra n_\infty,\rho_\inv n_\infty\ket}\neq 0$ we get that, $\P^\rho$-a.s. and in trace norm,
\[
\lim_{t\to \infty} \frac{S_t^{1/2}\tilde\rho S_t^{1/2}}{\tr(S_t\rho)} =  \frac{\bra n_\infty, \tilde\rho \, n_\infty\ket}{\bra n_\infty,\rho\, n_\infty\ket} \times |n_\infty\ket\bra n_\infty|.
\]
All together, $\P^\rho$-a.s.,
\[
\limsup_{n\to+\infty} \left\|\frac{S_t^{1/2}\rho S_t^{1/2}}{\tr(S_t\rho)}  - |n_\infty\ket\bra n_\infty| \right\|_1 \leq C\varepsilon +\frac{|\bra n_\infty, (\tilde\rho-\rho) \, n_\infty\ket|}{\bra n_\infty,\rho\, n_\infty\ket} \leq \left(C+\frac{1}{\bra n_\infty,\rho\, n_\infty\ket}\right) \varepsilon.
\]
Since this is true for any $\varepsilon>0$ it implies Equation~(\ref{eq:purificationequiv}). 

Finally the $L^1$ convergence in Theorem~\ref{thm:purification} follows from the almost sure convergence and the dominated convergence theorem.\hfill\qed

%%%%%%%%%%%%%%%%%%%%%%%%%%%%%%%%%%%%%%%%%%%%%%%%%%%%%%%%%%%%%%%%%%%%
%%%%%%%%%%%%%%%%%%%%%%%%%%%%%%%%%%%%%%%%%%%%%%%%%%%%%%%%%%%%%%%%%%%%
%%%%%%%%%%%%%%%%%%%%%%%%%%%%%%%%%%%%%%%%%%%%%%%%%%%%%%%%%%%%%%%%%%%%
%%%%%%%%%%%%%%%%%%%%%%%%%%%%%%%%%%%%%%%%%%%%%%%%%%%%%%%%%%%%%%%%%%%%

\section{Proof of Theorem~\ref{thm:invariantmeasure}}\label{sec:invariantmeasure}
The strategy of the proof is similar to the one in the finite dimensional case. For the sake of completeness and to be self-contained we recall some ingredients of \cite{BFPP19}. Following \cite{BFPP19} we first need to introduce another family of probability measures. Recall that $\mathfrak B$ is the Borel sigma algebra on $\cJ_1$ and $\cF=\mathfrak B\otimes \mathcal O$ is the extended sigma algebra on $\mathcal J_1 \times \Omega$, see Equation~(\ref{def:extendedalgebra}).

\begin{definition} For a given probability measure $\nu$ on $\mathcal J_1$, we define the probability measure $\P_\nu$ on $(\cJ_1\times\Omega,\cF)$ by
\begin{equation*}\label{def:Prhonu}
\P_\nu(S,O_t) := \int_{S\times O_t} \tr\left(W_t(\omega)\rho W_t^*(\omega)\right)\,\d\nu(\rho) \d\mu^{\otimes t}(\omega),
\end{equation*}
for all $t\in\mathbb N^*$, for all $S\in\mathfrak B$ and all $O_t\in\mathcal O_t.$
\end{definition}
\begin{remark} The family of measures $\P_\nu$ generalizes the probability $\P$ defined in Equation~(\ref{def:probaMinterpretation}) to interpret the quantity $\bra n,M_t(\omega) n\ket$.
\end{remark}

\noindent There is a natural connection between the $\P_\nu$'s and the probability measures $\P^\rho$ we have used so far. Given a measure $\nu$ on $\cJ_1$ let 
$$
\rho_\nu:=\mathbb E_\nu[\rho]
$$ 
be the expected state. Then $\P^{\rho_\nu}$ is the second marginal of $\P_\nu$, while its first marginal is clearly $\nu$. As an immediate consequence of Proposition~\ref{dist_var_totale}, for any bounded and $\mathcal O$-measurable function $f$ and for any measures $\nu$ and $\tilde\nu$, 
\begin{equation}\label{eq:Enulipschitz}
\vert\mathbb E_{\tilde\nu}[f]-\mathbb E_{\nu}[f]\vert\leq\Vert f\Vert_\infty\|\rho_{\tilde\nu}-\rho_\nu\|_{1}.
\end{equation}

An important example is when $\nu=\nu_\inv$. A direct computation gives that $\E_{\nu_\inv}[\rho]=\rho_\inv$ so that the second marginal of $\P_{\nu_\inv}$ is $\P^{\rho_\inv}$, i.e. for any $O\in \cO$ one has
\begin{equation*}\label{eq:Pnuinvmarginal}
\P_{\nu_\inv} (\cJ_1, O) = \P^{\rho_\inv}(O).
\end{equation*}
The following lemma characterizes $\rho_\nu$ for $\Pi$-invariant measures $\nu$. Recall that $\Pi$ denotes the Markov kernel defined in Equation~(\ref{def:markovkernel}) and $\cL$ is the CPTP map describing the evolution of the cavity without measurement, see Equation~(\ref{def:rdm}).
\begin{lemma}\label{lem:frompitoL} For any probability measure $\nu$ one has $\rho_{\nu\Pi} = \cL(\rho_\nu)$. In particular if $\nu$ is an invariant probability measure then $\rho_\nu$ in an invariant state for $\cL$.
\end{lemma}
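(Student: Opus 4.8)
The plan is to directly unfold both sides using the definitions and Fubini. First I would write, for any bounded measurable $g:\cJ_1\to\rr$,
\[
\int_{\cJ_1} g(\varrho)\,\d(\nu\Pi)(\varrho) = \int_{\cJ_1}\int_{\cJ_1} g(\varrho)\,\Pi(\rho,\d\varrho)\,\d\nu(\rho)
= \int_{\cJ_1}\int_\cR g\!\left(\frac{V_y\rho V_y^*}{\tr(V_y\rho V_y^*)}\right)\tr(V_y\rho V_y^*)\,\d\mu(y)\,\d\nu(\rho),
\]
using the expression~\eqref{def:markovkernel} for $\Pi$. To identify $\rho_{\nu\Pi}=\E_{\nu\Pi}[\varrho]$ it suffices to test against linear functionals of the form $g(\varrho)=\tr(B\varrho)$ for $B\in\cB(\cH_\cS)$, since such functionals separate points of $\cJ_1(\cH_\cS)$ (as $\cJ_1$ is the dual of the compact operators, and in particular trace against bounded operators determines a trace class operator). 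With this choice the normalization cancels:
\[
\tr(B\rho_{\nu\Pi}) = \int_{\cJ_1}\int_\cR \tr\!\left(B\,\frac{V_y\rho V_y^*}{\tr(V_y\rho V_y^*)}\right)\tr(V_y\rho V_y^*)\,\d\mu(y)\,\d\nu(\rho)
= \int_{\cJ_1}\sum_{y\in\cR}\tr(B\,V_y\rho V_y^*)\,\d\nu(\rho).
\]

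Next I would interchange the (finite) sum over $\cR$ with the integral over $\nu$ and use the cyclicity of the trace together with the definition $\cL(\rho)=\sum_{y\in\cR}V_y\rho V_y^*$ to get
\[
\tr(B\rho_{\nu\Pi}) = \int_{\cJ_1}\tr\!\Big(B\sum_{y\in\cR}V_y\rho V_y^*\Big)\,\d\nu(\rho) = \int_{\cJ_1}\tr(B\,\cL(\rho))\,\d\nu(\rho).
\]
Since $\cL$ is bounded (trace preserving and completely positive, hence a contraction on $\cJ_1$), the map $\rho\mapsto\tr(B\,\cL(\rho))$ is bounded and measurable, so by the Bochner-integral (or weak) definition of $\rho_\nu=\E_\nu[\rho]$ the right-hand side equals $\tr(B\,\cL(\rho_\nu))$, using linearity and continuity of $\cL$. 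As this holds for every $B\in\cB(\cH_\cS)$ we conclude $\rho_{\nu\Pi}=\cL(\rho_\nu)$. The ``in particular'' statement is then immediate: if $\nu\Pi=\nu$ then $\cL(\rho_\nu)=\rho_{\nu\Pi}=\rho_\nu$, so $\rho_\nu$ is $\cL$-invariant.

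The only genuinely delicate points are measure-theoretic bookkeeping rather than substance: one must check the integrand $y\mapsto \tr(B V_y\rho V_y^*)$ and $\rho\mapsto\tr(B\cL(\rho))$ are measurable and integrable so Fubini applies (immediate here since $\cR$ is finite and everything is bounded), and one must be slightly careful that the singular set $\{\rho:\tr(V_y\rho V_y^*)=0\}$ causes no trouble — but on that set the contribution $\tr(B V_y\rho V_y^*)=0$ as well (since $0\le V_y\rho V_y^*$ has zero trace, it vanishes), matching the convention in~\eqref{def:markovkernel}. So I expect no real obstacle; the main thing to get right is the commuting of sum, integral and trace, and the justification that testing against $\tr(B\,\cdot\,)$ suffices to identify an element of $\cJ_1$.
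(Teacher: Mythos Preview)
Your proof is correct and follows essentially the same route as the paper: unfold the definition of $\nu\Pi$, cancel the normalizing factor $\tr(V_y\rho V_y^*)$, and use linearity of $\cL$ to pull it outside the $\nu$-integral. The only cosmetic difference is that the paper computes directly at the level of the Bochner integral in $\cJ_1$, whereas you test against $\tr(B\,\cdot\,)$ for $B\in\cB(\cH_\cS)$; your extra care with the null set $\{\tr(V_y\rho V_y^*)=0\}$ and the Fubini bookkeeping is fine but not strictly needed here since $\cR$ is finite.
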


\begin{proof} It suffices to write
\begin{eqnarray*}
\rho_{\nu\Pi} & = & \int_{\cJ_1\times \cR}  \frac{V_y\rho V_y^*}{\tr(V_y\rho V_y^*)} \tr(V_y\rho V_y^*)\, \d\mu(y)\d\nu(\rho){}\\
 & = & \int_\cR V_y \rho_\nu V_y^*\, \d\mu(y) \\
 & = & \cL(\rho_\nu)
\end{eqnarray*}
and the lemma follows.
\end{proof}

Using Theorem~\ref{thm:Lmixing} and Remark~\ref{rem:noinvstate}, the preceding lemma immediately gives the following information about invariant measures.
\begin{corollary}\label{coro:invariantmeasureinfo} Suppose the non-resonant condition holds. If $\beta>0$ any invariant probability measure $\nu$ satisfies $\rho_\nu=\rho_\inv$ and if $\beta\leq 0$ there is no invariant probability measure.
\end{corollary}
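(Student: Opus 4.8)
The plan is to combine Lemma~\ref{lem:frompitoL}, which transports the fixed-point equation for the Markov kernel $\Pi$ to a fixed-point equation for the CPTP map $\cL$, with the previously established description of the invariant states of $\cL$ coming from Theorem~\ref{thm:Lmixing} and Remark~\ref{rem:noinvstate}. Both cases of the corollary reduce to this single observation.

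First, in the case $\beta>0$, I would let $\nu$ be an arbitrary $\Pi$-invariant probability measure, so that $\nu\Pi=\nu$. Before anything else one notes that $\rho_\nu=\E_\nu[\rho]$ is a genuine density operator: it is the Bochner integral over $\cJ_1$ of positive operators of unit trace norm, hence converges in $(\cJ(\cH_\cS),\|\cdot\|_1)$ and yields a positive trace-one operator. Lemma~\ref{lem:frompitoL} then gives $\rho_\nu=\rho_{\nu\Pi}=\cL(\rho_\nu)$, so $\rho_\nu$ is an invariant state of $\cL$. Since the non-resonant condition holds and $\beta>0$, Theorem~\ref{thm:Lmixing} asserts that $\rho_\inv$ is the \emph{unique} invariant state of $\cL$, whence $\rho_\nu=\rho_\inv$.

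Second, in the case $\beta\leq 0$, I would argue by contradiction: if a $\Pi$-invariant probability measure $\nu$ existed, the very same application of Lemma~\ref{lem:frompitoL} would produce an invariant state $\rho_\nu$ of $\cL$, in direct contradiction with Remark~\ref{rem:noinvstate}(1), which states that $\cL$ has no invariant state when $\beta\leq 0$. Hence no $\Pi$-invariant probability measure exists.

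I do not expect any genuine obstacle here; the corollary is an immediate consequence of Lemma~\ref{lem:frompitoL} together with the known structure of $\mathrm{Fix}(\cL)$. The only point that deserves to be spelled out, as above, is that passing from $\nu$ to $\rho_\nu=\E_\nu[\rho]$ indeed lands in the set of density operators, so that the results on $\cL$ are applicable; everything else is a direct rewriting of the invariance condition.
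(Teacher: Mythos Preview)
Your proposal is correct and matches the paper's own argument: the corollary is stated there as an immediate consequence of Lemma~\ref{lem:frompitoL} combined with Theorem~\ref{thm:Lmixing} and Remark~\ref{rem:noinvstate}, which is exactly what you do. The extra remark that $\rho_\nu$ is a bona fide density operator is a harmless clarification.
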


\begin{remark} The absence of invariant probability measure when $\beta\leq 0$ should not come as a surprise. Besides the absence of invariant state for the map $\cL$, the restriction to Fock states is naturally associated to a classical birth and death process exactly as in Section~\ref{ssec:classicalchain}. Following the proof of Proposition~\ref{prop:classicalmarkov} one then obtains that this Markov chain is either transient, when $\beta<0$, or null recurrent, when $\beta=0$. In any case it has no invariant probability distribution.
\end{remark}

\noindent \emph{Proof of Theorem~\ref{thm:invariantmeasure}.}
Using Equations~\eqref{eq:krausonfock} and Proposition~\ref{prop:classicalmarkov}, the probability measure defined by $\nu_\inv(\{|n\ket \bra n|\})=\mu_{Gibbs}(n)$ and $\nu_\inv(\cJ_1\setminus\{|n\ket\bra n|: n\in \nn\})=0$ is a $\Pi$-invariant measure. It remains to prove it is unique and the convergence for any initial measure.

We first prove that $\nu_\inv$ is the unique invariant probability measure. Suppose $\nu$ is an invariant probability measure. Let $f:\mathcal J_1\rightarrow\mathbb R$ be a bounded uniformly continuous function (for the $\Vert.\Vert_1$ norm induced metric). We denote by $\E_\nu$ the expectation with respect to $\P_\nu$. %In the case of a $\mathfrak B$ measurable observable there is no ambiguity issue with the expectation with respect to $\nu$ since $\nu$ is the first marginal of $\P_\nu$. 
For shortness we also denote by $\rho_{\infty,t}$ the ``asymptotic quantum trajectory'', i.e.
\[
\rho_{\infty,t}:= |N_t(n_\infty,.)\ket\bra N_t(n_\infty,.)|.
\]
As mentioned after Theorem~\ref{thm:purification} it is the quantum trajectory associated to the initial state $\rho_\infty = |n_\infty\ket\bra n_\infty|$. We will use it as a consistent estimator for the trajectory $\rho_t$.

The invariance of $\nu$ implies that for all $t$
\[
\mathbb E_{\nu}[f(\rho)]  =  \mathbb E_\nu[f(\rho_t)].
\]
On the other hand, it follows from the purification Theorem~\ref{thm:purification}, the uniform continuity of $f$ and Lebesgue dominated convergence theorem that for any probability measure $\nu$
\begin{equation}\label{eq:L1purification}
\lim_{t\to +\infty} \mathbb E_\nu[f(\rho_t)] - \E_\nu[f(\rho_{\infty,t})] = 0.
\end{equation}
Moreover, because $\rho_{\infty,t}$ is $\cO$-measurable and $\P^{\rho_\nu}$ is the second marginal of $\P_\nu$, we have for all $t$
\[
\E_\nu[f(\rho_{\infty,t})] = \E^{\rho_\nu} [f(\rho_{\infty,t})]. 
\]
Since $\nu$ is invariant Lemma~\ref{lem:frompitoL} implies $\rho_\nu=\rho_\inv$ and all together,
\[
\mathbb E_{\nu}[f(\rho)] = \lim_{t\to+\infty} \E^{\rho_\inv} [f(\rho_{\infty,t})].
\]
The right-hand side does not depend on $\nu$ so that $\mathbb E_{\nu}[f(\rho)]=\mathbb E_{\nu_\inv}[f(\rho)]$ for any uniformly bounded continuous function $f$. This proves that $\nu=\nu_\inv$.

\smallskip

We now prove the convergence in Wasserstein metric of $\nu\Pi^t$ towards $\nu_\inv$. Let $f\in \operatorname{Lip}_1(\cJ_1)$. By the translation invariance of the dual definition of Wasserstein metric in Equation~\eqref{eq:def Wasserstein} we can assume $f(\rho_{\rm inv})=0$. Since $\sup_{\rho,\varrho\in \cJ_1}\|\rho-\varrho\|_1=2$ we get $\|f\|_\infty\leq 2+|f(\rho_{\rm inv})|=2$ so that $f$ is bounded uniformly continuous. For any $s,u$ such that $t=s+u$, 
\begin{eqnarray}\label{eq:convinlawfalse}
\vert \mathbb E_{\nu\Pi^t}[f(\rho)]-\mathbb E_{\nu_\inv}[f(\rho)]\vert & = & \left| \E_{\nu\Pi^s}[f(\rho_u)] - \E_{\nu_\inv}[f(\rho_u)]\right| \nonumber\\
 & \leq & \vert \E_{\nu\Pi^s}[f(\rho_u)] -\E_{\nu\Pi^s}[f(\rho_{\infty,u})] \vert+\vert \mathbb E_{\nu_\inv}[f(\rho_{\infty,u})]-\mathbb E_{\nu_\inv}[f(\rho_u)]\vert\nonumber\\
&&+\vert \mathbb E_{\nu\Pi^s}[f(\rho_{\infty,u})]-\mathbb E_{\nu_\inv}[f(\rho_{\infty,u})]\vert \nonumber\\
  & \leq & \E_{\nu\Pi^s}[\|\rho_u-\rho_{\infty,u}\|_1]+\mathbb E_{\nu_\inv}[\|\rho_{\infty,u}-\rho_u\|_1] \nonumber\\
&&+\vert \mathbb E_{\nu\Pi^s}[f(\rho_{\infty,u})]-\mathbb E_{\nu_\inv}[f(\rho_{\infty,u})]\vert,
\end{eqnarray}
where we have used that $f\in \operatorname{Lip}_1(\cJ_1)$ in the last step.

In the third term, because $\rho_{\infty,u}$ is $\cO$ measurable, combining Equation~(\ref{eq:Enulipschitz}), Lemma~\ref{lem:frompitoL} and the fact that $\rho_{\nu_\inv}=\rho_\inv$ we have
\[
\vert \mathbb E_{\nu\Pi^s}[f(\rho_{\infty,u})]-\mathbb E_{\nu_\inv}[f(\rho_{\infty,u})]\vert \leq 2\Vert \cL^s(\rho_\nu)-\rho_\inv\Vert_1.
\]
Inserting in Equation~\eqref{eq:convinlawfalse} we therefore have, for any $s$ and $u$ such that $t=s+u$,
\[
 \vert \mathbb E_{\nu\Pi^t}[f(\rho)]-\mathbb E_{\nu_\inv}[f(\rho)]\vert \leq \E_{\nu\Pi^s}[\|\rho_u-\rho_{\infty,u}\|_1]+\mathbb E_{\nu_\inv}[\|\rho_{\infty,u}-\rho_u\|_1] +2\Vert \cL^s(\rho_\nu)-\rho_\inv\Vert_1.
 \]
Note that this upper bound is uniform in $f\in \operatorname{Lip}_1$ hence, by the Kantorovich-Rubinstein duality Equation~\eqref{eq:def Wasserstein}, it is also an upper bound for $W_1(\nu\Pi^t,\nu_\inv)$.

 Fix now $\eps>0$. From Theorem~\ref{thm:Lmixing}, there exists $s_0\in \nn$ such that $2\Vert \cL^{s_0}(\rho_\nu)-\rho_\inv\Vert_1\leq \eps/3$. 
 Using the $L^1$ convergence of Equation~(\ref{eq:purification}) twice, respectively with $\nu\Pi^{s_0}$ and $\nu_\inv$, for $u$ large enough we have
 $$\E_{\nu\Pi^s}[\|\rho_u-\rho_{\infty,u}\|_1]\leq \eps/3\quad\mbox{and}\quad \mathbb E_{\nu_\inv}[\|\rho_{\infty,u}-\rho_u\|_1]\leq \eps/3.$$
All together, for all $t$ large enough,
\[
W_1(\nu\Pi^{t},\nu_\inv) < \eps,
\]
which ends the proof.
\hfill$\Box$

%%%%%%%%%%%%%%%%%%%%%%%%%%%%%%%%%%%%%%%%%%%%%%%%%%%%%%%%%%%%%%%%%%%%
%%%%%%%%%%%%%%%%%%%%%%%%%%%%%%%%%%%%%%%%%%%%%%%%%%%%%%%%%%%%%%%%%%%%
%%%%%%%%%%%%%%%%%%%%%%%%%%%%%%%%%%%%%%%%%%%%%%%%%%%%%%%%%%%%%%%%%%%%
%%%%%%%%%%%%%%%%%%%%%%%%%%%%%%%%%%%%%%%%%%%%%%%%%%%%%%%%%%%%%%%%%%%%

\section{The resonant cases}\label{sec:resonant}

%%%%%%%%%%%%%%%%%%%%%%%%%%%%%%%%%%%%%%%%%%%%%%%%%%%%%%%%%%%%%%%%%%%%
%%%%%%%%%%%%%%%%%%%%%%%%%%%%%%%%%%%%%%%%%%%%%%%%%%%%%%%%%%%%%%%%%%%%

\subsection{Rabi sectors}

As we mentioned in Section~\ref{ssec:masermodel}, see Equation~(\ref{def:rabi}) and below, depending on the arithmetic properties of the parameters $\xi$ and $\eta$ one may have zero, one or infinitely many resonances. Up to now we have only considered the case where there is no such resonance. From a dynamical point of view it is certainly the most interesting case because it then involves an infinite dimensional system. The resonant situations are however not without interest. In particular we will exhibit particular situations where purification fails.

The resonances then induce splittings of the cavity. More precisely if $R(\xi,\eta)$ denotes the set of resonances, i.e. $$R(\xi,\eta)=\{n\in\N\, |\, \exists j\in\N, \ \xi n+\eta=j^2\}$$ and $r= 1 +{\rm Card}(R(\xi,\eta))$, the Hilbert space $\cH_\cS$ has a decomposition of the form
\begin{equation*}\label{def:RabiDecomp}
\cH_S=\bigoplus_{j=1}^r\cH_\cS^{(j)},
\end{equation*}
where $\cH_\cS^{(j)}\equiv\ell^2(I_j)$ and $\{I_j\,|\,j=1,\ldots, r\}$ is the partition of $\N$ induced by the resonances, namely
$$
\begin{array}{lcl}
I_1\equiv\N& \text{if}& R(\eta,\xi)\ \text{is empty} ,\\
I_1\equiv\{0,\ldots,n_1-1\},\ I_2\equiv\{n_1,n_1+1,\ldots\} & \text{if} &R(\eta,\xi)=\{n_1\},\\
I_1\equiv\{0,\ldots,n_1-1\},\ 
I_2\equiv\{n_1,\ldots,n_{2}-1\},\ \ldots&\text{if}&R(\eta,\xi)=\{n_1,n_2,\ldots\}.
\end{array}
$$
The resonances are exactly those values of $n$ for which the transition probabilities $p(n-1,n)$ and $p(n,n-1)$ vanish, see Equations~(\ref{eq:transitionproba-})-(\ref{eq:transitionproba+}), inducing a similar splitting of the classical Markov chain considered in Section~\ref{ssec:classicalchain}. $\cH_\cS^{(j)}$ is called the $j$-th Rabi sector, and we denote by $P_j$ the corresponding orthogonal projection.

Although not obvious at this point, there is a particular subcase when infinitely many resonances occur.
\begin{definition}\label{def:degenerate} The system is called degenerate if there exists $m\in \{0\}\cup R(\xi,\eta)$ and $n\in R(\xi,\eta)$ such that $m<n$ and $m+1,n+1\in R(\xi,\eta)$. In other words it is degenerate if there exist, at least, two Rabi sectors of dimension $1$. We shall denote $N(\xi,\eta)\equiv\{n\in\{0\}\cup R(\xi,\eta),\ n+1\in R(\xi,\eta)\}$.
\end{definition}

\begin{remark}\label{rem:degenerate} Degenerate systems exist. For example $N(724,241)=\{1,2\}$ and $N(840,1)=\{1,52\}$. It is easy to prove (see \cite{BP09}) that, for given $\xi,\eta$,  $N(\xi,\eta)$ is a finite set and that is empty in the perfectly tuned case $\eta=0$, i.e. when the frequency of the cavity mode equals the atom Bohr frequency. We refer to \cite{BP09} for more details about degenerate systems. 

\end{remark}

\medskip

For $\beta\in\R$, to each Rabi sector $\cH_\cS^{(j)}$ we associate the local Gibbs state
$$
\rho^{(j)}_\inv :=\frac{\e^{-\beta\eps N}P_j}{\tr \ (\e^{-\beta\eps N}P_j)}.
$$
Note that, for $\beta\leq 0$, $\rho^{(j)}_\inv$ is a well defined state only for finite dimensional sectors. The following theorem is the analogue of Theorem~\ref{thm:Lmixing} in the resonant situations. 

\begin{theorem}\label{thm:invstate} 1. If the system has exactly one resonance and $\beta>0$, the invariant states of $\cL$ are the convex combinations of $\rho^{(1)}_\inv$ and $\rho^{(2)}_\inv$. Moreover for any state $\rho$ one has
\begin{equation*}
\lim_{t\to\infty} \cL^t(\rho) =\tr(\rho P_1)\,\rho^{(1)}_\inv +\tr(\rho P_2)\,\rho^{(2)}_\inv,
\label{RelaxSimple}
\end{equation*}
where the limit is in the trace norm.

2. If the system has infinitely many resonances and is non-degenerate, the invariant states of $\cL$ are the convex combinations of the $\rho^{(j)}_\inv$'s. Moreover for any state $\rho$ one has
\begin{equation}
\lim_{t\to\infty} \cL^t(\rho) = \sum_{j=1}^\infty \tr(\rho P_j)\,\rho^{(j)}_\inv,
\label{RelaxFull}
\end{equation}
where the limit is again in the trace norm.
\end{theorem}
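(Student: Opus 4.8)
The plan is to reduce Theorem~\ref{thm:invstate} to the Rabi-sector decomposition. First I would note that each sector $\cH_\cS^{(j)}$ is $\cL$-invariant: a positive integer $n$ is a resonance exactly when $\alpha_n=0$, so by~\eqref{eq:krausonfock} the only matrix elements of $V_{-+}$ and $V_{+-}$ linking $|n-1\ket$ and $|n\ket$ vanish, and together with~\eqref{eq:krausoperators} this gives $[V_\ss,P_j]=0$ for all $\sigma,\sigma',j$. Hence $\cL(P_iXP_j)=P_i\cL(X)P_j$ for every trace-class $X$, so $\cL$ respects the block decomposition attached to $(P_j)_j$; in particular the weights $\tr(\rho P_j)$ are conserved by $\cL$, and $P_j\cL(\cdot)P_j$ restricts to a CPTP map $\cL_j$ on $\cJ_1(\cH_\cS^{(j)})$ with Kraus operators $(V_\ss P_j)_{\sigma,\sigma'}$.

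Next I would treat the diagonal blocks $\cL_j^t(P_j\rho P_j)$. The only infinite-dimensional sector that can occur is $\cH_\cS^{(2)}\simeq\ell^2(\{n_1,n_1+1,\dots\})$ in case~1 (where $\beta>0$); as there is no resonance inside $I_2$, the transition probabilities never vanish there and the proof of Theorem~\ref{thm:Lmixing} goes through within $I_2$, giving $\cL_2^t(\sigma)\to\tr(\sigma)\,\rho^{(2)}_\inv$ in trace norm. For a finite-dimensional sector $\cH_\cS^{(j)}\simeq\ell^2(I_j)$, the restricted birth-and-death chain on $I_j$ is irreducible (no resonance strictly inside $I_j$) and aperiodic (its diagonal weight at $\min I_j$ is at least $1-p_\at(+)>0$), so Proposition~\ref{prop:classicalmarkov} applied to $I_j$ identifies its unique invariant law as $k\mapsto\bra k,\rho^{(j)}_\inv k\ket$; since in addition the $V_\ss P_j$ act irreducibly on $\ell^2(I_j)$ (powers of $V_{+-}P_j$ and $V_{-+}P_j$ link any two Fock vectors of $I_j$), $\cL_j$ is a primitive CPTP map on a finite-dimensional space and Perron--Frobenius theory gives that $\rho^{(j)}_\inv$ is its unique invariant state with $\cL_j^t(\sigma)\to\tr(\sigma)\,\rho^{(j)}_\inv$ in trace norm.

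The main obstacle is the decay of the cross-sector coherences, $\|\cL^t(P_i\rho P_j)\|_1\to0$ for $i\ne j$: since $\cL$ is trace preserving this cannot follow from contractivity, and it is here that the non-degeneracy hypothesis is needed. By contractivity and density of finite-rank operators one reduces to $P_i\rho P_j=|m\ket\bra n|$, $m\in I_i$, $n\in I_j$. If $I_i=\{m\}$ is one-dimensional, then only $V_{--}$ and $V_{++}$ act nontrivially on $|m\ket$ and, being diagonal, they also freeze $|n\ket$; one finds $\cL^t(|m\ket\bra n|)=\lambda_t\,|m\ket\bra n|$ with $|\lambda_t|\le\big(p_\at(-)|C(n)|+p_\at(+)|C(n+1)|\big)^{t}$, and this rate is $<1$ because $\dim I_j\ge2$ forces at least one of $|C(n)|,|C(n+1)|$ (the non-resonant neighbour) to be $<1$. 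A coherence between two one-dimensional sectors is multiplied at each step by a scalar of modulus $|p_\at(-)s_1+p_\at(+)s_2|$ with $s_1,s_2\in\{-1,1\}$, which can equal $1$ — this is exactly the persistent coherence making the degenerate case fail — so this configuration is excluded by non-degeneracy. The remaining case, two sectors of dimension at least two, I expect to be the technical heart: here the dynamics genuinely mixes inside each sector, and the decay should come from the fact that the probability measures $\vec\sigma\mapsto\|W_t^{(i)}(\vec\sigma)|m\ket\|^2$ and $\vec\sigma\mapsto\|W_t^{(j)}(\vec\sigma)|n\ket\|^2$ on $\cR^t$ stay at positive total-variation distance — a separation of the kind underlying Lemma~\ref{lem:purificationassumption}, exploited to make the relevant Cauchy--Schwarz strict by a uniform margin — which requires that $\cH_\cS^{(i)}$ and $\cH_\cS^{(j)}$ are not isomorphic as $\cL$-modules; when one sector is the infinite $I_2$ the compactness argument is unavailable and this point will need an estimate modelled on the proof of Theorem~\ref{thm:Lmixing}.

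Granting the off-diagonal decay, both assertions follow. If $\cL(\rho)=\rho$ then $P_j\rho P_j=\cL_j(P_j\rho P_j)=\tr(\rho P_j)\rho^{(j)}_\inv$, while $P_i\rho P_j=\cL^t(P_i\rho P_j)\to0$ forces $P_i\rho P_j=0$ for $i\ne j$; hence the invariant states are exactly the convex combinations $\sum_j\tr(\rho P_j)\rho^{(j)}_\inv$, the converse being immediate. For arbitrary $\rho$, writing $\cL^t(\rho)=\sum_j\cL_j^t(P_j\rho P_j)+\sum_{i\ne j}\cL^t(P_i\rho P_j)$ and using $\sum_j\|P_j\rho P_j\|_1\le1$ together with dominated convergence — legitimate also when there are infinitely many (then necessarily finite-dimensional) sectors — yields $\cL^t(\rho)\to\sum_j\tr(\rho P_j)\,\rho^{(j)}_\inv$ in trace norm, which is the limit asserted in parts~1 and~2.
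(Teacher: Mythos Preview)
The paper does not give a self-contained proof here: the Remark following Theorem~\ref{thm:invstate} simply refers to \cite{BP09} for the fully resonant case and says the method of \cite{Bru14} adapts for a single resonance, the argument in those references being a direct analysis of the peripheral spectrum of $\cL$. Your block-decomposition route is a reasonable alternative, and your treatment of the commutation $[V_\ss,P_j]=0$, the convergence of the diagonal blocks, and the off-diagonal decay when one sector is one-dimensional is essentially correct.

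The acknowledged gap --- decay of $\cL^t(P_i\rho P_j)$ when both sectors have dimension at least two --- is however not closed by your heuristic. Your proposed mechanism (positive total-variation separation of the outcome laws $\omega\mapsto\|W_t(\omega)|m\ket\|^2$ and $\omega\mapsto\|W_t(\omega)|n\ket\|^2$, fed into a Cauchy--Schwarz estimate) uses only the \emph{moduli} of the Kraus matrix elements. But non-degeneracy excludes only pairs of one-dimensional sectors; it does not forbid two multi-dimensional sectors with identical amplitudes $\alpha_{n_k+j}=\alpha_{n_l+j}$ for all $j$ (cf.\ Lemma~\ref{lem:nopurification}, and the paper explicitly says it cannot rule this out for $\eta\neq0$). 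In that situation the two outcome distributions coincide exactly and your argument yields nothing, yet the theorem still asserts convergence. The decay must therefore come from the \emph{phases} --- the factors $e^{-i\tau\eps N}$ and the complex arguments of $C(N)$ --- which your total-variation heuristic discards; capturing them is precisely what the peripheral-spectrum computation in \cite{BP09} does, and filling your gap would essentially amount to redoing that spectral analysis. (A smaller issue: for the finite-dimensional diagonal blocks, irreducibility of the Kraus family plus aperiodicity of the induced birth--death chain does not by itself give primitivity of $\cL_j$; one must also exclude off-diagonal peripheral eigenoperators, which again is part of the spectral analysis in \cite{BP09}.)
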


\begin{remark} 1. The above theorem is essentially proven in \cite{BP09}. The only difference is that the convergence was stated in the ergodic mean and in the weak topology. In the case of infinitely many resonances the same proof as in \cite{BP09} actually suffices. In the case of a single resonance it can be proven using exactly the same approach as for the non-resonant case in \cite{Bru14}.

2. From a spectral point of view, in all but the degenerate case, $1$ is the only peripheral eigenvalue of $\cL$ and with corresponding eigenstates given by the local Gibbs states $\rho^{(j)}_\inv$ and their convex combinations. In the degenerate case there are additional peripheral eigenvalues. They are exactly the $\e^{i(\tau \eps+\pi\xi)(n-m)}$ where $n,m\in N(\xi,\eta)$, $n\neq m$. If these numbers are different from $1$ then the convergence~(\ref{RelaxFull}) holds but only in the mean ergodic sense.

3. As we shall see below, the degenerate case has also some consequence at the level of quantum trajectories. It provides a situation where purification then does not hold.
\end{remark}

 \medskip
%%%%%%%%%%%%%%%%%%%%%%%%%%%%%%%%%%%%%%%%%%%%%%%%%%%%%%%%%%%%%%%%%%%%
%%%%%%%%%%%%%%%%%%%%%%%%%%%%%%%%%%%%%%%%%%%%%%%%%%%%%%%%%%%%%%%%%%%%

\subsection{Invariant measures}\label{ssec:resonantinvariantmeasure}

In the resonant cases one then obviously loses the uniqueness of an invariant probability measure for the Markov kernel $\Pi$ defined in~(\ref{def:markovkernel}). The probability measures
\[
\nu_\inv^{(j)}:= \sum_{n=0}^\infty \bra n,\rho_\inv^{(j)} n\ket \delta_{|n\ket\bra n|}
\] 
associated to the local Gibbs states $\rho_\inv^{(j)}$, as well as any of their convex combination, are indeed all invariant. One can then expect that our convergence Theorem~\ref{thm:invariantmeasure} can be adjusted in the spirit of the above Theorem~\ref{thm:invstate}. For finite dimensional systems, this situation corresponds to the non irreducible case considered in Appendix B of \cite{BFPP19}. Mimicking the proof in Section~\ref{sec:invariantmeasure} it is not hard to see that, provided a purification result as in Proposition~\ref{prop:Mn-martingale} holds and using Theorem~\ref{thm:invstate} instead of Theorem~\ref{thm:Lmixing}, for any given probability distribution $\nu$ on $\cJ_1$ the sequence $(\nu\Pi^t)_t$ converges in the Wasserstein distance to 
\[
\nu = \sum_{j} \tr(\rho_\nu P_j)\,\nu^{(j)}_\inv,
\]
where we recall that $\rho_\nu=\E_\nu[\rho]$.

\medskip
%%%%%%%%%%%%%%%%%%%%%%%%%%%%%%%%%%%%%%%%%%%%%%%%%%%%%%%%%%%%%%%%%%%%
%%%%%%%%%%%%%%%%%%%%%%%%%%%%%%%%%%%%%%%%%%%%%%%%%%%%%%%%%%%%%%%%%%%%

\subsection{Purification}\label{sec:purificationgeneral}

Throughout  Section~\ref{sec:purification}, and up to~(\ref{eq:beforeusingpurification}), we have used the non-resonant condition only when considering the classical birth and death process in Section~\ref{ssec:classicalchain}, and in Lemma~\ref{lem:purificationassumption} which allows to go from Equation~(\ref{eq:beforeusingpurification}) to Equation~(\ref{eq:afterusingpurification}). At the level of the classical chain, the existence of resonances is related to a lack of irreducibility. Indeed, $n\in R(\xi,\eta)$ if and only if $p(n-1,n)=p(n,n-1)=0$. In any case this classical chain is however still recurrent and the only change is in Lemma~\ref{lem:translationinvariance}. The extinction time $\tau^k=\inf\{t,\Vert W_{t}|k\ket\Vert =0\}$ is now the first time where the trajectory leaves the Rabi sector in which it started, i.e if $|k\ket \in \cH_\cS^{(j)}$ is in the $j$-th Rabi sector then
\[
\tau^k(\omega) = \inf\{t\geq 1, \ k+s_t(\omega) \notin \{n_j,\ldots,n_{j+1}-1\} \}.
\]
We do not have anymore $\{\tau^k=+\infty\}\subset \{\tau^l=+\infty\}$ when $k\leq l$, but the result of Lemma~\ref{lem:translationinvariance} holds on $\{\tau^k=+\infty\}\cap \{\tau^{k+m}=+\infty\}$ and, as a consequence, Proposition~\ref{prop:recurrenceqtraj} remains true with exactly the same proof. No matter if the system is resonant or not, Equation~(\ref{eq:beforeusingpurification}) therefore holds. 

To understand whether we have a purification result the next step is thus to see if, for all $n\neq m$ there exists  $\zeta_1,\ldots,\zeta_s$ such that 
\begin{equation}\label{eq:purificationgeneral}
\|V_{\zeta_s}\cdots V_{\zeta_1}\vert m\ket\|^2\neq \|V_{\zeta_s}\cdots V_{\zeta_1}\vert n\ket\|^2.
\end{equation}
This is exactly the step where we used Lemma~\ref{lem:purificationassumption}. If it holds true then we have Equation~(\ref{eq:afterusingpurification}) and one can copy-paste the end of Section~\ref{sec:purification}. Below we analyze Equation~(\ref{eq:purificationgeneral}) and hence derive necessary and sufficient conditions for purification, see Lemma~\ref{lem:nopurification}.

Before we embark in that endeavour let us point out that Equation~\eqref{eq:purificationgeneral} is actually necessary for purification.
\begin{proposition}
  \label{prop:nopurification}
  Let $\mathcal N$ be a maximal subset of $\nn$ such that for any $n,m\in \mathcal N$, $s\in \nn^*$ and any $\zeta_1,\ldots,\zeta_s\in \mathcal R$,
  \begin{equation}\label{eq:nopur}
    \|V_{\zeta_s}\cdots V_{\zeta_1}\vert m\ket\|^2=\|V_{\zeta_s}\cdots V_{\zeta_1}\vert n\ket\|^2.
  \end{equation}
  Let $Q$ be the orthogonal projector onto the closed subspace spanned by $\{|n\ket: n\in \mathcal N\}$. Then, for any state $\rho\in \cJ_1$,  
  \begin{equation}\label{eq:proba Minfini non pure}
  \P^\rho\left(M_\infty=\frac{Q\rho_{\rm inv}Q}{\tr(\rho_{\rm inv} Q)}\right)=\tr(\rho Q).
  \end{equation}
  Moreover, if $\operatorname{Card}\mathcal N\geq 2$, there exists $\rho\in \mathcal J_1$ such that
  \begin{equation}\label{eq:far from pure states}
  \inf_{t\in \nn^*}\inf_{\phi\in \cH, \|\phi\|=1}\|\rho_t-|\phi\ket\bra\phi|\|_1=1,\quad \pp^\rho\mbox{-a.s.}
  \end{equation}
\end{proposition}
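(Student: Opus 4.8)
The statement naturally splits into three pieces: identifying the limiting martingale $M_\infty$ on the ``non-purifying'' sector $\mathcal N$, computing the probability of reaching it, and exhibiting a concrete initial state that stays far from all pure states. The first step is to revisit the proof of Proposition~\ref{prop:Mn-martingale} and see exactly where the non-resonant condition (through Lemma~\ref{lem:purificationassumption}) was used. As pointed out in Section~\ref{sec:purificationgeneral}, everything up to Equation~\eqref{eq:beforeusingpurification} goes through verbatim in the general case, since the classical birth and death chain is still recurrent (Proposition~\ref{prop:recurrenceqtraj} survives) and $M_t$ is still a bounded $\cJ_1$-valued martingale with the same limit structure. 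So one starts from
\[
\bra n,M_\infty(\omega)n \ket \times \bra m,M_\infty(\omega)m \ket \times \Big( \|V_{\zeta_s}\cdots V_{\zeta_1} \vert m\ket\|^2 - \|V_{\zeta_s}\cdots V_{\zeta_1} \vert n\ket\|^2 \Big) =0
\]
valid $\P^{\rho_\inv}$-a.s. for all $n,m,s,\zeta_1,\dots,\zeta_s$. By the \emph{definition} of $\mathcal N$ as a maximal set on which \eqref{eq:nopur} holds, if $n$ and $m$ belong to different ``blocks'' (equivalently, if they are not both in some such maximal set) then there exists a word $w$ with $\|V_w|n\ket\|^2\neq\|V_w|m\ket\|^2$, forcing $\bra n,M_\infty n\ket\bra m,M_\infty m\ket=0$. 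Hence, $\P^{\rho_\inv}$-a.s., the support of the diagonal of $M_\infty$ (in the Fock basis) is contained in a single block; since $M_\infty$ is a function of $N$ (as a trace-norm limit of the $M_t$, which are functions of $N$ by the analogue of Lemma~\ref{lem:Mndiagonal}) and has unit trace, $M_\infty$ equals the normalized restriction of $\rho_\inv$ to that block. When the block is $\mathcal N$ itself this is exactly $Q\rho_\inv Q/\tr(\rho_\inv Q)$.

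\textbf{Computing the probability.} To get \eqref{eq:proba Minfini non pure} one mimics the end of the proof of Proposition~\ref{prop:Mn-martingale}. The map $\rho\mapsto$ (block containing $\supp$ of $\mathrm{diag}\,M_\infty$) is an $\cO$-measurable random variable; call the event that this block is $\mathcal N$ simply $E$. For a Fock initial state $|k\ket\bra k|$ with $k\in\mathcal N$, the relation $\d\P^{|k\ket\bra k|}=\tfrac{\bra k,M_\infty k\ket}{\bra k,\rho_\inv k\ket}\,\d\P^{\rho_\inv}$ (which still holds, being derived before Lemma~\ref{lem:purificationassumption} is ever invoked; see Equation~\eqref{eq:change measure proof}) shows $\P^{|k\ket\bra k|}$-a.s. $\bra k,M_\infty k\ket\neq 0$, so $M_\infty=Q\rho_\inv Q/\tr(\rho_\inv Q)$ holds $\P^{|k\ket\bra k|}$-a.s. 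For $k\notin\mathcal N$ one gets $M_\infty$ supported in a different block, $\P^{|k\ket\bra k|}$-a.s. Then Lemma~\ref{lem:Prho decomposition}, $\P^\rho=\sum_n\bra n,\rho n\ket\,\P^{|n\ket\bra n|}$, gives $\P^\rho(M_\infty=Q\rho_\inv Q/\tr(\rho_\inv Q))=\sum_{n\in\mathcal N}\bra n,\rho n\ket=\tr(\rho Q)$, which is \eqref{eq:proba Minfini non pure}. (One should double-check that distinct blocks give distinct values of $Q'\rho_\inv Q'/\tr(\rho_\inv Q')$, which is clear since they have disjoint supports.)

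\textbf{The far-from-pure state.} For the last claim, take $\mathcal N\supseteq\{n_1,n_2\}$ with $n_1\neq n_2$, and choose the initial state $\rho=\tfrac12|n_1\ket\bra n_1|+\tfrac12|n_2\ket\bra n_2|$. Because $n_1,n_2\in\mathcal N$, for every word $\zeta_1,\dots,\zeta_t$ one has $\|W_t|n_1\ket\|=\|W_t|n_2\ket\|$; in particular $W_t|n_1\ket=0$ iff $W_t|n_2\ket=0$, and on the complement the two normalized images $U_t|n_i\ket$ are orthogonal Fock vectors (by \eqref{eq:WtonFock} and the block structure of the shift). A direct computation then gives, for $t<\tau^{n_1}=\tau^{n_2}$ (which is $\P^\rho$-a.s. $+\infty$ for both, since $n_1,n_2$ start the chain and we argued recurrence keeps them in-block),
\[
\rho_t=\tfrac{W_t\rho W_t^*}{\tr(W_t\rho W_t^*)}=\tfrac12\,U_t|n_1\ket\bra n_1|U_t^*+\tfrac12\,U_t|n_2\ket\bra n_2|U_t^*,
\]
a rank-two state with both eigenvalues equal to $\tfrac12$. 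For any unit vector $\phi$, $\|\rho_t-|\phi\ket\bra\phi|\|_1$ is computed from the eigenvalues of $\rho_t-|\phi\ket\bra\phi|$, a rank $\leq 3$ self-adjoint operator; one checks that it is at least $1$ (the best case, $\phi$ in the range of $\rho_t$, already gives $\|\tfrac12 P_1+\tfrac12 P_2-|\phi\ket\bra\phi|\|_1\geq 1$ by a two-dimensional computation, and moving $\phi$ outside only increases it). Taking the infimum over $\phi$ and $t$ yields \eqref{eq:far from pure states}.

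\textbf{Main obstacle.} The conceptually delicate point is the first step: making rigorous that $M_\infty$ is, $\P^{\rho_\inv}$-a.s., exactly the normalized restriction of $\rho_\inv$ to a \emph{single} maximal block, rather than merely having diagonal supported there. This requires (i) the ``function of $N$'' property passing to the trace-norm limit, and (ii) a careful argument that maximality of the blocks is compatible with the a.s.\ dichotomy coming from \eqref{eq:beforeusingpurification} --- i.e.\ that the equivalence relation ``$n\sim m$ iff \eqref{eq:nopur} holds for the pair'' is genuinely an equivalence relation (transitivity is the thing to verify) so that ``blocks'' are well defined and $\mathcal N$ is one of its classes. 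The rest is bookkeeping with Lemma~\ref{lem:Prho decomposition} and an elementary trace-norm estimate.
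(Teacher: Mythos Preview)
Your overall strategy matches the paper's closely: both rely on \eqref{eq:beforeusingpurification} (valid in general), the change of measure formula \eqref{eq:change measure proof}, and Lemma~\ref{lem:Prho decomposition}, and both exhibit the same maximally mixed rank-two initial state for \eqref{eq:far from pure states}. However, there is one genuine gap in your argument, and one place where the paper's route is more economical.

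\textbf{The gap.} Your sentence ``since $M_\infty$ is a function of $N$ \dots and has unit trace, $M_\infty$ equals the normalized restriction of $\rho_\inv$ to that block'' does not follow. A unit-trace function of $N$ supported in a block $B$ is just an arbitrary probability measure on $B$; nothing forces it to be proportional to $\rho_\inv$ there. What you are missing is a \emph{second} use of \eqref{eq:nopur}, this time applied directly to $M_t$: since $\bra n,M_t n\ket=\bra n,\rho_\inv n\ket\cdot\|W_t|n\ket\|^2/\tr(W_t^*W_t\rho_\inv)$, the hypothesis \eqref{eq:nopur} gives $\bra n,M_t n\ket/\bra n,\rho_\inv n\ket=\bra m,M_t m\ket/\bra m,\rho_\inv m\ket$ for all $n,m\in\mathcal N$ and all $t$, hence the same for $M_\infty$. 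This is precisely how the paper opens its proof, obtaining $QM_tQ=\tfrac{\bra n,M_t n\ket}{\bra n,\rho_\inv n\ket}\,Q\rho_\inv Q$ in one line and passing to the limit. Once you add this, your argument is complete.

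\textbf{The detour.} Your framing in terms of a full partition of $\nn$ into equivalence classes is correct (transitivity is trivial, contrary to your worry: the relation is ``equality of all word-probabilities'', which is obviously transitive), but it is more than is needed. The paper never speaks of other blocks: it only shows (i) $QM_\infty Q\propto Q\rho_\inv Q$ from \eqref{eq:nopur}, and (ii) for $n\in\mathcal N$ and $m\notin\mathcal N$, maximality gives some word separating them, so $\bra n,M_\infty n\ket\bra m,M_\infty m\ket=0$. Then the case split on the Fock initial state $|k\ket$ is immediate: if $k\notin\mathcal N$, $\bra k,M_\infty k\ket\neq 0$ $\P^{|k\ket\bra k|}$-a.s.\ already contradicts $M_\infty=Q\rho_\inv Q/\tr(\rho_\inv Q)$ (since then $\bra k,M_\infty k\ket=0$); if $k\in\mathcal N$, the same nonvanishing together with (ii) kills all $\bra m,M_\infty m\ket$ for $m\notin\mathcal N$. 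No statement about where $M_\infty$ lives in the complementary case is required, and the parenthetical check that distinct blocks give distinct limits is unnecessary.
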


\begin{remark}
  Purification holds if and only if for any maximal $\mathcal N$, we have $\operatorname{Card}\mathcal N=1$.
\end{remark}

\begin{proof} Fix $\rho\in\cJ_1$.
Equation~\eqref{eq:nopur} and the fact that $M_t$ is a function of $N$ imply that for any $t\in \nn^*$, $QM_tQ=\frac{\bra n, M_t n\ket}{\bra  n, \rho_{\rm inv} n\ket} Q\rho_{\rm inv}Q$ where $n$ is an arbitrary element of $\mathcal N$. Hence, using  Corollary~\ref{cor:Mn-convergence} we get
$$QM_\infty Q=\frac{\bra n, M_\infty n\ket}{\bra n, \rho_{\rm inv} n\ket} Q\rho_{\rm inv}Q, \quad \pp^\rho\mbox{-a.s.}$$
Since $M_\infty$ is also a function of the number operator $N$ we have $QM_\infty=M_\infty Q$ so that $M_\infty\propto Q\rho_{\rm inv}Q$ if and only if $\bra n, M_\infty n\ket= 0$ for any $n\notin\mathcal N$. Because $M_\infty$ has trace one, this automatically implies that both statements are also equivalent to $M_\infty=\frac{Q\rho_{\rm inv}Q}{\tr(Q\rho_{\rm inv})}$. Note in particular that $\bra n, M_\infty n\ket = \bra n,\rho_\inv n\ket \tr(\rho_\inv Q) \neq 0$ for any $n\in \mathcal N$.

To prove Equation~\eqref{eq:proba Minfini non pure} we calculate $\P^{|n\ket\bra n|}\left(M_\infty=\frac{Q\rho_{\rm inv}Q}{\tr(\rho_{\rm inv} Q)}\right)$ for all $n$ and then use Lemma~\ref{lem:Prho decomposition}. The key remark is that $\pp^{|n\ket\bra n |}\left(\bra n, M_\infty n\ket\neq 0\right)=1$ for all $n\in\N$. It follows directly from the change of measure formula $\d \pp^{|n\ket\bra n |}=\frac{\bra n, M_\infty n\ket}{\bra n, \rho_{\rm inv} n\ket}\d \pp^{\rho_{\rm inv}}$ which is the generalization to the non-resonant case of the one given in Proposition~\ref{prop:Mn-martingale}, see Equation~\eqref{eq:change measure proof}.

Suppose first $n\notin \mathcal N$. From the above equivalence, $\pp^{|n\ket\bra n |}\left(\bra n, M_\infty n\ket\neq 0\right)=1$ implies 
\begin{equation}\label{eq:proba non pure 1}
 \pp^{|n\ket\bra n|}\left(M_\infty= \frac{Q\rho_{\rm inv}Q}{\tr(\rho_{\rm inv} Q)}\right)=0.
\end{equation}
  
Consider now $n\in \mathcal N$. Following the proof of Proposition~\ref{prop:Mn-martingale}, for any $m\notin\mathcal N$ there exists $s\in \nn^*$ and $\zeta_1,\dotsc,\zeta_s$ such that Equation~\eqref{eq:purificationgeneral} holds (recall $\mathcal N$ is maximal). Hence, for any $m\notin\mathcal N$, $\bra n, M_\infty n\ket\bra m, M_\infty m\ket=0$ $\pp^\rho$-almost surely for any state $\rho$. In particular, using again $\pp^{|n\ket\bra n |}\left(\bra n, M_\infty n\ket\neq 0\right)=1$, we also have $\pp^{|n\ket\bra n |}\left(\bra m, M_\infty m\ket =  0\right)=1$ for all $m\notin\mathcal N$. Therefore, from the above equivalence,  
 \begin{equation}\label{eq:proba non pure 2}
  \pp^{|n\ket\bra n|}\left(M_\infty=\frac{Q\rho_{\rm inv}Q}{\tr(\rho_{\rm inv} Q)}\right)=1.
 \end{equation}

Finally it follows from Equations~\eqref{eq:proba non pure 1}-\eqref{eq:proba non pure 2} and Lemma~\ref{lem:Prho decomposition} that
 $$\pp^{\rho}\left(M_\infty= \frac{Q\rho_{\rm inv}Q}{\tr(\rho_{\rm inv} Q)}\right)=\sum_{n\in \mathcal N}\bra n, \rho n\ket=\tr(\rho Q).$$

\medskip
It remains to prove Equation~\eqref{eq:far from pure states}. Let $m,n\in \mathcal N$ be distinct and chose $\rho=\frac12(|m\ket\bra m|+|n\ket\bra n|)$. Then Equations~\eqref{eq:krausonfock} and Equation~\eqref{eq:nopur} imply 
$$\rho_t(\omega)=\frac12(|m+s_t(\omega)\ket\bra m+s_t(\omega)|+|n+s_t(\omega)\ket\bra n+s_t(\omega)|)$$
for all $t$ and $\pp^\rho$-almost every $\omega$, i.e. $2\rho_t(\omega)$ is a rank $2$ orthogonal projection. A simple computation then shows that, for any such projection $\Pi$ and any unit vector $\phi$, one has $\ds \left\| \frac12\Pi - |\phi\ket\bra\phi| \right\|_1 \geq 1$ with equality if and only if $\phi\in {\rm Range}(\Pi)$. Hence for all $t$ we have 
\[
\inf_{\phi\in \cH, \|\phi\|=1}\|\rho_t-|\phi\ket\bra\phi|\|_1=1,\quad \pp^\rho\mbox{-a.s.}
\]
\end{proof}

\bigskip
We now turn to the analysis of Equation~(\ref{eq:purificationgeneral}). If there is a single resonance then it holds for all $n\neq m$. Indeed, let $n_1$ denote the unique resonance, then
\begin{enumerate}
\item if $n_1\leq m<n$, the result of Lemma~\ref{lem:purificationassumption} holds for those $n$ and $m$, with the same argument choosing all the $\zeta_j$'s equal to $(-,+)$ and $s=m-n_1$.
\item if $n<n_1\leq m$, choosing all the $\zeta_j$'s equal to $(+,-)$ with $s=n_1-n$ we have $\|V_{\zeta_s}\cdots V_{\zeta_1}\vert n\ket\|^2=0$ while $\|V_{\zeta_s}\cdots V_{\zeta_1}\vert m\ket\|^2\neq 0$,
\item if $n<m<n_1$, choosing again all the $\zeta_j$'s equal to $(+,-)$ but with $s=n_1-m$ we have $\|V_{\zeta_s}\cdots V_{\zeta_1}\vert m\ket\|^2=0$ while $\|V_{\zeta_s}\cdots V_{\zeta_1}\vert n\ket\|^2\neq 0$.
\end{enumerate}

\medskip

Suppose now there are infinitely many resonances. If $|n\ket,|m\ket$ belong to the same Rabi sector then the same argument as in case (3) above applies. If they are in two different sectors, let $n_k,n_l\in R(\xi,\eta)$ such that $n_k\leq n<n_{k+1}$ and $n_l\leq m<n_{l+1}$. Choosing first all $\zeta_j$'s equal to $(+,-)$ and $0\leq s\leq \min\{n_{k+1}-n,n_{l+1}-m\}$ we get that Equation~(\ref{eq:purificationgeneral}) holds unless
\[
n_{k+1}-n=n_{l+1}-m \quad \mbox{and} \quad \alpha_{n+j}=\alpha_{m+j}, \ \forall j\in\{0,\ldots, n_{k+1}-n\}, 
\]
and where $\alpha_n$ is the one from Section~\ref{ssec:classicalchain}. On the other hand, choosing all $\zeta_j$'s equal to $(-,+)$ and $0\leq s\leq \min\{n-n_j-1,m-n_k-1\}$ we get that Equation~(\ref{eq:purificationgeneral}) holds unless 
\[
n-n_{k}=m-n_{l} \quad \mbox{and} \quad \alpha_{n-j}=\alpha_{m-j}, \ \forall j\in\{0,\ldots, n-n_{k}\}. 
\]
Combining the two we therefore have the following
\begin{lemma}\label{lem:nopurification} There exists $n\neq m$ such that for any $\zeta_1,\ldots,\zeta_s$ Equation~(\ref{eq:purificationgeneral}) does not hold if and only if there exists $n_k\neq n_l\in \{0\}\cup R(\xi,\eta)$ such that $n_{k+1}-n_k=n_{l+1}-n_l$ and, for all $j\in\{0,\ldots,n_{k+1}-n_k\}$, one has $\alpha_{n_k+j}=\alpha_{n_l+j}$, i.e.
\begin{equation}\label{eq:identicalsectors}
\sin^2(\pi\sqrt{\xi (n_k+j)+\eta})\frac{\xi (n_k+j)}{\xi (n_k+j)+\eta} = \sin^2(\pi\sqrt{\xi (n_l+j)+\eta})\frac{\xi (n_l+j)}{\xi (n_l+j)+\eta}.
\end{equation}
In other words if there exists two Rabi sectors which have the same dimension and the ``same transition probabilities''. In this case~(\ref{eq:purificationgeneral}) fails if and only if $(n,m)\in I_k\times I_l$ is such that $n-n_k=m-n_l$. 
\end{lemma}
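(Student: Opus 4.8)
\noindent\emph{Proof strategy.} The plan is to turn the discussion preceding the statement into a clean argument resting on a single explicit formula. Writing $\zeta_u=(\sigma_u,\sigma_u')$ and, for $n\in\N$, setting $k_0=n$ and $k_u=k_{u-1}+\delta(\zeta_u)$ (the partial shifts of Definition~\ref{def:shift}), Equation~\eqref{eq:krausonfock} shows that $V_{\zeta_s}\cdots V_{\zeta_1}|n\ket$ is a scalar multiple of $|k_s\ket$ and that
\[
\|V_{\zeta_s}\cdots V_{\zeta_1}|n\ket\|^2=\prod_{u=1}^{s} f_{\zeta_u}(k_{u-1}),
\]
with $f_{--}(k)=p_\at(-)(1-\alpha_k)$, $f_{-+}(k)=p_\at(-)\alpha_k$, $f_{+-}(k)=p_\at(+)\alpha_{k+1}$ and $f_{++}(k)=p_\at(+)(1-\alpha_{k+1})$. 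Since $\alpha_j=0$ exactly for $j\in\{0\}\cup R(\xi,\eta)$ (from $\xi j+\eta$ being a perfect square, together with $\alpha_0=0$) while $\alpha_j>0$ for every other $j\ge1$, this product vanishes precisely when the path $(k_u)_u$ leaves the Rabi sector $I_k$ containing $n$, and as long as it stays inside $I_k$ it involves only the numbers $\alpha_{n_k},\dots,\alpha_{n_{k+1}}$. This is the only structural input; everything below is manipulation of this formula on well-chosen test words.

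For the ``if'' direction I would take sectors $I_k,I_l$ with $n_{k+1}-n_k=n_{l+1}-n_l=:d$ and $\alpha_{n_k+j}=\alpha_{n_l+j}$ for all $0\le j\le d$, fix $0\le r<d$, and set $n=n_k+r$, $m=n_l+r$, so that $n-n_k=m-n_l$. Because $\delta(\sigma,\sigma')$ does not depend on the current state, the path issued from $m$ is the path issued from $n$ translated by $n_l-n_k$; hence it remains in $I_l$ for exactly as long as the $n$-path remains in $I_k$, and on that common range $f_{\zeta_u}(k_{u-1}+n_l-n_k)=f_{\zeta_u}(k_{u-1})$ by the hypothesis on the $\alpha$'s (the boundary terms being handled by $\alpha_{n_k}=\alpha_{n_{k+1}}=\alpha_{n_l}=\alpha_{n_{l+1}}=0$). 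Therefore $\|V_{\zeta_s}\cdots V_{\zeta_1}|n\ket\|^2=\|V_{\zeta_s}\cdots V_{\zeta_1}|m\ket\|^2$ for \emph{every} word, i.e.\ Equation~\eqref{eq:purificationgeneral} never holds for this pair; and the same computation applies to any $(n,m)\in I_k\times I_l$ with $n-n_k=m-n_l$, which is the ``if'' half of the final sentence.

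For the ``only if'' direction, assume $n<m$ are such that Equation~\eqref{eq:purificationgeneral} never holds, and let $I_k\ni n$, $I_l\ni m$ (so $k\le l$). If $k=l$, the word $(-,+)^{\,n-n_k+1}$ annihilates $|n\ket$ (its path reaches $n_k$, where $\alpha$ vanishes) while the $m$-path only visits positions $\ge n_k+1$ (because $m>n$), so $|m\ket$ is not annihilated — contradiction; thus $k<l$. Testing the one-letter word $(-,-)$ gives $\alpha_n=\alpha_m$. Testing $(+,-)^s$, $s\ge1$, whose squared norms are $p_\at(+)^s\prod_{u=1}^s\alpha_{n+u}$ and $p_\at(+)^s\prod_{u=1}^s\alpha_{m+u}$ and which first vanish at $s=n_{k+1}-n$ resp.\ $s=n_{l+1}-m$, forces $n_{k+1}-n=n_{l+1}-m$ together with $\alpha_{n+j}=\alpha_{m+j}$ for $0\le j\le n_{k+1}-n$; symmetrically, the words $(-,+)^s$ force $n-n_k=m-n_l$ and $\alpha_{n-j}=\alpha_{m-j}$ for $0\le j\le n-n_k$. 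Adding the two length relations gives $n_{k+1}-n_k=n_{l+1}-n_l=:d$, so in particular $n_k\ne n_l\in\{0\}\cup R(\xi,\eta)$; and the two families of $\alpha$-identities (which overlap at the index $n$, equivalently $m$) assemble into $\alpha_{n_k+j}=\alpha_{n_l+j}$ for all $0\le j\le d$, i.e.\ Equation~\eqref{eq:identicalsectors}. Since the analysis also imposed $n-n_k=m-n_l$, this yields the ``only if'' half of the final sentence.

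The one point requiring care is the bookkeeping when one of the two sectors involved is the (necessarily unique) infinite one: in that case the upward word $(+,-)^s$ issued from a vector of that sector has strictly positive norm for every $s$, so comparing with the other sector at once produces a distinguishing word, and one has to record this in order to see that two Rabi sectors with equal dimension and equal transition profile as in the statement can only coexist when there are infinitely many resonances. Apart from this edge case the proof is exactly the elementary product-of-transition-rates computation above, specialised to the test words $(-,-)$, $(+,-)^s$ and $(-,+)^s$.
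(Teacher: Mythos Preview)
Your argument is correct and follows the same route as the paper: the product formula for $\|V_{\zeta_s}\cdots V_{\zeta_1}|n\ket\|^2$ together with the test words $(+,-)^s$ and $(-,+)^s$ is exactly what the discussion preceding the lemma uses, and your same-sector contradiction via $(-,+)^{n-n_k+1}$ is the mirror image of the paper's case~(3) argument with $(+,-)^s$. Your write-up is in fact more complete than the paper's, since you spell out the ``if'' direction (that matching sectors really do make the products coincide for every word) and handle the infinite-sector edge case, both of which the paper leaves implicit; the extra one-letter test $(-,-)$ is redundant---the $s=1$ case of $(-,+)^s$ already yields $\alpha_n=\alpha_m$---but harmless.
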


Proposition~\ref{prop:nopurification} together with Lemma~\ref{lem:nopurification} give thus necessary and sufficient conditions for a failure of purification. Degenerate systems provide examples where this situation occurs. Indeed, if $n_k,n_l\in N(\xi,\eta)$ then $n_{k+1}-n_k=n_{l+1}-n_l=1$ and for $j\in\{0,1\}$ one has $\alpha_{n_k+j}=\alpha_{n_l+j}=0$. In general, we do not know whether it is possible to find two Rabi sectors of dimension greater than $1$ and such that Equality~(\ref{eq:identicalsectors}) occurs, except in the particular case $\eta=0$ (perfectly tuned cavity).

\medskip
%%%%%%%%%%%%%%%%%%%%%%%%%%%%%%%%%%%%%%%%%%%%%%%%%%%%%%%%%%%%%%%%%%%%
%%%%%%%%%%%%%%%%%%%%%%%%%%%%%%%%%%%%%%%%%%%%%%%%%%%%%%%%%%%%%%%%%%%%

\subsection{Tuned cavity}\label{ssec:tuned}

In this section we suppose that $\eta=0$. This corresponds to the situation where the frequency $\eps_0$ of the cavity is perfectly tuned to the Bohr frequency $\eps$ of the atoms. Using the same argument as the one in \cite{BP09} to study the peripheral spectrum of $\cL$, we prove that the situation of Lemma~\ref{lem:nopurification} can not happen. As a consequence, in a perfectly tuned cavity purification always holds.

As noticed in \cite{BP09} the only possible consecutive resonances are $0$ and $1$. Indeed if $n\geq 1$ is such that $n$ and $n+1$ are resonances there exists $p,q\in \N^*$ such that $\xi n=p^2$ and $\xi(n+1)=q^2$. As a consequence 
\[
\sqrt{\frac{n}{n+1}}=\frac{p}{q}
\]
which contradicts the irrationality of the left-hand side. Hence there does not exist two Rabi sectors of dimension $1$. 

Suppose now there exists $n_k< n_l$ such that $n_{k+1}-n_k=n_{l+1}-n_l=d\geq 2$ and that Equation~(\ref{eq:identicalsectors}) holds for $j=0,\ldots,d$. Taking $j=1$ we have
\[
\sin^2(\pi\sqrt{\xi (n_k+1)}) = \sin^2(\pi\sqrt{\xi (n_l+1)})
\]
hence 
\[
\sqrt{\xi (n_l+1)} \pm\sqrt{\xi (n_k+1)} = p
\]
for some integer $p>0$. Using that $n_l$ is a resonance there exists $q\in\N^*$ such that $\xi n_l=q^2$. Dividing the above identity by $q$ we obtain
\[
\pm \sqrt{\frac{n_k+1}{n_l}} = \frac{p}{q} - \sqrt{\frac{n_l+1}{n_l}}.
\]
Squaring both sides leads to 
\[
\frac{n_k+1}{n_l} = \frac{p^2}{q^2} + \frac{n_l+1}{n_l} -2\frac{p}{q} \sqrt{\frac{n_l+1}{n_l}},
\] 
which contradicts the irrationality of $\ds \sqrt{\frac{n_l+1}{n_l}}$.

\medskip
%%%%%%%%%%%%%%%%%%%%%%%%%%%%%%%%%%%%%%%%%%%%%%%%%%%%%%%%%%%%%%%%%%%%
%%%%%%%%%%%%%%%%%%%%%%%%%%%%%%%%%%%%%%%%%%%%%%%%%%%%%%%%%%%%%%%%%%%%

\subsection{The degenerate case}\label{ssec:degenerate} 

We suppose throughout this section that the system is degenerate, i.e. the set $N(\xi,\eta)$ has at least two elements. Then $N(\xi,\eta)$ is actually a maximal set $\mathcal N$ in the sense of Proposition~\ref{prop:nopurification}. As in this proposition, we denote by $Q$ the orthogonal projection onto the closed linear span of the one dimensional Rabi sectors, i.e.
\begin{equation}\label{def:rabiprojection}
Q := \sum_{n\in N(\xi,\eta)} |n\ket\bra n|.
\end{equation}
The degeneracy assumption amounts to ${\rm Rank}(Q)\geq 2$.

Let $O\in \cO$ be defined by
\begin{equation}\label{def:dontpurifyevent}
O=\left\{\omega\in \Omega: M_\infty(\omega)=\frac{Q\rho_{\rm inv} Q}{\tr(\rho_{\rm inv} Q)}\right\}.
\end{equation}
Proposition~\ref{prop:nopurification} ensure $\P^\rho(O)=\tr(\rho Q)$ for any state $\rho$.

Using exactly the same reasoning as in Section~\ref{ssec:purificationproof}, we omit the details which are left to the reader, we can further deduce that for any state $\rho$ and $\P^\rho$-almost every $\omega\in O$ one has
\[
\lim_{t\to\infty} \left\|\rho_t - U_t \frac{Q \rho Q}{\tr(\rho Q)} U_t^* \right\|_1 = 0,
\]
where we recall that $U_t$ denotes the partial isometry in the polar decomposition of $W_t$. At this point we can actually further use Equation~(\ref{eq:krausoperators}) to make more explicit the action of $U_t$ on $Q\rho Q$. 

For any $\rho$ the quantity $Q\rho Q$ is a linear combination of $|n\ket\bra m|$ with $n,m\in N(\xi,\eta)$. It thus suffices to consider $U_t |n\ket\bra m| U_t^*$ for such $n$ and $m$. Let $n,m\in N(\xi,\eta)$. There exists integers $k,\ell,p,q$ so that 
\[
\sqrt{\xi n+\eta}=k, \ \sqrt{\xi (n+1)+\eta}=\ell, \ \sqrt{\xi m+\eta}=p \ \mbox{ and } \  \sqrt{\xi (m+1)+\eta}=q.
\]
For any $\omega\in O$, and with $N_{t,\omega}(y)$ the number of occurrences of $y\in\cR$ in $\omega\in \Omega$ up to time $t$, Equation~(\ref{eq:krausoperators}) then gives
\[
U_t(\omega)|n\ket\bra m| U_t^*(\omega) = \e^{-i\tau\eps (n-m)t} \left(-1\right)^{(k+p)N_{t,\omega}(--)}\left(-1\right)^{(\ell+q) N_{t,\omega}(++)}|n\ket\bra m|.
\]
From Equation~\eqref{eq:proba non pure 1} we infer that $\pp^{|k\ket\bra k|}(O)=0$ for any $k\not\in N(\xi,\eta)$. On the other, for $n\in N(\xi,\eta)$, it follows from $\alpha_{n}=\alpha_{n+1}=0$ that $\P^{|n\ket\bra n|}(N_{t,\omega}(\pm\mp))=0$. Using Lemma~\ref{lem:Prho decomposition} we derive that, for $\pp^\rho$-almost every $\omega\in O$, $N_{t,\omega}(\pm\mp)=0$ hence $N_{t,\omega}(--)+N_{t,\omega}(++)=t$.

But $p+k$ has the parity $p^2-k^2$ hence of $\xi(m-n)$, and similarly for $\ell+q$, so we finally have
\[
U_t(\omega)|n\ket\bra m| U_t(\omega)^*  = \e^{-i(\tau\eps+\xi\pi) (n-m)t} |n\ket\bra m| = \e^{-i(\eps\tau+\pi\xi)Nt} |n\ket\bra m| \e^{i(\eps\tau+\pi\xi)Nt}.
\]
We can summarize the results of the degenerate case in the following. 
\begin{proposition} Suppose the system is degenerate and let $Q$ and $O$ be defined by~(\ref{def:rabiprojection})-(\ref{def:dontpurifyevent}). For any initial state $\rho$, we have $\P^\rho(O)=\tr(\rho Q)$ and the quantum trajectory starting from $\rho$ satisfies 
\[
\lim_{t\to \infty} \left\|\rho_t - \e^{-i(\eps\tau+\pi\xi)Nt} \frac{Q \rho Q}{\tr(\rho Q)} \e^{i(\eps\tau+\pi\xi)Nt} \right\|_1=0,
\]
for $\P^\rho$-almost every $\omega\in O$.
\end{proposition}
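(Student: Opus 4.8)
The statement synthesises the facts established in the discussion that precedes it, so the plan is to assemble those into a single clean argument. The identity $\pp^\rho(O)=\tr(\rho Q)$ is the conclusion of Proposition~\ref{prop:nopurification}, applied with the maximal set $\mathcal N=N(\xi,\eta)$ and the projector $Q$ of~\eqref{def:rabiprojection}. The only thing to justify is that $N(\xi,\eta)$ is maximal in the sense of that proposition: for $n\in N(\xi,\eta)$ one has $\alpha_n=\alpha_{n+1}=0$, so by~\eqref{eq:krausonfock} the only Kraus operators acting non-trivially on $|n\ket$ are $V_{--}$ and $V_{++}$, which act as fixed scalars times the identity; hence $\|V_{\zeta_s}\cdots V_{\zeta_1}|n\ket\|^2$ is the same for every $n\in N(\xi,\eta)$, which is Equation~\eqref{eq:nopur}, whereas for $m\notin N(\xi,\eta)$ the case analysis of Section~\ref{sec:purificationgeneral} furnishes a word separating $m$ from any $n\in N(\xi,\eta)$. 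Degeneracy gives $\operatorname{Card}N(\xi,\eta)\geq 2$, and Proposition~\ref{prop:nopurification} then applies verbatim.

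Next I would establish that, for any state $\rho$ and $\pp^\rho$-almost every $\omega\in O$,
\[
\lim_{t\to\infty}\Big\|\rho_t-U_t\,\tfrac{Q\rho Q}{\tr(\rho Q)}\,U_t^*\Big\|_1=0 ,
\]
by rerunning the proof of Theorem~\ref{thm:purification} from Section~\ref{ssec:purificationproof} with the (rank $\geq 2$) operator $M_\infty=\tfrac{Q\rho_\inv Q}{\tr(\rho_\inv Q)}$ in place of the rank-one projection $|n_\infty\ket\bra n_\infty|$. On $O$ one still has $S_t^{1/2}\rho S_t^{1/2}=M_t^{1/2}AM_t^{1/2}$ when $\rho=\rho_\inv^{1/2}A\rho_\inv^{1/2}$; since $M_t\to M_\infty$ in trace norm, continuity of $X\mapsto|X|^{1/2}$ from $\cJ_1$ to $\cJ_2$ and \cite[Theorem~2.20]{Si05} give $M_t^{1/2}AM_t^{1/2}\to M_\infty^{1/2}AM_\infty^{1/2}=\tfrac{1}{\tr(\rho_\inv Q)}Q\rho Q$ in trace norm, and normalising and extending via the approximation argument at the end of Section~\ref{ssec:purificationproof} — the analogue of Lemma~\ref{lem:Sn-weakcvg} holds because $S_\infty=Q/\tr(\rho_\inv Q)$ is bounded, and $Q$ is finite rank so $Q\rho Q$ is a finite sum of $|n\ket\bra m|$'s — gives the display, $U_t$ being a partial isometry.

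It then remains to identify, on $O$, $U_t(\omega)\,\tfrac{Q\rho Q}{\tr(\rho Q)}\,U_t(\omega)^*$ with $\e^{-i(\eps\tau+\pi\xi)Nt}\,\tfrac{Q\rho Q}{\tr(\rho Q)}\,\e^{i(\eps\tau+\pi\xi)Nt}$, which is precisely the computation carried out just above the statement. The ingredients are: by Lemma~\ref{lem:Prho decomposition} together with $\pp^{|k\ket\bra k|}(O)=0$ for $k\notin N(\xi,\eta)$ and $\alpha_k=\alpha_{k+1}=0$ for $k\in N(\xi,\eta)$, the outcomes $(\pm,\mp)$ do not occur $\pp^\rho$-a.s. on $O$, so $W_t(\omega)$ acts on each $|n\ket$ with $n\in N(\xi,\eta)$ as a scalar read off from~\eqref{eq:krausoperators}; using that $C(n)$ is a sign at a resonance, up to a sign the phase of $U_t(\omega)|n\ket$ is $\e^{-i\tau\eps nt}$, and the parity identity $\sqrt{\xi n+\eta}+\sqrt{\xi m+\eta}\equiv\xi(n-m)\pmod 2$ turns $U_t(\omega)|n\ket\bra m|U_t(\omega)^*$ into $\e^{-i(\eps\tau+\pi\xi)(n-m)t}|n\ket\bra m|$ on $O$; summing over $n,m\in N(\xi,\eta)$ by linearity finishes the proof.

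The routine parts are the first step and the sign bookkeeping of the last; the substantive point is the second step, namely verifying that the purification apparatus of Section~\ref{ssec:purificationproof} still runs when $M_\infty$ is a higher-rank operator rather than a rank-one projection. The only place this could conceivably break is the analogue of Lemma~\ref{lem:Sn-weakcvg}, and it does not, precisely because degeneracy forces $N(\xi,\eta)$, and hence $Q$, to be finite.
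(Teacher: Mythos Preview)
Your proposal is correct and follows essentially the same three-step structure as the paper: invoke Proposition~\ref{prop:nopurification} for $\pp^\rho(O)=\tr(\rho Q)$, rerun the purification machinery of Section~\ref{ssec:purificationproof} with $M_\infty=\tfrac{Q\rho_\inv Q}{\tr(\rho_\inv Q)}$ to obtain $\rho_t\to U_t\tfrac{Q\rho Q}{\tr(\rho Q)}U_t^*$ on $O$, and then compute $U_t|n\ket\bra m|U_t^*$ explicitly via the parity argument. You supply more detail than the paper does, in particular on why $N(\xi,\eta)$ is maximal and on how the approximation argument of Section~\ref{ssec:purificationproof} survives when $M_\infty$ has higher rank; the paper simply writes ``using exactly the same reasoning as in Section~\ref{ssec:purificationproof}, we omit the details''. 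One minor remark: the finiteness of $N(\xi,\eta)$ is not actually what rescues the analogue of Lemma~\ref{lem:Sn-weakcvg}---that lemma's proof (martingale convergence of $\tr(S_t\rho)$, then uniform boundedness) is insensitive to the rank of $M_\infty$---though finiteness does make the last linearity step immediate.
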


%%%%%%%%%%%%%%%%%%%%%%%%%%%%%%%%%%%%%%%%%%%%%%%%%
%%%%%%%%%%%%%%%%%%%%%%%%%%%%%%%%%%%%%%%%%%%%%%%%%
%%%%%%%%%%%%%%%%%%%%%%%%%%%%%%%%%%%%%%%%%%%%%%%%%
%%%%%%%%%%%%%%%%%%%%%%%%%%%%%%%%%%%%%%%%%%%%%%%%%

\end{document}